\documentclass[11pt]{article}
\usepackage[margin=1in]{geometry}
\usepackage{fullpage}
\usepackage{tgtermes}
\usepackage[T1]{fontenc}
\usepackage[colorlinks,citecolor=blue,linkcolor=blue,urlcolor=black]{hyperref}
\usepackage{graphicx}
\usepackage{amsfonts,amsmath,amsthm,amssymb,dsfont}
\usepackage{xfrac,nicefrac}
\usepackage{mathdots}
\usepackage{bm,bbm}
\usepackage{url}
\usepackage{paralist}
\usepackage{enumerate}
\usepackage[normalem]{ulem}
\usepackage{xspace}
\xspaceaddexceptions{]\}}
\usepackage{cleveref}
\usepackage{comment}
\usepackage{tabu}
\usepackage{framed}
\usepackage{float,wrapfig}
\usepackage{subfig}
\usepackage{tikz}
\usepackage[usenames,dvipsnames]{pstricks} 

\usepackage{enumitem}
\usepackage{adjustbox}
\usepackage{algorithm}
\usepackage{algpseudocode}

\algnewcommand{\algorithmicand}{\textbf{AND }}
\algnewcommand{\algorithmicor}{\textbf{OR }}
\algnewcommand{\algorithmicxor}{\textbf{XOR }}
\algnewcommand{\algorithmicnot}{\textbf{NOT }}
\algnewcommand{\OR}{\algorithmicor}
\algnewcommand{\AND}{\algorithmicand}
\algnewcommand{\XOR}{\algorithmicxor}
\algnewcommand{\NOT}{\algorithmicnot}
\algnewcommand{\var}{\texttt}

\usetikzlibrary{hobby}
\usetikzlibrary{decorations.markings}
\usetikzlibrary{quotes,angles}

\theoremstyle{plain}

\newtheorem{theorem}{Theorem}[section]
\newtheorem{lemma}[theorem]{Lemma}

\newtheorem{cor}[theorem]{Corollary}

\theoremstyle{definition}
\newtheorem{definition}[theorem]{Definition}


\renewcommand{\epsilon}{\varepsilon}


\title{Shortest non-separating st-path on chordal graphs}

\author{
  Xiao Mao \\
  Massachusetts Institute of Technology \\
  \texttt{xiao\_mao@mit.edu} \\
}

\begin{document}

\setcounter{page}{0} \clearpage
\maketitle
\thispagestyle{empty}

\begin{abstract}
    Many NP-Hard problems on general graphs, such as maximum independence set, maximal cliques and graph coloring can be solved efficiently on chordal graphs. In this paper, we explore the problem of non-separating st-paths defined on edges: for a connected undirected graph and two vertices, a non-separating path is a path between the two vertices such that if we remove all the edges on the path, the graph remains connected. We show that on general graphs, checking the existence of non-separating st-paths is NP-Hard, but the same problem can be solved in linear time on chordal graphs. In the case that such path exists, we introduce an algorithm that finds the shortest non-separating st-path on a connected chordal graph of $n$ vertices and $m$ edges with positive edge lengths that runs in $O(n\log{n} + m)$ time.
\end{abstract}

\newpage


\tikzset{->-/.style={decoration={
  markings,
  mark=at position .5 with {\arrow{>}}},postaction={decorate}}}

\section{Introduction}

\subsection{Overview}

\subsubsection{Non-separating defined on edges vs. vertices}

In this paper, we study non-separating paths defined in the following way:

\begin{definition} [Non-separating path (edge)]
    Let $G = \langle V, E\rangle$ be an undirected graph. A non-separating path is a path $p$ in $G$ such that after removing all edges on $p$. The remaining graph remains connected.
\end{definition}

Traditionally, there have been studies on non-separating paths defined with vertex removal instead of edge removal. 

\begin{definition} [Non-separating path (vertex)]
    Let $G = \langle V, E\rangle$ be an undirected graph. A non-separating path is a path $p$ in $G$ such that after removing all \textit{vertices} on $p$. The remaining graph remains connected.
\end{definition}

For this definition, many studies on their relation to graph connectivity have been done \cite{bollobas, chen, tutte, 4connected, kawarabayashi} since it first gained major academic interest in 1975, when Lovász made a famous conjecture relating path removal to k-connectivity \cite{lovasz}. The first major work on related optimization problem was published in 2009 by B. Y. Wu and S. C. Chen, who showed that the optimization problem on general graphs is NP-hard in the strong sense \cite{nphard}. Later in 2014, Wu showed an efficient $O(N \log {N})$ time algorithm for the optimization problem on grid graphs \cite{wu}. 

In this paper we study the version defined on edges. We are only interested in the optimization problem. Like Wu's work in 2009 and 2014 which proved NP-hardness in general and showed an efficient algorithm on a special type of graphs, we will obtain an efficient algorithm for both the decision and the optimization problem on connected \textit{chordal graphs} with positive edge lengths, and show that the decision problem with non-separating paths defined with edge removal is NP-hard on general graphs. For the rest of the paper, the term ``non-separating path'' refers to the edge version.

\subsubsection{Chordal Graphs}

\begin{definition} [Chordal Graph]
A chordal graph is a simple, undirected graph where all cycles with four or more vertices have at least one chords. A chord is an edge that is not part of the cycle but connects two vertices of the cycle. 
\end{definition}

Given a connected chordal graph $G = \langle V, E\rangle$ with positive lengths on edges and a pair of distinct vertices $S$ and $T$, the decision problem is to decide whether a non-separating path exists between $S$ and $T$, and the optimization problem, which is the main topic of this paper, is to calculate the shortest non-separating path between $S$ and $T$ if such paths exist.

Chordal graphs are a wide class of graphs with many subclasses, including interval graphs, split graphs, Ptolemaic graphs, and K-trees. Many studies have been done on chordal graphs in the past. The most famous technique for these graphs is the computation of perfect elimination ordering, which can be done in linear time by the famous algorithm by Rose, Leuker and Tarjan \cite{perfectelimination}.  It can be used to compute the clique tree of a chordal graph \cite{blair}. The tree decomposition technique is another famous technique, which is due to Gavril's work showing that chordal graphs are intersection graphs of subtrees, and a representation of a chordal graph as an intersection of subtrees forms a tree decomposition of the graph \cite{gavril}.

A significance of chordal graphs is that on these types of graphs, many generally NP-hard problems can be efficiently solved. For example, the perfect elimination ordering can be used to compute the maximum independent set, maximal cliques, graph coloring, etc., which are all examples of NP-hard problems on general graphs. The tree decomoposition technique can be used to solve the routing algorithm for chordal graphs \cite{kosowski}. \textbf{As our work shows, the problem we study is another example of a generally NP-hard problem that can be efficiently solved on chordal graphs}. Our algorithm is not based on famous techniques such as perfect elimination ordering, clique tree or tree decomposition. Instead, it is based on a number of new observations and original sub-procedures related to our problem.

\subsection{Our Results}

Assume that $G$ is a connected chordal graph, and $S, T$ are distinct vertices in $G$. An edge is called a \textit{bridge} if after removing such edge, $G$ becomes disconnected. 

Our main result shows that the optimization problem can be solved efficiently, and the majority of this paper deals with this optimization problem.

\begin{theorem} \label{theomaster}
    For a connected chordal graph $G = \langle V, E\rangle$ with non-negative weights on edges and a pair of vertices $S$ and $T$, the shortest non-separating path between the $S$ and $T$ can be calculated in $O(T_{\textrm{SPTD}}(2|V|, 8|E|))$ time, where $T_{\textrm{SPTD}}(2|V|, 8|E|)$ is the asymptotic time to compute a single source shortest path tree in a \textbf{directed} graph with no more than $2|V|$ vertices and $8|E|$ edges.
\end{theorem}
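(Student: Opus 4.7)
The plan is to prove the theorem by reducing the shortest non-separating $S$-$T$ path problem to a single-source shortest-path computation on a directed auxiliary graph of asymptotically the same size, in three stages.

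Stage 1: reduction to the 2-edge-connected case. I would first compute all bridges of $G$ in $O(|V|+|E|)$ time. Any non-separating path must avoid every bridge, since the deletion of a single bridge already disconnects $G$. Consequently, if $S$ and $T$ do not lie in the same 2-edge-connected component $H$, no non-separating path exists. Otherwise every candidate path is confined to $H$, and because $H$ is attached to the rest of $G$ only through bridges (which are untouched), a path in $H$ is non-separating in $G$ iff it is non-separating in $H$: a walk from $u\in H$ to $v\in H$ that detours through another block must re-enter $H$ via the same bridge it left through, so it shortcuts to a walk inside $H$. We may therefore replace $G$ by $H$ and assume throughout that $G$ is itself 2-edge-connected chordal; in this setting every edge lies in a triangle.

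Stage 2: a local structural characterization. I would first prove the basic equivalence that a path $p = v_0 v_1 \cdots v_k$ is non-separating iff for each $i$ the endpoints $v_i, v_{i+1}$ are still connected in $G\setminus p$. The forward direction is trivial; for the converse, transitivity places all $v_i$ in one component of $G\setminus p$, and any non-path vertex retains all its incident edges and so is connected to some $v_i$ in $G\setminus p$. The harder step is promoting this to a condition verifiable with only $O(1)$ state per vertex. Here I would exploit chordality directly: each edge $(v_i,v_{i+1})$ lies in some triangle $v_iv_{i+1}w$, and when the common neighbor $w$ is ``blocked'' (e.g.\ $w=v_{i-1}$ or $w=v_{i+2}$, the only indices making one of $(v_i,w)$ or $(w,v_{i+1})$ a path edge), the chord of any induced $4$-cycle produced by adjacent path edges supplies a replacement witness. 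I expect the precise structural lemma to take the form: $p$ is non-separating iff it admits a ``witness assignment'' in which each edge's witness is chosen from an $O(1)$-sized local set determined by the state of the previous edge.

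Stage 3: auxiliary digraph and shortest-path computation. Guided by the lemma of Stage 2, I would construct a directed graph $G'$ with vertex set $V\times\{0,1\}$, hence at most $2|V|$ vertices. The extra bit at each vertex records whatever local history the characterization requires, for instance the ``type'' of witness used by the incoming edge, so that legality of the outgoing step depends only on the current state and the new edge. For each undirected edge of $G$ I would add at most $2\times 2\times 2 = 8$ directed edges (two directions, two source-bit choices, two destination-bit choices), each with weight equal to the original edge length, encoding precisely the transitions that the characterization marks as valid. Designating a start vertex representing $S$ and accepting vertices representing $T$, a single-source shortest-path computation in $G'$ from the start vertex reads off the answer, giving the claimed $O(T_{\textrm{SPTD}}(2|V|,8|E|))$ runtime; the preprocessing of Stages 1 and 2 is dominated by this.

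The main obstacle is Stage 2. The edgewise-connectivity criterion is easy, but demoting it to a condition that can be checked with a single bit of auxiliary state is not. Witnesses for adjacent edges interact, since a common neighbor reused on both sides may itself lie on the path or be blocked by a nearby edge of $p$. Chordality supplies enough local redundancy through the chords of short induced cycles, and I anticipate a case analysis along the path, distinguishing how a vertex's two incident path edges sit inside common triangles, yields exactly the constant-many transition types packaged into the up-to-eight directed edges per undirected edge in Stage 3.
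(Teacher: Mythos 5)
Your Stage 1 and your easy equivalence at the start of Stage 2 (non-separating iff every consecutive pair $v_i,v_{i+1}$ stays connected in $G\setminus p$, since every component of $G\setminus p$ must contain a path vertex) are fine and consistent with the paper. The genuine gap is the rest of Stage 2, which you yourself flag as ``the main obstacle'' and then leave as a conjecture: you never prove that non-separation can be certified by a witness assignment in which each edge's witness is drawn from an $O(1)$-sized set determined only by the previous edge's state. I do not believe a characterization of that purely local shape exists. The actual obstruction is containing an \emph{entire} minimal separating subpath (a ``separator path''), which can be arbitrarily long: if $p$ traverses all of a separator path $r$ then the endpoints of its middle edge are separated in $G\setminus p$, but if $p$ traverses all of $r$ except its last edge then nothing is separated, and no bounded-radius neighborhood of the middle edge distinguishes these two situations. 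So the one bit of state cannot mean ``which local triangle witnessed the previous edge''; in the paper it means ``the path so far has covered the entire prefix $r_{0,i}$ of the separator path $r$ I am currently on,'' and making that a well-defined single bit requires a body of structural theory you have not developed: that minimal separating edge subsets of a path are contiguous (so separator paths are the right objects), that $r_i$ and $r_{i+2}$ are adjacent and that vertices of $r$ are strongly $r$-connected exactly at index distance $2$, and---crucially---that normal separator paths are pairwise edge-disjoint and inner-vertex-disjoint, so that at any vertex there is at most one separator path to be ``unsafe'' for.

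Even granting such a characterization, your Stage 3 digraph cannot be built from local data alone. The paper must first \emph{globally} identify the relevant set $X$ of separator paths (via a block-cut-tree computation of ``bad vertices,'' two auxiliary shortest-path runs producing $P$ and $P_0$, and the extraction of $S$-, $T$-, and extra separator paths), prove that the shortest $X$-avoiding path contains no separator path outside $X$, and additionally precompute connectivity data ($L(r,u)$, $R(r,u)$) to forbid transitions in which the path leaves a separator path and sneaks back to an earlier vertex of it, which is what keeps the lifted path simple. None of this appears in your proposal, and without it the transition rules for your $2|V|$-vertex, $8|E|$-edge digraph are not defined, let alone proved correct. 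You also omit the length-$2$ separator paths (the paper's ``bad vertices''), which must be excluded outright rather than handled by the two-state mechanism. In short, the skeleton (bridge reduction, a $\{0,1\}$-labelled directed product graph, one shortest-path call) matches the paper, but the load-bearing structural lemmas and the global precomputation of $X$ are missing, and the local form in which you hope to obtain them is not the correct one.
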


Since the single source shortest path tree in a directed graph with $n$ vertices and $m$ edges can be calculated in $O(n\log{n} + m)$ time using the famous Dijkstra's algorithm equipped with a Fibonacci heap \cite{dijkstra, tarjanheap}, our problem can be solved in $O(|V|\log{|V|} + |E|)$ time.

For the decision problem, we have the following result:

\begin{theorem} \label{existence}
    There exists a non-separating path from $S$ to $T$ if and only if $S$ and $T$ are not separated by a bridge. When such path exists, the path from $S$ to $T$ that contains the minimum number of edges is a non-separating path.
\end{theorem}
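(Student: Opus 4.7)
My plan is to tackle the two directions of the equivalence separately and to deduce the \emph{furthermore} clause from the converse direction in a single stroke. The easier direction is the contrapositive of the forward implication: if a bridge $e = (u, w)$ separates $S$ from $T$ in $G$, then every $S$-$T$ path must traverse $e$, since otherwise the remaining $S$-$T$ path would witness that $e$ is not a separator. Any such path $p$ therefore satisfies $e \in E(p)$, so $G \setminus E(p) \subseteq G \setminus \{e\}$ is disconnected and $p$ fails to be non-separating.

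For the converse, I would argue something stronger: when no bridge separates $S$ and $T$, any $S$-$T$ path $p = v_0 v_1 \cdots v_k$ with the minimum number of edges is itself non-separating, which simultaneously establishes the converse implication and the furthermore clause. The core tool is the shortest-path chord property: if $(v_i, v_j)$ were an edge of $G$ with $|i - j| \ge 2$, we could shortcut $p$, contradicting minimality, so no such chord exists.

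The key structural claim is that for every edge $e_i = (v_i, v_{i+1})$ of $p$, there is a vertex $w_i \in V \setminus V(p)$ adjacent to both endpoints. First, $e_i$ cannot be a bridge of $G$: removing $e_i$ would split $G$ into the part containing $v_0, \ldots, v_i$ (including $S$) and the part containing $v_{i+1}, \ldots, v_k$ (including $T$), contradicting the hypothesis. Since $e_i$ is not a bridge, it lies in some cycle, and a standard chordal-graph argument --- repeatedly splitting any cycle of length $\ge 4$ by one of its chords --- forces $e_i$ into a triangle $\{v_i, v_{i+1}, w_i\}$. By the chord-free property of $p$, the apex $w_i$ cannot coincide with any other $v_j$, since that would produce a chord of length at least $2$. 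Therefore $w_i \notin V(p)$.

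Finally, I would conclude the connectivity of $H := G \setminus E(p)$. The edges $(v_i, w_i)$ and $(v_{i+1}, w_i)$ both lie in $H$, providing a two-edge detour from $v_i$ to $v_{i+1}$ through $w_i$; chaining these detours places all of $v_0, \ldots, v_k$ in a single component of $H$. For any other vertex $u$, take any $u$-$S$ walk in $G$ and truncate it at the first vertex lying on $V(p)$; since an edge of $E(p)$ has both endpoints in $V(p)$, the truncated walk uses no edge of $p$, hence connects $u$ in $H$ to some $v_j$. So $H$ is connected and $p$ is non-separating. I expect the main obstacle to be proving that the triangle apex $w_i$ lies off the path, since that is precisely where chordality and the shortest-path property must combine; the rest is routine graph bookkeeping.
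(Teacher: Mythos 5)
Your proof is correct, but it takes a genuinely different and more self-contained route than the paper. The paper's proof is a two-line corollary of machinery it has already built: by Theorem \ref{containsep} a simple path is separating if and only if it contains a separator path, and by Theorem \ref{theoadjacent} any separator path $r$ with at least two edges has $r_i$ adjacent to $r_{i+2}$, which would shortcut a minimum-edge path; hence the only separator paths such a path can contain are bridges, and the claim follows. You bypass the separator-path formalism entirely: you show directly that each edge of the minimum-edge path lies in a triangle whose apex is off the path (using that the edge is not a bridge separating $S$ from $T$, that a non-bridge edge of a chordal graph lies in a triangle via the shortest-cycle argument, and the no-chord property of minimum-edge paths), then chain the resulting two-edge detours and attach every remaining vertex by truncating a walk to $S$ at its first contact with $V(p)$. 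Both arguments hinge on the same two ingredients --- chordality producing a triangle or chord, and minimality of the edge count forbidding a chord between path vertices at distance at least $2$ --- but yours is elementary and could stand alone, whereas the paper's brevity is bought by Lemma \ref{minsepconnected}, Lemma \ref{lemmapathseparator1}, Theorem \ref{containsep} and Theorem \ref{theoadjacent}, which are needed elsewhere anyway. Your write-up also makes explicit two things the paper leaves implicit: the forward direction (a separating bridge lies on every $S$-$T$ path) and the reduction of the \emph{furthermore} clause to the converse direction.
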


This implies an $O(|V| + |E|)$ time algorithm for checking the existence of non-separating path by doing a simple breadth first graph traversal. Proof for Theorem \ref{existence} will be given in section \ref{existenceproof}.

We also show that the problem of deciding the existence of non-separating st-paths is NP-hard on general graphs. 

\begin{theorem} \label{nphard}
    On general graphs, deciding if non-separating st-paths exist is NP-Hard.
\end{theorem}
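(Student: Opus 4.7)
My plan is to reduce from the Hamiltonian $s$-$t$ path problem on undirected graphs, a canonical NP-complete problem. Given an instance $(H, s, t)$ with $H=(V,E)$, I would construct in polynomial time a graph $G$ such that $G$ admits an edge-non-separating $st$-path if and only if $H$ admits a Hamiltonian $st$-path.

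Step 1 is to reformulate: an $st$-path $P$ is non-separating in $G$ iff $G\setminus E(P)$ contains a spanning tree, i.e., iff the edges of $G$ decompose as an $st$-path plus a spanning connected subgraph. The reduction therefore amounts to designing an auxiliary ``scaffold'' glued onto $H$ whose leftover edges form a spanning tree of $G$ precisely when the path $P$ touches every vertex of $H$. Step 2 is to set $G = H \cup G_{\mathrm{aux}}$, where the gadget $G_{\mathrm{aux}}$ introduces, for each $v\in V$, a small sub-gadget whose only remaining connection to the rest of $G\setminus E(P)$ becomes unavailable exactly when $P$ bypasses $v$. Step 3 is to verify both implications: in the forward direction, a Hamiltonian $st$-path in $H$ lives inside $G$ and leaves the entire scaffold intact, so $G\setminus E(P)$ is spanned by the scaffold; in the backward direction, any $P$ that skips some $v$ isolates the sub-gadget at $v$, contradicting non-separation.

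The hard part will be designing the sub-gadget. There is a fundamental tension: too much auxiliary connectivity makes \emph{every} $st$-path automatically non-separating (killing the backward direction), while too little prevents even the intended Hamiltonian $P$ from being non-separating (killing the forward direction). Naive choices---single pendants attached to each $v$, a universal auxiliary vertex, or a duplicate copy of $H$ joined by a matching---all fail in one of these two ways, because in the edge version a pendant edge is never used by any $P$ and a universal vertex is never ``used up'' by the two edges of $P$ incident to it. The sub-gadget must therefore give each $v\in V$ exactly the right number of ``escape'' edges to the auxiliary structure so that $P$ visiting $v$ consumes the correct two, while $P$ bypassing $v$ cannot be compensated. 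I expect most of the technical effort to go into constructing and verifying such a gadget, possibly by falling back to a reduction from $3$-SAT with variable/clause gadgets if the direct approach proves too delicate; once the gadget is correct, the two implications and polynomial-time constructibility are routine.
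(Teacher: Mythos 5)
There is a genuine gap here: the entire proof rests on a gadget that you never construct, and the primary approach you propose cannot be made to work. You want a sub-gadget at each $v$ whose ``only remaining connection to the rest of $G\setminus E(P)$ becomes unavailable exactly when $P$ bypasses $v$.'' But in the edge-deletion version, connections become unavailable only when $P$ \emph{traverses} them. If $P$ bypasses $v$, then $P$ uses no edge incident to $v$ or to anything attached only at $v$, so $v$ and any gadget hanging off it retain every one of their edges and remain connected to the rest of the graph. Skipping a vertex can never, by itself, disconnect anything near that vertex; only visiting vertices and consuming their edges can. This is why the tension you identify is not merely technical but structural: a reduction from Hamiltonian $st$-path needs ``$P$ misses $v$'' to be punishable, and in this problem it never is. (Your reformulation in Step 1 is fine, but it does not rescue Step 2.)

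The paper instead reduces from 3-SAT, which goes \emph{with} the grain of the problem: the path is forced to make a binary choice per variable (traverse the chain of occurrences of $x_i$ or the chain of occurrences of $\overline{x_i}$), and traversing a chain consumes the edges along it. Each clause vertex $c_k$ is attached by ``fat edges'' (length-2 paths through degree-2 dummy vertices, which no non-separating path may traverse) to the chain positions corresponding to the \emph{negations} of its literals; $c_k$ stays connected iff at least one attachment survives, i.e.\ at least one literal of the clause is satisfied. You gesture at exactly this in your last sentence as a fallback, but you do not build the variable chains, the clause attachments, or the fat-edge mechanism that prevents the path from cheating through a clause vertex, and without those the reduction does not exist. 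To repair the proposal, you should abandon the Hamiltonicity route and carry out the 3-SAT construction in full.
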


\subsection{Paper Organization}

In this paper, we will first give give our algorithm for the optimization problem on chordal graphs, where we will first give an overview of the algorithm, and then prove important key lemma and theorems that we use. With these we can complete the proof for Theorem \ref{existence}. After that, we give details on the components of the main algorithm, and then give our final correctness proof. Finally we will show the NP-hardness of the existence problem on general graphs. 

\section{Definitions and Notations}

Let $G = \langle V, E\rangle$ be a simple undirected chordal graph with edge lengths. For two vertices $u, v \in V$, $u$ and $v$ are \textit{adjacent} if $u \ne v$ and $u$ and $v$ are connected by an edge. If $u$ and $v$ are adjacent, the length of the edge between them is denoted by $L(u, v)$.

Throughout the paper, we use 0-based indexing. A \textit{path} $p$ is defined to be a sequence of vertices $p_0 \rightarrow p_1 \rightarrow \cdots \rightarrow p_{|p| - 1}$ such that for $0 \le i < |p| - 1$, the vertices $p_i$ and $p_{i + 1}$ are adjacent in $G$. A vertex $u$ is \textit{on} $p$ if $u = p_i$ for $0 \le i < |p|$, and we say $p$ \textit{contains} $u$ or \textit{visits} $u$, and is denoted by $u \in p$.  An edge $e$ is \textit{on} $p$ if the edge is equal to the edge between $p_i$ and $p_{i + 1}$ for $0 \le i < |p| - 1$, and we say $p$ \textit{contains} $e$ or \textit{visits} $e$, and is denoted by $e \in p$. The \textit{length} of $p$ is the sum of the length of the edges on $p$.

The vertices $p_0$ and $p_{|p| - 1}$ are called the \textit{endpoints} of $G$, and we say $p$ is a path \textit{between} $p_0$ and $p_{|p| - 1}$. A vertex on $p$ other than an endpoints is called an \textit{inner vertex} of $p$. For $0 \le i < |p|$, define $\textrm{INDEX}(p, p_i) = i$. A path $p$ is called \textit{simple} if for $0 \le i < j < |p|$, the vertices $p_i$ and $p_j$ are different. The \textit{length} of $p$ denoted by $\|p\|$ is defined by $\|p\| = \sum_{0 \le i < |p| - 1}{L(p_i, p_{i + 1})}$. For $0 \le i \le j < |p|$, $p_{i, j}$ denotes the path $p_i \rightarrow p_{i + 1} \rightarrow \cdots \rightarrow p_j$. Two paths $u$ and $v$ are \textit{different} if $|u| \ne |v|$ or $\exists 0 \le i < |u|, u_i \ne v_i$. Let $p$ be a path. The \textit{reverse} of $p$ is the path $p_{|p| - 1} \rightarrow p_{|p| - 2}  \rightarrow \cdots \rightarrow p_1 \rightarrow p_0\}$. Two paths $u$ and $v$ are called \textit{distinct} if $u$ and both $v$ and the reverse of $v$ are different. For two paths $p_0$ and $p_1$, $p_0$ \textit{contains} $p_1$ if and only if $\exists 0 \le i \le j < |p_0|, p_1 = (p_0)_{i, j}$, and we denote this by $p_1 \subset p_0$. Note that this is different from containment relationship between the sets of edges on these paths.

Let $D$ be a set of edges in $G$. $G \backslash D$ denote the graph we get by removing all edges in $D$ from $G$. If $G$ is connected, a \textit{separating} set of edges is a set of edges $D$ such that $G \backslash D$ is disconnected. A \textit{separating path} is a path such that the set of edges on the path is separating.

We will assume that there is no tie between lengths of paths, and at the end of the paper we will talk about how to break ties in practice. 

\section{Main Algorithm}

Let $G = \langle V, E\rangle$ be a connected chordal graph, and let $S, T$ be distinct vertices in $V$. Due to Theorem \ref{existence}, we can remove all vertices that are un-reachable from $S$ without visiting a bridge without affecting the answer to our problem. It is easy to to see that the graph remains chordal and connected. Therefore, from now on we will suppose the graph does not contain any bridges. 

A simple path $r$ is called a \textit{separator path} if and only if the path is separating and does not properly contain a separating path. For separator paths, we have the following theorem.

\begin{theorem} \label{containsep}
    A simple path $p$ is not separating if and only if the path does not contain a separator path.
\end{theorem}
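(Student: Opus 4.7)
My plan is to handle both directions of the equivalence separately, with the non-trivial content being a minimality argument for the backward direction.

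For the direction ``$p$ contains a separator path $\Rightarrow$ $p$ is separating'', I would argue by edge inclusion. If $p$ contains a separator path $r$, then by definition of $\subset$ there exist indices $i \le j$ with $r = p_{i,j}$, so every edge on $r$ is also an edge on $p$. Writing $D_r$ and $D_p$ for the edge sets of $r$ and $p$, we have $D_r \subseteq D_p$. Since $r$ is separating, $G \backslash D_r$ is disconnected, and because $G \backslash D_p$ is obtained from $G \backslash D_r$ by deleting further edges on the same vertex set, it remains disconnected. Hence $p$ is separating.

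For the direction ``$p$ is separating $\Rightarrow$ $p$ contains a separator path'', suppose $p$ is a simple separating path. I would consider the finite collection of contiguous subpaths $p_{i,j}$ of $p$ that are separating. This collection is nonempty because $p = p_{0,|p|-1}$ itself belongs to it, so I can select an element $r$ that is minimal under the subpath relation $\subset$. I claim $r$ is a separator path. It is simple because any contiguous subpath of the simple path $p$ is simple; it is separating by construction; and it does not properly contain a separating path $q$, because any such $q$ would be a contiguous subpath of $p$ that is separating and strictly smaller than $r$ under $\subset$, contradicting minimality.

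The only real point of care — and what I regard as the main potential obstacle — is fixing the meaning of ``contain'' used in the definition of separator path. Once one recalls from the notation section that $\subset$ denotes the contiguous subpath relation rather than edge-set containment, the minimality argument above produces a separator path inside any separating simple path, and no additional chordality hypothesis is invoked; the statement is really a structural property of the subpath order on simple paths.
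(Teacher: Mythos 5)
Your proof is correct, but the backward direction takes a genuinely different route from the paper's. The paper does not minimize over contiguous subpaths: it takes a \emph{minimal separating subset $D$ of the edges of $p$} (a priori an arbitrary edge subset, not a segment) and then invokes Lemma \ref{minsepconnected} --- which relies on chordality via a chordless-cycle argument --- to conclude that $D$ induces a connected subgraph and hence must coincide with the edge set of some segment $p_{i,j}$, which is then the desired separator path. Your argument instead picks a $\subset$-minimal element of the (finite, nonempty) family of separating contiguous subpaths of $p$; since any path properly contained in such an $r=p_{i,j}$ is again a contiguous subpath of $p$, minimality immediately gives that $r$ is a separator path. Your version is more elementary and, as you note, establishes the theorem for arbitrary graphs with no chordality hypothesis; the one point to be careful about is exactly the one you flag, namely that ``contain'' in the definition of separator path means the contiguous-subpath relation $\subset$ and not edge-set inclusion. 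What the paper's heavier route buys is the stronger structural fact that \emph{every} minimal separating edge subset of a simple path in a chordal graph is a contiguous segment; that fact (via Lemma \ref{minsepconnected}) is reused later, e.g.\ in the proofs of Lemma \ref{lemmapathseparator1} and Theorem \ref{theosepcontained}, so the lemma is not superfluous to the paper even though your proof shows it is not needed for Theorem \ref{containsep} itself.
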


Since the answer must be a simple path, the answer is not separating if and only if it does not contain a separator path. A separator path $r$ is called \textit{traversable} if there exists a simple path from $S$ to $T$ that contains $r$. Obviously, we do not need to worry about non-traversable separator paths. We will be dealing with traversable separator path of length two separately. A vertex $v$ is called a \textit{bad vertex} if there exists a separator path $r$ with length two such that $r_1 = v$ (i.e. $v$ is the middle vertex). For bad vertices we have the following theorem.

\begin{theorem} \label{badnogood}
    If $u$ is a bad vertex, then no non-separating simple path from $S$ to $T$ visits $u$.
\end{theorem}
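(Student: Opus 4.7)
The plan is to argue by contradiction via Theorem \ref{containsep}. Suppose $u$ is a bad vertex, witnessed by a (traversable) length-two separator path $r = (a, u, b)$ with $r_1 = u$, and suppose toward contradiction that $p$ is a non-separating simple $S$-$T$ path with $u \in p$. The goal is to force $p$ to contain $r$ (or its reverse) as a consecutive sub-path, which by Theorem \ref{containsep} would make $p$ separating.

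First I would record that $\{(a, u), (u, b)\}$ is a \emph{minimal} edge cut of $G$. It is a cut because $r$ is separating; its only proper subsets are $\emptyset$ and the two singletons, none of which is a cut (the graph has already been reduced to be bridgeless, and removing no edges leaves $G$ connected). A minimal edge cut of a connected graph partitions the vertex set into exactly two components: if it produced three or more, restoring any single cut edge would still leave the remainder disconnected, contradicting minimality. Let $X$ be the component containing $u$ and $Y = V \setminus X$; then $a, b \in Y$, and the only $X$-$Y$ edges of $G$ are the two cut edges, both incident to $u$.

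Next I would use traversability of $r$ to pin $S$ and $T$ into $Y$. Let $q$ be a simple $S$-$T$ path containing $r$ as a consecutive sub-path (the reverse-direction case is symmetric). Since $q$ is simple, $u$ appears in $q$ exactly once, namely at the middle of the $r$-portion. Every cut edge is incident to $u$, so the prefix of $q$ ending at $a$ and the suffix beginning at $b$ cross no cut edge, and, beginning/ending at vertices of $Y$, they remain entirely in $Y$. Hence $S = q_0 \in Y$ and $T = q_{|q|-1} \in Y$. Now the contradiction is a cut-crossing count: because $u \in X$ and $S, T \in Y$, the vertex $u$ is an inner vertex of $p$, and $p$ must cross the cut at least twice (entering $X$ and leaving). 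Every crossing uses a cut edge, every cut edge meets $u$, and simplicity of $p$ forces $p$ to visit $u$ exactly once and so to use at most two edges incident to $u$. Thus both crossings occur at that unique visit, the two edges of $p$ at $u$ are exactly $(a, u)$ and $(u, b)$, and $p$ contains $(a, u, b)$ or $(b, u, a)$ as a consecutive sub-path. By Theorem \ref{containsep}, $p$ is separating, contradicting the hypothesis on $p$.

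The main obstacle I anticipate is the traversability step, and this is where the argument really earns its keep: in a bridgeless chordal graph one can construct length-two separators whose middle vertex still lies on honest non-separating $S$-$T$ paths when the separator is not traversable, so merely being a separator path is not enough. Traversability is precisely what anchors both $S$ and $T$ on the $Y$-side of the cut, after which the crossing count closes immediately. The only other non-trivial ingredient is the elementary graph-theoretic fact that a minimal edge cut of a connected graph yields exactly two components, which I would state as a short auxiliary observation before invoking the bipartition $(X, Y)$.
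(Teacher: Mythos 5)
Your proof is correct, but it takes a genuinely different route from the paper's. The paper argues through its strongly-$r$-connected machinery: it supposes $p$ visits $r_1$ but omits the edge $(r_0,r_1)$, then case-splits on whether $r_1$ is the first/last vertex of $r$ that $p$ visits, and in each case derives a strong $r$-connection between vertices of $r$ at distance $1$ (or between $r_0$ and $r_2$ via a forbidden route), contradicting Theorem \ref{pathseparatorprop1}; traversability enters because $S$ and $T$ are strongly $r$-connected to the endpoints of $r$. You instead treat the two edges of $r$ as a minimal edge cut, observe that a minimal cut of a connected graph splits $V$ into exactly two sides (this is essentially the first paragraph of the paper's Lemma \ref{minsepconnected}), use traversability to place $S$ and $T$ on the side opposite $u$, and close with a crossing count: a simple path from $Y$ to $Y$ through $u\in X$ must cross the cut twice, and both crossings are forced onto the two edges at the unique visit of $u$, so $p$ contains $r$ and is separating by Theorem \ref{containsep}. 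Your argument is more elementary and more general --- it nowhere uses chordality, so it shows the statement holds for length-two separator paths in any bridgeless graph --- and it correctly identifies traversability as the load-bearing hypothesis. What the paper's version buys is uniformity: it reuses Theorem \ref{pathseparatorprop1}, which it must develop anyway for separator paths of arbitrary length, rather than introducing a separate cut-partition argument. One small remark: you justify minimality of the cut via bridgelessness, which is fine under the paper's standing reduction, but it also follows directly from the definition of a separator path (neither single edge of $r$ is separating, being a properly contained sub-path).
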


With this theorem, if we want to make sure the path does not contain any separator paths of length two. We only need to ensure we don't visit any bad vertices. What remains are separator paths with length more than two. We design a sub-procedure, $\textrm{AVOID}(X)$, where $X$ is a set of separator paths of length more than two, that finds the shortest path from $S$ to $T$ in $G$ that does not contain any element of $X$, \textbf{and does not contain a bad vertex}. Obviously, this shortest path must also be simple. As a shorthand, we say a path \textit{avoids} $X$ if the path does not contain any element of $X$, and does not contain a bad vertex. The sub-procedure runs in $O(T_{\textrm{SPTD}}(2|V|, 8|E|))$ time. There is an important pre-requisite to this sub-procedure: any two separator paths in $X$ must not share a common edge or a common inner vertex.

For separator paths, we have the following theorem:

\begin{theorem} \label{theoadjacent}
    Let $r$ be a separator path. For any $0 \le i < |r| - 2$, the vertex $r_i$ and $r_{i + 2}$ are adjacent. 
\end{theorem}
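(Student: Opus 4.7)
The plan is to derive a contradiction from the assumption that $r_i$ and $r_{i+2}$ are not adjacent in $G$.

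First I will establish that $G \setminus E(r)$ has exactly two connected components, which I will call $A$ and $B$. Disconnectedness comes from $r$ being separating. By the minimality clause in the definition of separator path, the proper subpaths $r_{0, |r|-2}$ and $r_{1, |r|-1}$ are non-separating, so re-adding the single edge $(r_{|r|-2}, r_{|r|-1})$ or $(r_0, r_1)$ to $G \setminus E(r)$ yields a connected graph. Since one edge can reduce the component count by at most one, the count is exactly two. Writing $\sigma(k)$ for the component containing $r_k$, in particular $\sigma(0) \neq \sigma(1)$ and $\sigma(|r|-2) \neq \sigma(|r|-1)$.

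Next I claim $\sigma$ alternates along $r$, so that in particular $r_i, r_{i+2}$ lie in a common component (say $A$) while $r_{i+1} \in B$. To prove this, assume for contradiction a smallest index $j$ with $\sigma(j) = \sigma(j+1) = A$, and take a shortest $r_j$-to-$r_{j+1}$ path $Q$ inside $A$ within $G \setminus E(r)$; then $|Q| \geq 2$ since $Q$ must avoid the unique edge $(r_j, r_{j+1}) \in E(r)$. Forming the cycle $Q \cup \{(r_j, r_{j+1})\}$ and invoking chordality (when the cycle has length $\geq 4$) produces a chord that cannot lie in $G \setminus E(r)$ (else it would shorten $Q$), so it must be an edge $(r_s, r_{s+1}) \in E(r)$ with both endpoints on $Q$. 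This forces $r_s, r_{s+1}$ to both lie in $A$; careful case analysis on the location of the chord either contradicts the minimality of $j$ directly (when $s < j$) or propagates the non-alternation outward until it collides with the established alternation at positions $0$ and $|r|-2$. I expect this alternation step to be the main obstacle, because the chord lies inside $E(r)$ itself and the propagation requires delicate bookkeeping across possible chord positions.

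With alternation in hand, let $P = P_0 P_1 \cdots P_\ell$ be a shortest path from $r_i = P_0$ to $r_{i+2} = P_\ell$ in the subgraph of $G \setminus E(r)$ induced by $A$. The assumed non-adjacency forces $\ell \geq 2$, so the cycle $C' = r_i \to r_{i+1} \to r_{i+2} \to P_{\ell-1} \to \cdots \to P_1 \to r_i$ has length $\ell + 2 \geq 4$ and admits a chord by chordality. I then enumerate: chords between two interior vertices of $P$, or between $r_i$ (resp.~$r_{i+2}$) and an interior vertex, either shorten $P$ when they lie in $G \setminus E(r)$ (contradicting that $P$ is shortest) or place two consecutive $r$-vertices both inside $A$ when they lie in $E(r)$ (violating alternation); a chord $(r_i, r_{i+2})$ contradicts non-adjacency directly. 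The remaining case, a chord $(r_{i+1}, P_k)$ for some $1 \leq k \leq \ell-1$, cannot lie in $E(r)$ because the only $r$-neighbors of $r_{i+1}$ are $r_i = P_0$ and $r_{i+2} = P_\ell$, never an interior $P_k$; hence the chord lies in $G \setminus E(r)$ and connects $r_{i+1} \in B$ to $P_k \in A$, contradicting that $A$ and $B$ are distinct components. Every case yields a contradiction, so $r_i$ and $r_{i+2}$ must be adjacent.
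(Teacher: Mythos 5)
Your overall strategy --- pass to the two components of $G \setminus E(r)$, take a shortest connecting path, and let chordality force a chord through $r_{i+1}$ --- is essentially the paper's (which routes it through Lemma \ref{lemmapathseparator1}), and your final case analysis on the cycle through $r_i, r_{i+1}, r_{i+2}$ is sound \emph{given} the alternation claim. The gap is exactly where you predicted it: the alternation step is not proved, and the sketch you give for it cannot be completed as stated. Two concrete failure points. First, when the shortest path $Q$ from $r_j$ to $r_{j+1}$ inside $A$ has length $2$, the cycle $Q \cup \{(r_j,r_{j+1})\}$ is a triangle and chordality gives you nothing; you flag this case parenthetically (``when the cycle has length $\geq 4$'') and then never handle it, yet it is precisely a configuration you must exclude. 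Second, when the chord is $(r_s,r_{s+1})$ with $s > j$, minimality of the index $j$ yields no contradiction, and the claimed ``propagation outward'' does not go through: reapplying the argument produces \emph{some} other bad pair with no monotonicity in the index, so the process need not terminate in a collision with positions $0$ or $|r|-2$. (Choosing the bad pair that minimizes the length of $Q$, rather than the index, repairs the second point, since the chord always produces a bad pair with strictly shorter $Q$; but that descent bottoms out at $|Q|=2$, which is the first problem again.)

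The missing ingredient is that alternation must come from the minimality clause in the definition of separator path, not from chordality applied to the $Q$-cycle. If $r_j$ and $r_{j+1}$ lie in the same component, then $E(r)\setminus\{(r_j,r_{j+1})\}$ still separates $G$; a minimal separating subset of it induces a connected subgraph (this is the paper's Lemma \ref{minsepconnected}, and is where chordality actually enters), hence is a proper subpath of $r$, contradicting that $r$ is a separator path. You invoke minimality only for the two end edges, where deleting the edge leaves a subpath directly; for interior edges the same conclusion requires this connectivity lemma. With alternation established that way, the rest of your argument closes.
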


Let $r$ be a separator path. If for some $0 \le i < |r| - 2$, $\|r_{i, i + 2}\| > L(r_i, r_{i + 2})$, then intuitively, one can simply get around $r$ by going through the edge between $r_i$ and $r_{i + 2}$ instead of going through $r_{i, i + 2}$, and such separator path intuitively should never become an issue. $r$ is called \textit{useful} if for any $0 \le i < |r| - 2$, $\|r_{i, i + 2}\| < L(r_i, r_{i + 2})$. A separator path $r$ is called \textit{normal} if it is both traversable and useful and contains more than two edges. We will formally prove in sub-section \ref{correctnormal}, that any separator path contained in a path that can possibly be produced by $\textrm{AVOID}(X)$ for some set of normal separator paths $X$ is indeed useful, and therefore normal. Normal separator paths have two very important properties and the pre-requisite to the sub-procedure will be met:

\begin{theorem} \label{nosharee}
    Two different normal separator paths do not share a common edge.
\end{theorem}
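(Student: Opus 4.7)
The plan is a contradiction proof combining a structural cut-lemma for separator paths with the usefulness inequalities. Suppose $r$ and $r'$ are distinct normal separator paths sharing an edge $e = uv$; I will derive the contradiction $2L(u,v) < 0$, which is impossible since edge lengths are positive.

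First I would prove the following structural lemma: for any separator path $r$ with more than two edges, there is a bipartition $(A, B)$ of $V$ into non-empty sets such that the vertices of $r$ strictly alternate between $A$ and $B$, and the edge set of $r$ is exactly the set of edges of $G$ that cross $(A, B)$. The argument uses Theorem~\ref{theoadjacent}: for each $i$, the chord $r_i r_{i+2}$ is not an edge of $r$, hence it survives when the edges of $r$ are removed, forcing $r_i$ and $r_{i+2}$ into the same connected component; a straightforward surviving-edge argument rules out more than two components and identifies the edge set of $r$ with the cut.

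Apply this lemma to both $r$ and $r'$, yielding bipartitions $(A, B)$ and $(A', B')$. Orient both paths so that at the shared edge we have $r_i = r'_j = u \in A \cap A'$ and $r_{i+1} = r'_{j+1} = v \in B \cap B'$; by alternation, $r_{i-1}, r'_{j-1} \in B$ and $r_{i+2}, r'_{j+2} \in A$ whenever these vertices exist. Consider the chord $r_{i-1} v$ of $r$: if $r_{i-1} \in A'$, this chord crosses $(A', B')$ and therefore belongs to the edge set of $r'$; since the only path-neighbors of $v$ in $r'$ are $u$ and $r'_{j+2}$, this forces $r_{i-1} = r'_{j+2}$. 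Summing the usefulness inequality of $r$ at position $i-1$ with that of $r'$ at position $j$, and using symmetry of $L$, collapses to $2L(u,v) < 0$, a contradiction. The case $r_{i+2} \in B'$ is symmetric (chord $u r_{i+2}$, forcing $r_{i+2} = r'_{j-1}$), and the analogous two cases for the chords of $r'$ at $e$ behave identically.

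The only remaining case is the "aligned" one, where $r_{i-1}, r'_{j-1} \in B \cap B'$ and $r_{i+2}, r'_{j+2} \in A \cap A'$. In this case the path-edge $r'_{j-1} u$ of $r'$ crosses the cut $(A, B)$, so it is an edge of $r$; being incident to $u$ and distinct from $uv$, it must equal $r_{i-1} u$, giving $r'_{j-1} = r_{i-1}$, and symmetrically $r'_{j+2} = r_{i+2}$. Thus the edges $r_{i-1} u$ and $v r_{i+2}$ are also shared by $r$ and $r'$, so iterating the argument, the maximal common sub-path of $r$ and $r'$ containing $e$ must extend to an endpoint of at least one of the two paths. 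If one path becomes entirely contained in the other, this contradicts minimality of the containing path as a separator path. Otherwise the common sub-path terminates at an endpoint (say $r_0$) strictly inside $r'$, and then the path-edge of $r'$ immediately outside the common portion is forced, by the same cut-crossing argument, to coincide with the unique $r$-edge at $r_0$, violating simplicity of $r'$. I expect this endpoint analysis to be the main obstacle, since the chord theorem supplies only one chord rather than two at an endpoint, so the contradiction must be extracted from simplicity together with the uniqueness of cut-crossing path-edges rather than from a clean usefulness sum.
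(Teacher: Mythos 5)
Your argument is correct in its core mechanism and takes a genuinely different route from the paper's. The paper reduces to the reverse-free case via Corollary~\ref{noreverse}, takes the maximal common sub-path of the two separator paths, and derives contradictions from the strong-connectivity characterization (Theorem~\ref{pathseparatorprop1}) together with Lemma~\ref{lemmanoshare}; it never materializes a bipartition. Your cut lemma (the edge set of a separator path is exactly the cut of a two-class vertex partition, with the path's vertices alternating) is a sound consequence of Lemma~\ref{minsepconnected}, the connectivity of $G$, and Theorem~\ref{theoadjacent}, and the summation of the two usefulness inequalities at the shared edge to obtain $2L(u,v)<0$ is essentially the computation the paper hides inside Lemma~\ref{lemmanoshare}, deployed directly. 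Your version makes the role of the usefulness hypothesis much more transparent, and the aligned-case iteration plus endpoint analysis does close: when the common portion reaches an endpoint of the shorter path while the longer one continues, whichever side of the shorter path's cut the next vertex of the longer path lies on, either its path-edge or its chord into the shared endpoint is a crossing edge of that cut which cannot belong to the shorter path's edge set.

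The one genuine gap is the reverse case. Your proof uses only usefulness and minimality, never traversability, and under those hypotheses alone the statement is false: a useful separator path and its reverse are \emph{different} paths in the paper's sense, both are useful separator paths, and they share every edge. Your step ``orient both paths so that $r_i=r'_j=u$ and $r_{i+1}=r'_{j+1}=v$'' silently replaces $r'$ by its reverse; if $r'$ is exactly the reverse of $r$, the aligned case propagates to both endpoints and neither path is \emph{properly} contained in the other, so no contradiction ever appears. You must dispose of this case separately, exactly as the paper does in its opening sentence, by invoking Corollary~\ref{noreverse} (the reverse of a normal separator path fails traversability, hence is not normal). With that one-line reduction prepended, your proof goes through.
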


\begin{theorem} \label{nosharev}
    Two different normal separator paths do not share a common inner vertex.
\end{theorem}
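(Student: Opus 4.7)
The plan is to argue by contradiction: suppose two distinct normal separator paths $r$ and $r'$ share an inner vertex $v = r_i = r'_j$ with $0 < i < |r|-1$ and $0 < j < |r'|-1$. The first step is to rule out coincidences among the four path-neighbors of $v$: if any of $a := r_{i-1}$, $b := r_{i+1}$ coincided with any of $c := r'_{j-1}$, $d := r'_{j+1}$, then the corresponding edge at $v$ would lie on both $r$ and $r'$, contradicting Theorem \ref{nosharee}. So $a, b, c, d$ are pairwise distinct and $v$ has degree at least four.

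Next I would analyze the cuts induced by the two separator paths. Using Theorem \ref{theoadjacent}, the chord edges $(r_k, r_{k+2})$ lie in $G \setminus E(r)$, so the even-indexed and the odd-indexed vertices of $r$ each belong to a single component of $G \setminus E(r)$; since $r$ is separating these two components are distinct. Thus $v$ lies on one side and $a, b$ on the other. Moreover, since the edges $(v,c)$ and $(v,d)$ lie on $r'$ and hence not on $r$ (by Theorem \ref{nosharee}), they survive in $G \setminus E(r)$, placing $c$ and $d$ in the same component as $v$. Consequently no edge of $G$ crosses between $\{a,b\}$ and the component containing $\{v, c, d\}$ except the edges of $r$ itself. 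A symmetric analysis applied to $r'$ flips the roles: in $G \setminus E(r')$, $c$ and $d$ sit on one side while $v, a, b$ sit on the other.

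The final step is to exploit chordality to reach a contradiction. I would construct a simple cycle through $v$ whose two $v$-neighbors in the cycle are $a$ and $c$ (or another mixed pair drawn one from each path). Such a cycle exists because $r$ and $r'$ are traversable: extend each to a simple $S$-$T$ path and splice appropriate portions at $v$ and possibly at the common endpoints $S$ or $T$, obtaining a closed walk through $v$ containing the edges $(a, v)$ and $(v, c)$; after pruning repeated vertices one recovers a simple cycle of length at least four. Chordality forces this cycle to have a chord, and I would then enumerate the chord candidates. Any chord crossing the $r$-cut between $\{a, b\}$ and $\{v, c, d\}$ is ruled out by the cut analysis above, and similarly any chord crossing the $r'$-cut is impossible; any chord lying internally on $r$ or $r'$ would yield a strictly shorter separating sub-path of one of the paths, contradicting minimality, with the usefulness hypothesis being what guarantees that the shortcut itself cannot substitute for the original subpath inside a separator.

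The main obstacle I expect is ensuring the spliced closed walk can always be pruned to a simple cycle that retains both edges $(a, v)$ and $(v, c)$, because the two extended $S$-$T$ paths may share many vertices beyond $v$, so a careful case split on the relative positions of the endpoints of $r$ and $r'$ may be needed. As a parallel backup plan, I would try to show directly that a proper sub-path of $r$, together with one edge of $r'$ incident to $v$, already disconnects $G$, contradicting the minimality of $r$ outright; in that line of attack the usefulness hypothesis would be used to block any hypothetical shortcut that could reconnect the two sides of the cut without invoking edges of $r$.
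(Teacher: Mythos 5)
Your proposal has genuine gaps, concentrated in the final chordality step. The plan is to build a simple cycle through the shared vertex $v$ whose cycle-neighbors are $a=r_{i-1}$ and $c=r'_{j-1}$, and then derive a contradiction by showing the cycle can have no chord. But the cycle is assembled by splicing arbitrary $S$--$T$ paths extending $r$ and $r'$, so it will generally contain many vertices lying off both separator paths, and a chord joining two such vertices is perfectly consistent with everything you have established --- your cut analysis only constrains edges crossing the $r$-cut or the $r'$-cut, and says nothing about edges inside a single component. The standard device in this paper for killing such chords is to take a connecting path with the \emph{minimum} number of edges, but that conflicts with your requirement that the cycle pass through $v$ with two prescribed neighbors, which is exactly the obstacle you flag yourself. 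In addition, your dismissal of chords that are edges of $r$ or $r'$ does not work: an edge of $r$ whose endpoints both happen to lie on the cycle is a legitimate chord and does not ``yield a strictly shorter separating sub-path'' of anything. So even granting the cycle construction, no contradiction follows.

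The paper avoids all of this by never building a fresh cycle. It takes the two neighbors $v_{j-1}$ and $v_{j+1}$ of the shared vertex along $v$ (the second separator path), notes they are adjacent by Theorem~\ref{theoadjacent} and lie entirely off $u$ because of Theorem~\ref{nosharee}, and then uses traversability of $v$ together with Corollary~\ref{traorder} to show that $v_{j-1}$ is strongly $u$-connected to some $u_a$ with $a\le i-2$ and $v_{j+1}$ to some $u_b$ with $b\ge i+2$. Concatenating through the chord edge between $v_{j-1}$ and $v_{j+1}$ makes $u_a$ and $u_b$ strongly $u$-connected with $b-a\ge 4$, contradicting Theorem~\ref{pathseparatorprop1}. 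If you want to rescue your approach, the missing ingredient is precisely this kind of argument: rather than enumerating chords of an ad hoc cycle, route the contradiction through the already-proved structure theorem that strong $u$-connectivity forces index difference exactly two.
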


The framework of the algorithm is as follows: we first find a set $X$ of normal separator paths such that one can guarantee that the shortest path that avoids $X$ does not contain a normal separator path not in $X$. Thus $\textrm{AVOID}(X)$ will give us the answer directly. An obvious way is to let $X$ be the set of all normal separator paths in $G$. However, such $X$ will be very hard to compute. A good approach would be to gradually increment $X$: start from $X = \emptyset$ and keep calling $\textrm{AVOID}(X)$ and expanding $X$ to include new unseen separator paths until we find the answer. This gives a simple quadratic time algorithm. However, we will introduce a more efficient approach. We will introduce some special types of normal separator paths that are easy to compute, and we claim that a $X$ that can give us the answer directly can be found doing an extra-sub-procedure based on these separator paths.

Let $P$ be the the shortest path from $S$ to $T$ that does not visit a bad vertex. Let $r$ be a normal separator path $r$ such that for any two vertices $u$ and $v$ shared by $r$ and $P$, $\textrm{INDEX}(r, u) < \textrm{INDEX}(r, v)$ if and only if $\textrm{INDEX}(P, u) < \textrm{INDEX}(P, v)$. $r$ is called an \textit{$S$-separator path} if $r_{2, |r| - 1}$ lies entirely on $P$, and $r$ is called a \textit{$T$-separator path} if $r_{0, |r| - 3}$ lies entirely on $P$. Consider laying $P$ flat with $S$ on one side and $T$ on the other. The edges on an $S$-path seperator can only be off $P$ on the $S$-side, and those on a $T$-separator path can only be off $P$ on the $T$-side. If a separator path is on $P$, it is both an $S$-separator path and a $T$-separator path per definition. 

The framework of the computation for $X$ is as follows: We first the set of bad vertices. We then find the shortest path $P$ from $S$ to $T$ that does not contain any bad vertices. Then, we find the set $X_{ST}$ consisting of the $S$-separator paths and the $T$-separator paths. We will show, in sub-section \ref{badlr}, that the bad vertices, the $S$-separator paths and $T$-separator paths can all be found in $O(|V| + |E|)$ time, by methods based on biconnected components. Now we will compute the set of extra separator paths needed in $X$. Let $r$ be a separator path. The edge between $r_0$ and $r_1$ is called the \textit{head} of $r$, denoted by $\textrm{HEAD}(r)$, and the edge between $r_{|r| - 2}$ and $r_{|r| - 1}$ is called the \textit{tail} of $r$, denoted by $\textrm{TAIL}(r)$. We build a graph $G_0$ which is the maximal subgraph of $G$ that does not contain either of the following:

\begin{itemize}
    \item A bad vertex
    \item The tail of an $S$-separator path
    \item The head of a $T$-separator path
\end{itemize}

Let the shortest path from $S$ to $T$ in $G_0$ be $P_0$. Then $P_0$ corresponds to a simple path in $G$. Let $X_{\textrm{EXTRA}}$ be the set of normal separator paths in $G$ that $P_0$ contains, which as we will show in sub-section \ref{sepcontained} can be computed in $O(|V| + |E|)$ time. We set $X$ to $X_{ST} \cup X_{\textrm{EXTRA}}$. We will prove, in sub-section \ref{correctavoid}, that the set $X$ calculated in this way is such that the shortest path that avoids $X$ does not contain a normal separator path not in $X$.

Here is an example that shows how the algorithm works:

\begin{figure}
    \includegraphics[width=1\textwidth]{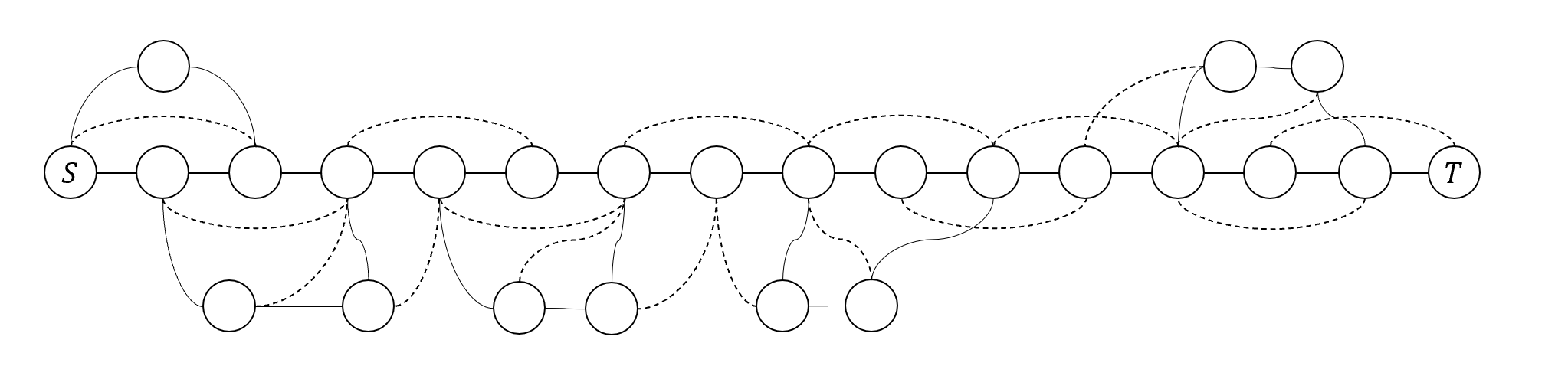}
    \caption{The chordal graph $G$.}
    \label{fig:main0}
\end{figure}

Figure \ref{fig:main1} shows a chordal graph $G$, where dashed edges have length 100 and other edges have length 1. The path $P$ is the horizontal chain in bold from $S$ to $T$ in the middle. 

\begin{figure}
    \includegraphics[width=1\textwidth]{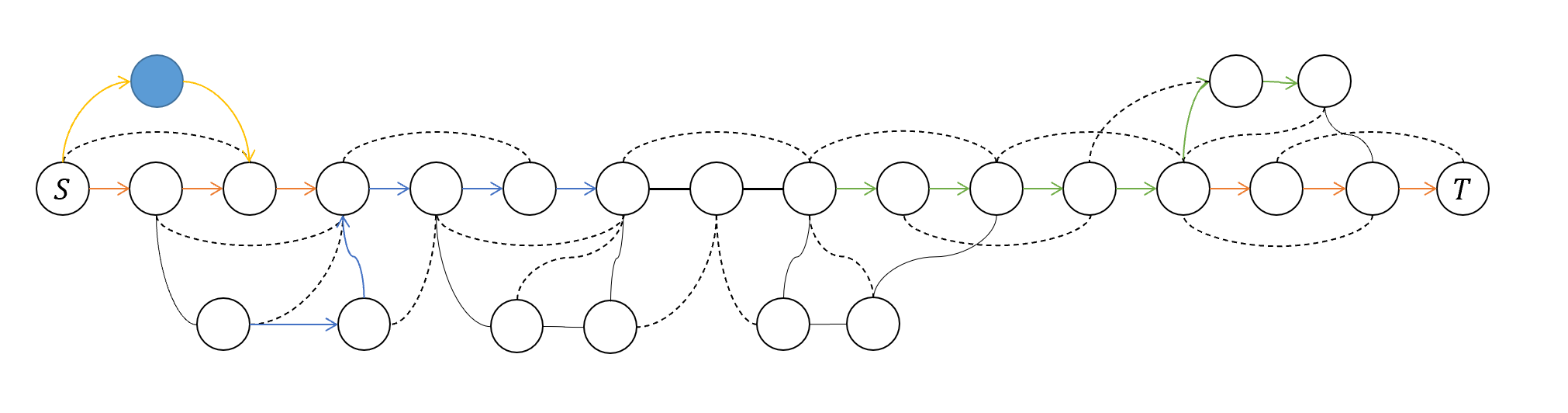}
    \caption{Bad vertices, path $P$ and $X_{ST}$.}
    \label{fig:main1}
\end{figure}

As shown in Figure \ref{fig:main1}, the only bad vertex in the graph is filled in light blue, since it is the middle vertex of the separator path in yellow which is a traversable separator path of length 2. The path $P$ is the horizontal chain from $S$ to $T$ in the middle. The two separator paths in orange are both $S$-separator paths and $T$-separator paths. The separator path in light blue is an $S$-separator path and the separator path in green is a $T$-separator path. $X_{ST}$ contains the set of $S$-separator paths and $T$-separator paths (i.e. all separator paths shown except the one in yellow).

\begin{figure}
    \includegraphics[width=1\textwidth]{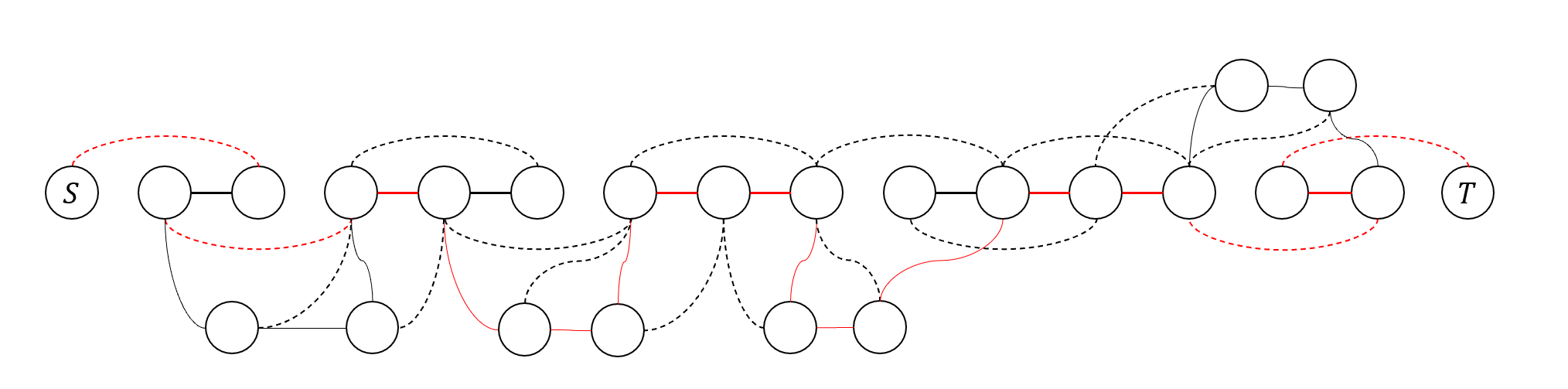}
    \caption{The graph $G_0$ and the path $P_0$.}
    \label{fig:main2}
\end{figure}

Figure \ref{fig:main2} shows the graph $G_0$, the maximal subgraph of $G$ that does not contain a bad vertex, the tail of an $S$-separator path or the head of a $T$-separator path, and $P_0$, the shortest path from $S$ to $T$ in $G_0$ in red.

\begin{figure}
    \includegraphics[width=1\textwidth]{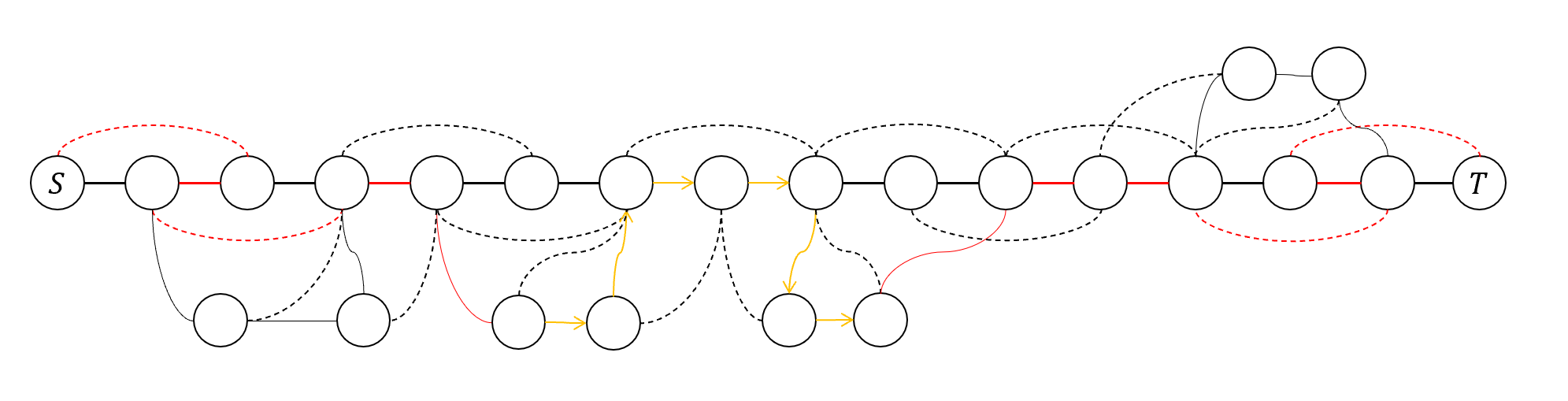}
    \caption{$P_0$ shown in $G$ and $X_{\textrm{EXTRA}}$.}
    \label{fig:main3}
\end{figure}

Figure \ref{fig:main3} shows the path $P_0$ in $G$ in red, and it contains an extra normal separator path in yellow.

\begin{figure}
    \includegraphics[width=1\textwidth]{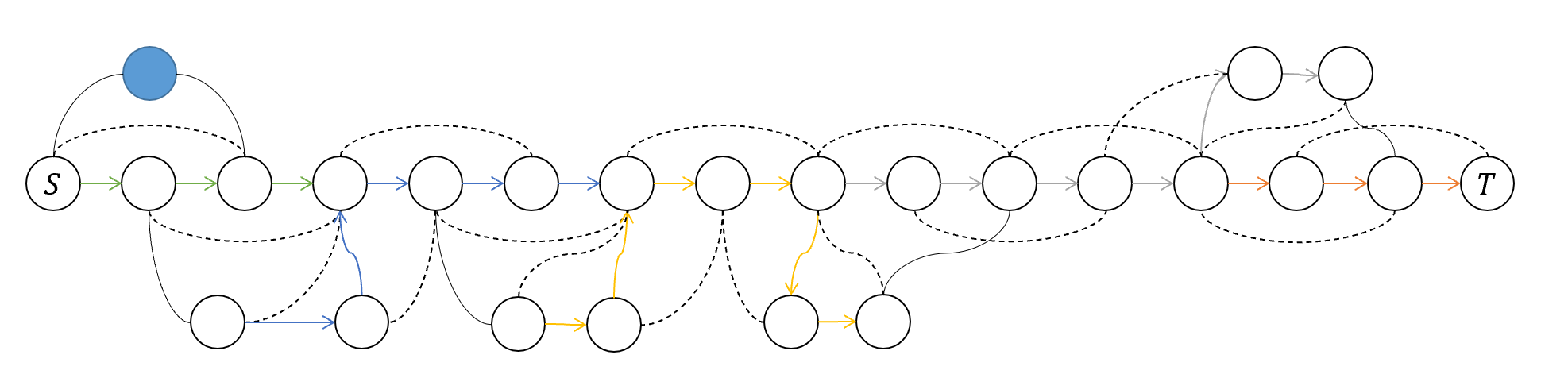}
    \caption{The graph $G$, bad vertices and the set $X$.}
    \label{fig:main4}
\end{figure}

Figure \ref{fig:main4} shows graph $G$, the bad vertex in light blue and the set $X$, the only separator path in $X_{\textrm{EXTRA}}$ is the one in yellow and the other separator paths are in $X_{ST}$. The algorithm then proceeds to calculate $\textrm{AVOID}(X)$ which will give us the answer.

The pseudocode is given below, where $E(P)$ denotes the set of edges on $P$: 
 
\begin{algorithm}
    \caption{Main Algorithm}
    \begin{algorithmic}[1]
        \Require Chordal Graph $G = \langle V, E\rangle$, $S, T \in V (S \ne T)$
        \Ensure Shortest Non-Separating Path Between $S$ and $T$
        \State Calculate $\textrm{BAD}_G$, the set of bad vertices in $G$
        \State $G ^ {\prime} = \langle V ^ {\prime}, E ^ {\prime}\rangle \gets$ the subgraph of $G$ induced by $V \backslash \textrm{BAD}_G$
        \State $P \gets$ the shortest path from $S$ to $T$ in $G ^ {\prime}$
        \State $X_{ST} \gets$ the set of $S$-separator paths and $T$-separator paths
        \State Let $G_0$ be the induced subgraph of $E ^ {\prime} \backslash (\{\textrm{TAIL}(r) \mid r\textrm{ is an }S\textrm{-separator path}\}) \backslash (\{\textrm{HEAD}(r) \mid r\textrm{ is an }T\textrm{-separator path}\})$ of $G ^ {\prime}$.
        \State $P_0 \gets$ the shortest path from $S$ to $T$ in $G_0$
        \State $X_{\textrm{EXTRA}} \gets $ the set of normal separator paths in $G$ that $P_0$ contains
        \State $X \gets X_{ST} \cup X_{\textrm{EXTRA}}$ \\
        \Return $\textrm{AVOID}(X)$.
    \end{algorithmic}
\end{algorithm}

\newpage

\section{Properties of Separator Paths and Proof}

\subsection{An Important Lemma and an Important Theorem}

We have already seen Theorem \ref{theoadjacent}, which states that for a separator path $r$, for any $0 \le i < |r| - 2$, the vertex $r_i$ and $r_{i + 2}$ are adjacent. In fact, more properties of separator paths exist.

Let $p$ be a simple path. Two vertices $u$ and $v$ are called weakly $p$-connected if there exists a path between $u$ and $v$ that does not share a common edge with $p$. $u$ and $v$ are called strongly $p$-connected if there exists a path $p ^ {\prime}$ between $u$ and $v$ such that except at the endpoints, $p ^ {\prime}$ does not share a common vertex with $p$, and $p ^ {\prime}$ does not share a common edge with $p$. We have the following important lemma and theorem:

\begin{lemma} \label{lemmapathseparator1}
    Let $r$ be a separator path. For all $0 \le i < |p| - 2$, let $j$ be the minimum index such that $i < j < |p|$ and $r_i$ and $r_j$ are weakly $r$-connected, then $j = i + 2$. For $i \ge |p| - 2$, no such index exists.
\end{lemma}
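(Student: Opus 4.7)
The plan is to split the lemma into two complementary claims and prove them in turn. Claim (i): for every $0 \le i \le |r|-3$, the vertices $r_i$ and $r_{i+2}$ are weakly $r$-connected. Claim (ii): for every $0 \le i \le |r|-2$, the vertices $r_i$ and $r_{i+1}$ are \emph{not} weakly $r$-connected. Together these give $j = i+2$ when $0 \le i < |r|-2$, and ``no such $j$ exists'' when $i \ge |r|-2$ (the case $i = |r|-1$ being vacuous).

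Claim (i) is immediate from Theorem \ref{theoadjacent}: $r_i$ and $r_{i+2}$ are adjacent, and because $r$ is a simple path its edge set consists precisely of the consecutive-index pairs $(r_k,r_{k+1})$, so the edge $(r_i,r_{i+2})$ is not in $E(r)$. This single edge is therefore a path in $G\setminus E(r)$ witnessing weak $r$-connectivity.

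Claim (ii), the substantive part, I would handle by contradiction. Suppose $r_i$ and $r_{i+1}$ lie in the same connected component of $G\setminus E(r)$ for some $i$, and write $\sim$ for this equivalence relation on vertices. Claim (i) already gives $r_k \sim r_{k+2}$ for every valid $k$, so by transitivity all even-indexed vertices of $r$ lie in one component and all odd-indexed vertices lie in one component. The supposed extra relation $r_i \sim r_{i+1}$ bridges the two parity classes, forcing every vertex of $r$ to lie in a single component $C$ of $G\setminus E(r)$. Because $r$ is separating, $G\setminus E(r)$ has at least two components, so I may pick $v \notin C$. Since $G$ is connected there is a walk from $v$ to $r_0$, and the first edge of this walk that enters $C$ has one endpoint outside $C$ and one inside; this edge must belong to $E(r)$, for otherwise it would survive in $G\setminus E(r)$ and merge two distinct components. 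But every edge in $E(r)$ has both endpoints in $\{r_0,\dots,r_{|r|-1}\}\subseteq C$, a contradiction.

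The main subtlety is verifying that the chord edges $(r_k,r_{k+2})$ produced by Theorem \ref{theoadjacent} genuinely survive in $G\setminus E(r)$, which rests on $r$ being simple so that its edges only join consecutive-index vertices. It is worth noting that this argument does not use the full sub-path minimality built into the definition of separator path; only the facts that $E(r)$ separates $G$ and that the chord adjacencies from Theorem \ref{theoadjacent} hold are used.
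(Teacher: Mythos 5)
Your proposal is circular. Claim (i) is justified entirely by an appeal to Theorem \ref{theoadjacent}, but in this paper Theorem \ref{theoadjacent} is itself deduced \emph{from} Lemma \ref{lemmapathseparator1}: its proof opens with ``from Lemma \ref{lemmapathseparator1}, $r_i$ and $r_{i+2}$ are strongly $r_{i,i+2}$-connected'' and invokes the lemma a second time to finish. The adjacency of $r_i$ and $r_{i+2}$ is not an a priori property of separator paths; it is a nontrivial consequence of chordality that only becomes available \emph{after} this lemma is proved, and it fails for non-minimal separating paths (which is where the sub-path minimality you claim not to need is actually hiding). Since your Claim (ii) rests on Claim (i) --- you need $r_k\sim r_{k+2}$ for all $k$ before the single relation $r_i\sim r_{i+1}$ can collapse $V(r)$ into one component --- the whole argument assumes what is to be proved. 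The substantive content of the lemma is exactly what is absent from your write-up: (a) minimality of the separator path forces $r_0$ and $r_1$, and $r_{|r|-2}$ and $r_{|r|-1}$, to be weakly $r$-disconnected (otherwise a proper sub-path of $r$ is already separating); (b) a chordality argument on the cycle formed by $r_{i,j}$ together with a shortest connecting path shows that the \emph{first} index $j$ weakly $r$-connected to $r_i$ cannot satisfy $j-i>2$; and (c) the fact that $G\setminus E(r)$ has exactly two components (via Lemma \ref{minsepconnected}) rules out $r_i\sim r_{i+1}$ in general. None of these steps, nor substitutes for them, appear in the proposal.

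On the positive side, the mechanism in your Claim (ii) is correct and is arguably cleaner than the paper's case analysis: one parity-bridging relation puts all of $V(r)$ into a single component $C$ of $G\setminus E(r)$, and a separating edge set all of whose endpoints lie inside one component of the complement cannot in fact separate the graph (your ``first edge entering $C$'' argument is sound). If you first supplied an independent, non-circular proof that $r_i$ and $r_{i+2}$ are weakly $r$-connected --- which in the paper's development requires essentially the chordality and minimality arguments sketched above --- your route would yield the lemma with less bookkeeping than the paper's. As it stands, the key step is missing and the proof does not go through.
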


\begin{theorem} \label{pathseparatorprop1}
    Let $r$ be a separator path. For all $0 \le i < j < |r|$, $r_i$ and $r_j$ are strongly $r$-connected if and only if $j - i = 2$.
\end{theorem}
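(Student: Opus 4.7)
My plan is to prove the two directions separately. For the easier ($\Leftarrow$) direction, Theorem \ref{theoadjacent} gives that $r_i$ and $r_{i+2}$ are adjacent. The edge $r_i r_{i+2}$ cannot lie on $r$ because $r$ is simple and its edges join only consecutive vertices, so this single edge exhibits a path of length one from $r_i$ to $r_{i+2}$ with no inner vertex that shares no edge with $r$, witnessing the strong $r$-connectivity.

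For the ($\Rightarrow$) direction, suppose $r_i$ and $r_j$ are strongly $r$-connected. Strong implies weak, so Lemma \ref{lemmapathseparator1} already yields $j - i \ge 2$; the task is to rule out $j - i \ge 3$. My first step is a parity observation: by Theorem \ref{theoadjacent} every edge $r_k r_{k+2}$ lies in $G \backslash E_r$ (where $E_r$ denotes the edge set of $r$), so the even-indexed vertices of $r$ all lie in one connected component of $G \backslash E_r$ and the odd-indexed ones all lie in another; by Lemma \ref{lemmapathseparator1} applied to $i = 0$ these two components are distinct. Consequently, any two vertices of $r$ that are weakly $r$-connected must have indices of the same parity, which rules out all odd values of $j - i \ge 3$.

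To handle the remaining even values $j - i \ge 4$, I will argue by strong induction on $j - i$. Let $p'$ be a strong $r$-connection from $r_i$ to $r_j$ with the minimum number of edges, and build the following ``odd-skipping'' cycle $C^*$ in $G$: begin at $r_i$, traverse the edge $r_i r_{i+1}$, then follow the Theorem \ref{theoadjacent} chords $r_{i+1} r_{i+3}, r_{i+3} r_{i+5}, \ldots, r_{j-3} r_{j-1}$, then the edge $r_{j-1} r_j$, and finally close the loop by traversing $p'$ in reverse back to $r_i$. All required edges exist and $C^*$ has length at least $4$, so by chordality it admits a chord. I will show each possible chord is ruled out: a chord between two $r$-vertices of $C^*$ either has odd index difference (forbidden by the parity observation) or even index difference at least $4$, in which case it is a single-edge strong $r$-connection to which the induction hypothesis applies --- the only exception is $r_i r_j$, which would force $|p'| = 2$ by minimality and then becomes an edge of $C^*$ rather than a chord; a chord from an inner $r$-vertex of $C^*$ to any inner vertex of $p'$ bridges different parities in $G \backslash E_r$ and is forbidden by parity; a chord from an endpoint $r_i$ or $r_j$ to a non-neighbouring inner vertex of $p'$ would let us shorten $p'$, contradicting minimality; and a chord between two non-adjacent inner vertices of $p'$ similarly shortens $p'$.

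The contradiction with chordality closes the induction. The main obstacle I anticipate is in the case analysis, specifically in verifying that when $|p'|$ is large and $j - i$ is large, the even-difference chord candidates among the $r$-vertices of $C^*$ (for instance, $r_{i+1} r_{j-1}$ when $j - i \ge 6$) all fall strictly below $j - i$ so that the induction hypothesis applies cleanly; threading parities through these reductions and matching up ``minimality of $p'$'' against the various potential shortening patterns requires care.
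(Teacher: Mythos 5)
Your proof is correct and follows essentially the same strategy as the paper's: the $\Leftarrow$ direction via Theorem \ref{theoadjacent}, the parity restriction extracted from Lemma \ref{lemmapathseparator1}, and a chordality contradiction on a cycle assembled from the Theorem \ref{theoadjacent} chords together with an extremal connecting path, whose chords are eliminated by parity and minimality. The only real difference is organizational: you run the ``odd-skipping'' cycle (the paper's $j-i=4$ construction) uniformly inside a strong induction on $j-i$, whereas the paper selects a minimal offending pair and switches to the even-indexed chain $r_i, r_{i+2},\dots,r_j$ when $j-i>4$; the underlying case analysis is the same.
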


\begin{figure}
    \includegraphics[width=1\textwidth]{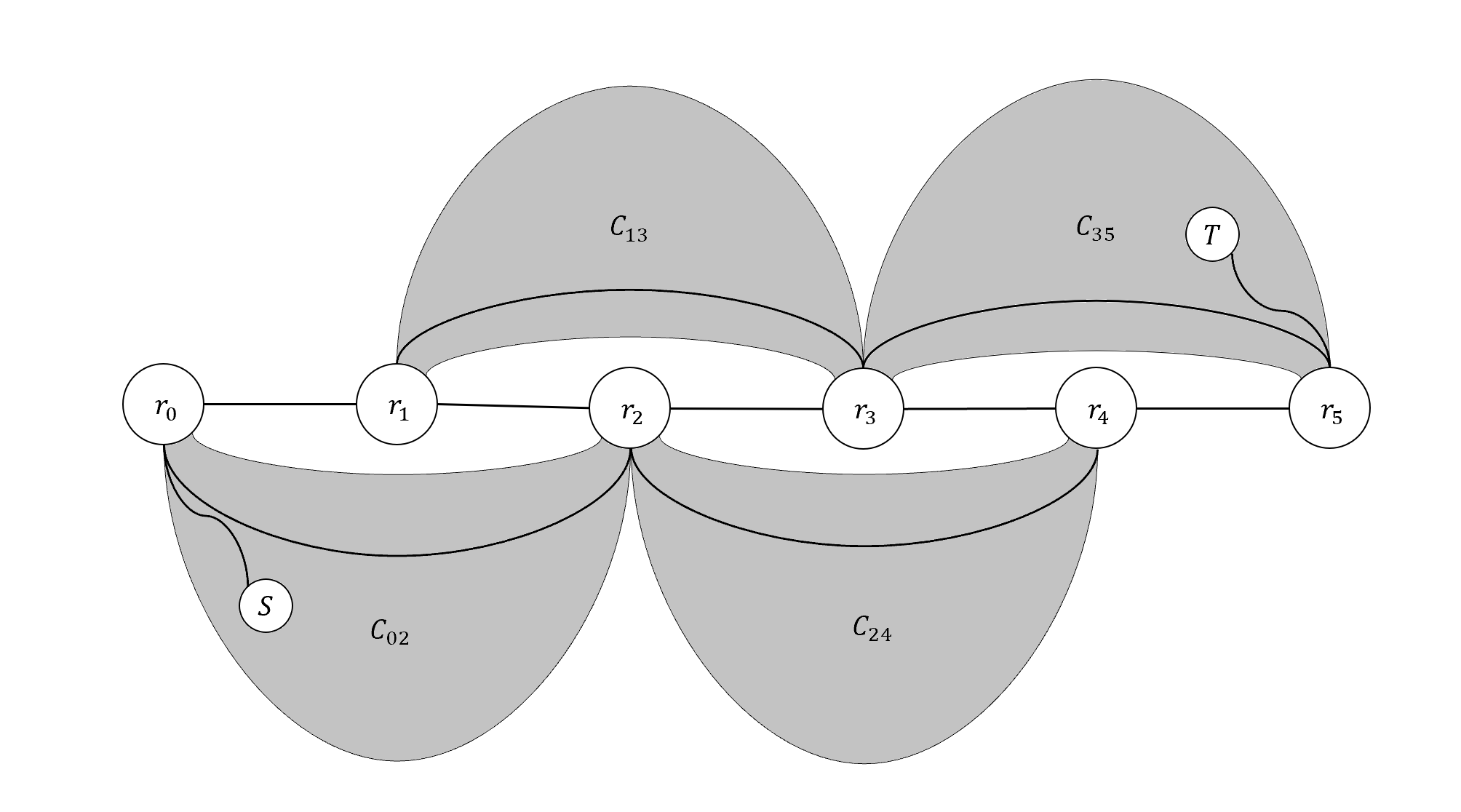}
    \caption{Separator path visualization.}
    \label{fig:separator}
\end{figure}

To help understand these properties of separator paths, we have a visualization for a separator path $r$ with 6 vertices in Figure \ref{fig:separator}. Recall that due to Theorem \ref{theoadjacent}, There is an edge between $r_i$ and $r_{i + 2}$ for $0 \le i < |r| - 2$. we can see that the relation of $r$ to the graph is as follows. There are five components $C_{02}, C_{13}, C_{24}, C_{35}$. For component $C_{uv}$, all vertices in $C_{uv}$ can not be strongly $r$-connected to any vertices on $r$ except $r_u$ and $r_v$, and. If a vertex is only strongly $r$-connected to a single $r_x$ on $r$, add the vertex to an arbitrary component involving $r_x$. For example if $y$ is strongly $r$-connected to $r_2$ only, $y$ can be in either $C_{02}$ or $C_{24}$. $C_{uv}$ also include edges from vertices inside it to $r_u$ and $r_v$, and the edge between $r_u$ and $r_v$. Moreover, it is not hard to see that if $r$ is traversable, $S \in C_{02}$ and $T \in C_{35}$. Moreover, there exists a path from $S$ to $r_0$ which is entirely inside $C_{02}$ and a path from $r_5$ to $T$ which is entirely inside $C_{35}$. Note that $S$ may or may not be strongly $r$-connected to $r_2$ and the same goes for $T$ and $r_3$. For any separator path in a connected chordal graph, one can draw a diagram like this.

Theorem \ref{lemmapathseparator1} has many corollaries.

\begin{cor} \label{corollary:innervisit}
    If $r$ is a normal separator path, and $p$ is a path from $S$ to $T$. At least one inner vertex of $r$ is on $p$
\end{cor}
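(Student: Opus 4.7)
The plan is to argue by contradiction via Theorem~\ref{pathseparatorprop1}. Assuming $p$ visits no inner vertex of $r$, I would first reduce to the case that $p$ is simple (else replace $p$ by a simple $S$-to-$T$ sub-walk, whose vertex set is a subset of $p$'s), and observe that traversability of $r$ forces $S$ to be either $r_0$ or off $r$ altogether (likewise $T$ and $r_{|r|-1}$): a simple $S$-to-$T$ path cannot begin or end at an interior position of a subpath it contains. Since $r$ is normal we have $|r| \ge 4$, so every edge of $r$ is incident to at least one inner vertex of $r$; consequently $p$ uses no edge of $r$.

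Next I would use traversability to fix a simple $S$-to-$T$ path $q$ containing $r$ as a subpath, and let $\alpha$ be its prefix from $S$ to $r_0$ and $\beta$ its suffix from $r_{|r|-1}$ to $T$. Because $q$ is simple, $\alpha$ and $\beta$ each meet $r$ only at their single $r$-endpoint and use no edge of $r$. Concatenating $\overline{\alpha}$, $p$ and $\beta$ (where $\overline{\alpha}$ is $\alpha$ reversed) produces a walk $W$ from $r_0$ to $r_{|r|-1}$ that uses no edge of $r$ and visits no $r$-vertex other than possibly $r_0$ and $r_{|r|-1}$. Extracting a simple sub-path $\pi$ of $W$ from its first vertex to its last (by the standard shortcutting of repeated vertices) gives a simple path from $r_0$ to $r_{|r|-1}$ whose vertex set sits inside $W$'s; in particular $\pi$ still avoids every edge of $r$ and touches $r$ only at its two endpoints. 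This makes $\pi$ a witness that $r_0$ and $r_{|r|-1}$ are strongly $r$-connected.

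The conclusion is then immediate from Theorem~\ref{pathseparatorprop1} applied with $i = 0, j = |r|-1$: strong $r$-connectedness would force $|r|-1 = 2$, contradicting $|r| \ge 4$. The main obstacle to clean up is the sub-path extraction step, since $p$ may enter $W$ through $r_0$ or exit through $r_{|r|-1}$, causing $W$ to revisit those vertices; shortcutting past repeated vertices handles this transparently while preserving the ``no inner $r$-vertex'' property of $W$.
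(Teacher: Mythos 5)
Your proposal is correct and follows the same route as the paper: the paper's one-line proof asserts directly that if $p$ misses every inner vertex of $r$ then $r_0$ and $r_{|r|-1}$ are strongly $r$-connected, contradicting Theorem~\ref{pathseparatorprop1} since $|r|>3$, and your argument simply fills in the concatenation-and-shortcutting details (using traversability to supply the $S$-to-$r_0$ and $r_{|r|-1}$-to-$T$ connectors) that the paper leaves implicit. The only nitpick is that you should concatenate with $\overline{\beta}$ rather than $\beta$ to get a walk oriented from $r_0$ to $r_{|r|-1}$, which is immaterial in an undirected graph.
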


\begin{proof}
	if $p$ does not visit an inner of $r$, then $r_0$ and $r_{|r| - 1}$ are strongly $r$-connected. Since $|r| > 3$, this violates Theorem \ref{pathseparatorprop1}.
\end{proof}

\begin{cor}\label{noreverse}
    Let $r$ be a normal separator path, then $r_0$ and $T$ are not strongly $r$-connected. $r_{|r| - 1}$ and $S$ are not strongly $r$-connected. Consequently $S$ and $T$ are not strongly $r$-connected. Consequently the reverse of $r$ is not a normal separator path.
\end{cor}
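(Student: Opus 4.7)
The plan is to derive all four statements by contradiction from Theorem \ref{pathseparatorprop1} applied to the pair $(r_0, r_{|r|-1})$: since $r$ is normal we have $|r| > 3$, so Theorem \ref{pathseparatorprop1} says $r_0$ and $r_{|r|-1}$ cannot be strongly $r$-connected. Traversability of $r$ will supply strong $r$-connections between $S$ and $r_0$ and between $r_{|r|-1}$ and $T$, and the strategy is to splice any hypothesized connection of the corollary into these two witnesses to manufacture a forbidden strong $r$-connection between $r_0$ and $r_{|r|-1}$.

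First I would tackle the claim that $r_0$ and $T$ are not strongly $r$-connected. Fix a simple path $P^{*}$ from $S$ to $T$ that contains $r$ (guaranteed by traversability), and consider the suffix of $P^{*}$ starting at $r_{|r|-1}$; simplicity of $P^{*}$ immediately gives that all interior vertices of this suffix lie off $r$, and its first edge leaves $r_{|r|-1}$ toward a vertex off $r$, so no edge of the suffix is an edge of $r$. This shows $r_{|r|-1}$ and $T$ are strongly $r$-connected. If one assumes $r_0$ and $T$ are strongly $r$-connected via a path $q$, concatenating $q$ with the reverse of this suffix yields a walk from $r_0$ to $r_{|r|-1}$ whose interior vertices (including $T$ itself, which is off $r$ outside of the degenerate case $T = r_{|r|-1}$) all avoid $r$ and whose edges avoid $r$. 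Extracting a simple subwalk then witnesses that $r_0$ and $r_{|r|-1}$ are strongly $r$-connected, contradicting Theorem \ref{pathseparatorprop1}. The symmetric argument using the prefix of $P^{*}$ ending at $r_0$ establishes that $r_{|r|-1}$ and $S$ are not strongly $r$-connected.

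With those two claims in hand, the third follows by the same splicing idea: a hypothetical strong $r$-connecting path from $S$ to $T$ can be glued to the prefix of $P^{*}$ witnessing that $S$ and $r_0$ are strongly $r$-connected, and one extracts a simple $r_0$-to-$T$ path that avoids $r$, contradicting the first claim. For the last sentence, suppose the reverse $r'$ of $r$ were itself a normal separator path; then traversability of $r'$ produces a simple path from $S$ to $T$ that visits $r_{|r|-1}, r_{|r|-2}, \ldots, r_0$ in this order, and its prefix from $S$ to $r_{|r|-1}$ witnesses (by the very same reasoning as above) that $S$ and $r_{|r|-1}$ are strongly $r$-connected, contradicting the second claim.

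The main obstacle I anticipate is really just careful bookkeeping rather than any conceptual hurdle: one must handle the degenerate situations $S = r_0$ and $T = r_{|r|-1}$ (where the relevant prefix or suffix of $P^{*}$ collapses to a single vertex but still serves as a trivial strongly $r$-connecting path), and must verify that an intermediate vertex such as $T$ of the spliced walk cannot accidentally sit on $r$. The latter holds because $P^{*}$ is simple and $|r| > 3$, which forces any vertex of $P^{*}$ that lies on $r$ to be either $r_0$ or $r_{|r|-1}$, covered by the degenerate cases.
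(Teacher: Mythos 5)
Your proposal is correct and follows essentially the same route as the paper: use traversability to witness that $r_{|r|-1}$ and $T$ (resp.\ $S$ and $r_0$) are strongly $r$-connected, then splice a hypothesized connection into that witness to produce a strong $r$-connection between $r_0$ and $r_{|r|-1}$, which is forbidden by Theorem \ref{pathseparatorprop1} since $|r|>3$. You simply spell out the degenerate cases and the extraction of a simple subpath that the paper's terser proof leaves implicit.
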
 

\begin{proof}
    We know $r_{|r| - 1}$ and $T$ are strongly $r$-connected and $|r| > 3$, if $r_0$ and $T$ are strongly $r$-connected then $r_0$ and $r_{|r| - 1}$ are strongly $T$ connected, which violates Theorem \ref{pathseparatorprop1}.
    
    By the same reasoning, $r_{|r| - 1}$ and $S$ are not strongly $r$-connected.
\end{proof}

\begin{cor} \label{corollary:normalvisit}
    If $r$ is a normal separator path, for all $2 \le i < |r| - 1$, all paths from $S$ to $T$ must visit at least one of $r_{i - 1}$ and $r_i$.
\end{cor}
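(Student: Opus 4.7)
The plan is to argue by contradiction. Suppose $p$ is a path from $S$ to $T$ that visits neither $r_{i-1}$ nor $r_i$. By shortcutting we may assume $p$ is simple. Let $r_{j_1}, r_{j_2}, \ldots, r_{j_k}$ be the vertices of $r$ that lie on $p$, listed in the order of appearance along $p$.

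First, I would pin down $j_1$ and $j_k$ using the component structure of $r$. Since $S \in C_{02}$, $S$ is strongly $r$-connected only to $r_0$ and $r_2$; the initial segment of $p$ from $S$ to $r_{j_1}$ contains no intermediate $r$-vertex by the choice of $j_1$, hence no $r$-edge, so it directly witnesses strong $r$-connectivity and forces $j_1 \in \{0,2\}$. A symmetric argument gives $j_k \in \{|r|-3, |r|-1\}$. A short case check then shows that after removing the avoided pair $\{i-1,i\}$, the surviving choices yield $j_1 \le i-2$ and $j_k \ge i+1$ for every $2 \le i < |r|-1$.

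Second, I would bound the jump size between consecutive $r$-visits. The sub-path of $p$ from $r_{j_l}$ to $r_{j_{l+1}}$ has no internal $r$-vertex. If it has length at least $2$, then it also uses no $r$-edge (such an edge would force an internal $r$-vertex), so it strongly $r$-connects $r_{j_l}$ to $r_{j_{l+1}}$, and Theorem \ref{pathseparatorprop1} gives $|j_l - j_{l+1}| = 2$. If it is a single edge, that edge is either an $r$-edge with $|j_l - j_{l+1}| = 1$, or a non-$r$-edge between two $r$-vertices which again witnesses strong $r$-connectivity and yields $|j_l - j_{l+1}| = 2$.

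The contradiction is then purely combinatorial: the sequence $j_1,\ldots,j_k$ lies in $\{0,\ldots,|r|-1\} \setminus \{i-1,i\}$, consecutive terms differ by at most $2$, and the sequence must cross from the left block $\{0,\ldots,i-2\}$ to the right block $\{i+1,\ldots,|r|-1\}$. Any single step attempting this crossing would have to skip over both consecutive forbidden indices $i-1$ and $i$, which requires a jump of size at least $3$. The only real bookkeeping I expect is in the boundary values $i \in \{2,3,|r|-3,|r|-2\}$, where the sets $\{0,2\}$ or $\{|r|-3,|r|-1\}$ can intersect the avoided pair; in each of these cases the remaining choice still lands on the correct side of the cut, so the contradiction goes through uniformly.
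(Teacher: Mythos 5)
Correct, and essentially the paper's own argument: both proofs locate visited vertices of $r$ on each side of the avoided pair $\{r_{i-1}, r_i\}$ and then invoke Theorem \ref{pathseparatorprop1} on a strongly $r$-connected pair with index gap at least $3$; your consecutive-jump bookkeeping (each step between successive $r$-visits changes the index by at most $2$) is just a more explicit, and in fact more careful, version of the paper's ``last $r_u$ with $u<i-1$ / first $r_v$ with $v>i$'' step. The only addition needed is a word on the degenerate case where $p$ meets no vertex of $r$ at all, so that $j_1$ is undefined --- there $p$ itself strongly $r$-connects $S$ to $T$, hence $r_0$ to $r_{|r|-1}$, which Corollary \ref{corollary:innervisit} (equivalently, your own first step applied to the whole of $p$) already rules out.
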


\begin{proof}
	Let $p$ be a path from $S$ to $T$. From Corollary \ref{corollary:innervisit} we know $p$ must visit an inner vertex on $r$. Suppose neither $r_{i - 1}$ nor $r_i$ is visited. If no inner vertex that $p$ visits precedes $r_{i - 1}$ on $r$, then the first inner vertex that $p$ visits is $r_k$ for $k > 2$, but $r_k$ and $r_0$ will be strongly $r$-connected and this violates Theorem \ref{pathseparatorprop1}. Similarly $p$ must visit an inner vertex after $r_i$. Find the last $r_u (u < i - 1)$ that $p$ visits and the first $r_v (v > i)$ that $p$ visits, and $r_u$ and $r_v$ are strongly $r$-connected. Since $v - u > 2$, this violates Theorem \ref{pathseparatorprop1}.
\end{proof}

\begin{cor} \label{corollary:firstvisit}
    If $r$ is a normal separator path, and $p$ is a path from $S$ to $T$. The first vertex on $r$ that $p$ visits is either $r_0$ or $r_2$, and the last vertex on $r$ that $p$ visits is either $r_{|r| - 3}$ or $r_{|r| - 1}$.
\end{cor}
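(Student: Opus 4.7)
\begin{proofsketch}
The plan is to reduce the claim to Theorem \ref{pathseparatorprop1}, which characterizes strong $r$-connectedness on $r$ as exactly ``index difference two.'' Concretely, I will argue that if $r_k$ is the first vertex of $r$ visited by $p$ and $k \ne 0$, then $r_0$ and $r_k$ must be strongly $r$-connected, so $k = 2$; the assertion about the last vertex follows by the identical argument applied in reverse.

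First I would invoke normality: because $r$ is traversable, there exists a simple $S$-$T$ path $P$ in $G$ that contains $r$ as a contiguous subpath. Let $q$ be the prefix of $P$ from $S$ up to $r_0$. Since $P$ is simple and continues through $r_1, r_2, \ldots, r_{|r|-1}$ after $r_0$, the subpath $q$ meets $r$ only at its final vertex $r_0$ and uses no edge of $r$. This step is where the normality hypothesis is doing real work: merely knowing $S \in C_{02}$ would yield a path inside $C_{02}$ which could still pass through $r_2$, and that would only give $k \in \{0,2,4\}$.

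Next, let $p'$ be the prefix of $p$ that ends at the first occurrence of $r_k$; by the choice of $k$, it touches $r$ only at $r_k$ and uses no edge of $r$. Concatenating the reverse of $q$ with $p'$ then produces a walk from $r_0$ to $r_k$ whose only intersections with $r$ are its endpoints and which avoids every edge of $r$. Extracting a simple path from this walk furnishes a strong $r$-connection between $r_0$ and $r_k$, and Theorem \ref{pathseparatorprop1} forces $k = 2$.

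For the last vertex of $r$ visited, I would repeat the construction using the suffix of $P$ from $r_{|r|-1}$ to $T$ together with the suffix of $p$ beginning at its last $r$-vertex $r_\ell$: assuming $\ell \ne |r|-1$, the same concatenation exhibits a strong $r$-connection between $r_\ell$ and $r_{|r|-1}$, so Theorem \ref{pathseparatorprop1} forces $\ell = |r|-3$. The only real obstacle throughout is the one already flagged, namely guaranteeing that the detour $q$ truly avoids all interior vertices of $r$, and that is precisely what traversability of normal separator paths provides for free.
\end{proofsketch}
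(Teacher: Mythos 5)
Your proposal is correct and follows essentially the same route as the paper: show that if the first visited vertex is $r_k$ with $k \ne 0$ then $r_0$ and $r_k$ are strongly $r$-connected, and invoke Theorem \ref{pathseparatorprop1} to force $k = 2$ (symmetrically for the last vertex). The paper states the strong $r$-connection in one line; you merely make explicit the witness (concatenating the prefix of a traversing path with the prefix of $p$), which is a faithful elaboration rather than a different argument.
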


\begin{proof}
	If the first vertex on $r$ that $p$ visits is $r_k$ where $k \ne 0$, then $r_k$ and $r_0$ are strongly $r$-connected. From Theorem \ref{pathseparatorprop1}, this is only possible if $k = 2$. The other direction is similar.
\end{proof}

\begin{cor} \label{traorder}
    Let $p$ be a simple path from $S$ to $T$ and let $r$ be a normal separator path. For $0 < i < |r| - 1 (i \ne 2)$, if $p_j = r_i (j < |p| - 1)$ and $p_{j + 1} \notin \{r_{i - 1}, r_{i + 1}\}$, then for all $k > j$ and $l \le i$, $p_k \ne r_l$.
\end{cor}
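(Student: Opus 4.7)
I plan to prove this by contradiction, by analyzing the sequence of visits that $p$ makes to vertices of $r$. Let $v_1 = r_{i_1}, v_2 = r_{i_2}, \ldots, v_t = r_{i_t}$ denote the subsequence of $p$ consisting of its vertices on $r$. Three structural constraints govern this sequence: by Corollary \ref{corollary:firstvisit}, $i_1 \in \{0,2\}$ and $i_t \in \{|r|-3, |r|-1\}$; between consecutive visits $v_c, v_{c+1}$ the intermediate vertices of $p$ lie off $r$, so Theorem \ref{pathseparatorprop1} forces $|i_c - i_{c+1}| \in \{1, 2\}$, with the value $1$ moreover forcing the connecting subpath to be exactly the $r$-edge between them; finally, the simplicity of $p$ makes the $i_c$ all distinct.

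Writing $v_a = r_i$ for the occurrence at position $j$, the hypothesis $p_{j+1} \notin \{r_{i-1}, r_{i+1}\}$ rules out using the $r$-edge at $v_a$, so $|i_a - i_{a+1}| = 2$ and $i_{a+1} \in \{i-2, i+2\}$. Assuming for contradiction that some later visit has index at most $i$, let $b > a$ be the earliest such. A short case analysis using minimality of $b$, the $\pm 1, \pm 2$ step sizes, and the fact that $r_i$ is already visited forces exactly one of two configurations: either (Case 1) $b = a+1$ with $i_{a+1} = i-2$ (which requires $i \ge 3$); or (Case 2) $b \ge a+2$ with $i_{a+1} = i+2$, $i_{b-1} = i+1$, and $i_b = i-1$.

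The contradiction in each case comes from a ``trapping'' argument. Once $r_i$ has been used, any future step of the walk from an index strictly below $i$ to one strictly above $i$ must either revisit $r_i$ or take the chord-step $r_{i-1} \to r_{i+1}$ through the component $C_{i-1, i+1}$, which requires both $r_{i-1}$ and $r_{i+1}$ to be unvisited. In Case 2, both of these are already used at $v_{b-1}$ and $v_b$, so the walk is trapped in $\{0, \ldots, i-2\}$ after $v_b$, contradicting $i_t \in \{|r|-3, |r|-1\}$ (since $i < |r|-1$). In Case 1 I would instead analyze the pre-history $v_1, \ldots, v_{a-1}$: distinctness forces $i_{a-1} \in \{i-1, i+1, i+2\}$; in the first two sub-cases $r_{i-1}$ or $r_{i+1}$ is already visited directly, and in the third sub-case the pre-history walks from an index in $\{0,2\}$ up to $r_{i+2}$, which (because $i \ge 3$) must at some earlier step cross level $i$, and since $r_i$ itself is unvisited before $v_a$, this forced crossing must use the chord-step $r_{i-1} \to r_{i+1}$ and therefore visits both $r_{i-1}$ and $r_{i+1}$. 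Either way the escape is blocked, the post-$v_{a+1}$ walk is trapped in $\{0, \ldots, i-2\}$, and we again contradict $i_t \in \{|r|-3, |r|-1\}$. The main obstacle is this pre-history crossing analysis in Case 1, which is also exactly where the hypothesis $i \ne 2$ plays its role: when $i = 2$, the pre-history can degenerate to the single visit $v_1 = r_2 = v_a$, no earlier crossing of level $2$ is required, and the escape chord $r_1 \to r_3$ genuinely remains available, so the conclusion fails.
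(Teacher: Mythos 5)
Your proposal is correct in substance and takes essentially the same route as the paper: split on whether the first $r$-vertex visited after $r_i$ is $r_{i-2}$ or $r_{i+2}$ (forced by Theorem \ref{pathseparatorprop1} since the hypothesis rules out the $r$-edges at $r_i$), then trap the walk below level $i$ and contradict Corollary \ref{corollary:firstvisit}; your Case 2 is the paper's ``latter case'' and your Case 1 pre-history analysis is an unpacking of the paper's terse claim that $r_{i-1}$ must already have been visited before $r_i$. One spot needs patching: in the sub-case $i_{a-1}=i+1$ of your Case 1, knowing only that $r_i$ and $r_{i+1}$ are visited traps the subsequent walk in $\{0,\dots,i-1\}$, not $\{0,\dots,i-2\}$, and when $i=|r|-2$ the walk could then legally terminate at $r_{i-1}=r_{|r|-3}$, so no contradiction with Corollary \ref{corollary:firstvisit} arises. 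The fix is exactly the crossing argument you already use for $i_{a-1}=i+2$: since $i\ge 3$, the pre-history must climb from index $0$ or $2$ to index $i+1$, cannot land on $r_i$ (first visited at $v_a$), and hence must take the step $r_{i-1}\to r_{i+1}$, so $r_{i-1}$ is already visited and the trap in $\{0,\dots,i-2\}$ holds after all. With that one sentence added, the proof is complete.
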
 

\begin{proof}
    If $p_{j + 1} \notin \{r_{i - 1}, r_{i + 1}\}$, then the next visit of a vertex on $r$ after $r_i$ on $p$, if it exists, must be either $r_{i - 2}$ or $r_{i + 2}$. In the former case, $p$ hasn't visited $r_{i - 2}$ before visiting $r_i$. We have $i \ge 2$ and $i \ne 2$ and therefore $i > 2$. Since $r_{i - 2}$ is not visited before $r_i$, one can see that $p$ must visit $r_{i - 1}$ before $r_i$. Now that we are currently at $r_{i - 2}$ and both $r_i$ and $r_{i - 1}$ are already visited, it will be impossible for us to reach $T$ again while the path remains simple. In the latter case, we have $i < |r| - 2$, and one can see that with $r_i$ already visited, in order to visit some $r_l (l < i)$ again, we need to visit $r_{i + 1}$ first. Now that we are at $r_l$, with $r_i$ and $r_{i + 1}$ both visited and $i < |r| - 2$, it will be impossible for us to reach $T$ again while the path remains simple.
\end{proof}

\subsection{Proof of Key Theorems}

First of all, we want to prove Theorem \ref{containsep}, which is dependent on the following lemma.

\begin{lemma} \label{minsepconnected}
    If $D$ is a minimal separating set of edges, then the induced subgraph of $D$ is connected.
\end{lemma}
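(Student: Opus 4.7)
The plan is to assume $D$ is a minimal separating set of edges whose induced subgraph is disconnected, and to derive a contradiction via a chord argument on a carefully chosen cycle. The proof uses chordality of $G$ in an essential way, since the analogous statement fails for general graphs (e.g., a $4$-cycle has a minimal $2$-edge separator consisting of two non-adjacent edges).

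First I would pin down the structure of $G \setminus D$. By minimality, $(G \setminus D) \cup \{e\}$ is connected for every $e \in D$. Two consequences follow: (a) no edge $e \in D$ can have both endpoints in the same component of $G \setminus D$, for otherwise $D \setminus \{e\}$ would still separate $G$, contradicting minimality; and (b) since adding back any single edge reconnects the whole graph, and one edge can merge at most two components, $G \setminus D$ consists of exactly two components, call them $A$ and $B$, and every edge of $D$ lies in the cut between them.

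Next I would pick two edges $e_1 = a_1 b_1$ and $e_2 = a_2 b_2$ of $D$ (with $a_i \in A$, $b_i \in B$) that sit in different connected components of the subgraph induced by $D$, and aim to contradict this. If $a_1 = a_2$ or $b_1 = b_2$ the two edges share a vertex and we are done, so assume the four vertices are distinct. Choose a shortest path $\pi_A$ from $a_1$ to $a_2$ inside $G[A]$ and a shortest path $\pi_B$ from $b_1$ to $b_2$ inside $G[B]$; both exist because $A$ and $B$ are connected subgraphs. Concatenating $e_1$, $\pi_A$, $e_2$, and the reverse of $\pi_B$ produces a cycle $C$ in $G$, whose vertices are all distinct and whose length satisfies $|C| \ge 4$.

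I would then proceed by strong induction on $|C|$. Chordality gives a chord $uv$ of $C$. A chord with both endpoints on $\pi_A$ would shortcut $\pi_A$, contradicting its minimality inside $G[A]$; the same argument rules out chords living inside $\pi_B$. Hence $u$ and $v$ lie on opposite sides, and because every edge of $G$ between $A$ and $B$ must lie in $D$ (otherwise $A$ and $B$ would not be different components of $G \setminus D$), we get $uv \in D$. The chord splits $C$ into two sub-cycles $C_1$ and $C_2$: $C_1$ contains the $D$-edges $e_1$ and $uv$, and $C_2$ contains $uv$ and $e_2$. Each sub-cycle is formed by two $D$-edges together with sub-paths of $\pi_A$ and $\pi_B$, and sub-paths of shortest paths are again shortest, so the inductive hypothesis applies. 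A direct count shows $|C_1|, |C_2| < |C|$, the only way equality could occur being if the chord coincided with $e_1$ or $e_2$, which is impossible since a chord is not an edge of the cycle. By induction $e_1$ and $uv$ lie in the same component of the $D$-induced subgraph, as do $uv$ and $e_2$, so $e_1$ and $e_2$ do as well, contradicting our initial choice. The base case $|C| = 4$ needs no induction: both $\pi_A$ and $\pi_B$ have a single edge and no interior vertex, so the unique chord must cross the $A$-$B$ cut and therefore directly links an endpoint of $e_1$ to an endpoint of $e_2$.

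The main obstacle is the inductive step, where everything hinges on proving that no chord can stay inside $\pi_A$ or inside $\pi_B$. This is precisely where the shortest-path choice of $\pi_A$ and $\pi_B$ is used, and it is also what forces the chord into $D$ and makes the two sub-cycles strictly shorter than $C$ while preserving the shortest-paths-in-$A$-and-$B$ structure required by the hypothesis.
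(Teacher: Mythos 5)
Your proof is correct and follows essentially the same route as the paper: establish that $G \setminus D$ has exactly two components $A$ and $B$ with every edge of $D$ (and only edges of $D$) crossing the cut, take two $D$-edges in different components of the induced subgraph of $D$, close them into a cycle using shortest paths inside $G[A]$ and $G[B]$, and invoke chordality. The one substantive difference is that the paper simply asserts the resulting cycle has no chord, whereas you correctly observe that a chord \emph{can} exist provided it crosses the cut --- in which case it is itself an edge of $D$ --- and you handle this by strong induction on the cycle length, splitting the cycle at the chord into two shorter cycles of the same form. This is a genuine improvement in rigor: a chord joining an interior vertex of $\pi_A$ to one of $\pi_B$ is not excluded by the shortest-path choice, so the paper's ``without a chord'' claim needs exactly the argument you supply (or, alternatively, a more careful extremal choice of the initial pair of edges). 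The only nitpick is that when you recurse, a sub-cycle may have its two $D$-edges sharing a vertex (e.g., when the chord is incident to $a_1$ or $b_1$); there the conclusion is immediate rather than an instance of the inductive hypothesis, and it would be worth saying so explicitly.
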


\begin{proof}[Proof of Lemma \ref{minsepconnected}]
    If $G \backslash D$ contains more than two connected components, we can take two connected components and the edges between them in $G$ will be a separating proper subset of $D$, which is a contradiction. Therefore $G \backslash D$ contains exactly two connected components.
    
    Let the two connected components of $G \backslash D$ be $C_A$ and $C_B$. If $D$ is disconnected, then we can find $u, x \in C_A (u \ne x)$ and $v, y \in C_B (v \ne y)$ such that $u$ and $v$ are adjacent in $G$, and so are $v$ and $y$.  Let $p_{ux}$ be the path between $u$ and $x$ with the minimum amount of edges, and $p_{vy}$ be such path between $v$ and $y$. The cycle formed by $p_{ux}$, $p_{vy}$, the edge between $u$ and $v$ and the edge between $x$ and $y$ will form a simple cycle of length at least four without a chord, which contradicts the fact that $G$ is chordal.
\end{proof}

\begin{proof}[Proof of Theorem \ref{containsep}]
    If $p$ contains a separator path then $p$ is obviously separating.

    If $p$ is separating, let $D$ be a minimally separating subset of the edges in $p$. If $D$ corresponds to $p_{i, j}$ for some $0 \le i < j < |p|$, then $p_{i, j}$ is a separator path. Otherwise $D$ is disconnected, which is impossible due to Lemma \ref{minsepconnected}.
\end{proof}

\begin{proof} [Proof of Theorem \ref{badnogood}]
    If $r$ is a traversable separator of length 2 and a path $p$ between $S$ and $T$ visits $r_1$, we can show that $p$ must contain the edge between $r_0$ and $r_1$. Suppose it does not. If $r_1$ is the first or the last vertex on $r$ that $p$ visits, then $r_1$ and either $S$ or $T$ are strongly $r$-connected, and therefore $r_1$ and either $r_0$ or $r_2$ are strongly $r$-connected, violating Theorem \ref{pathseparatorprop1}. Otherwise, $r_1$ must be either the last vertex on $r$ that $p$ visits before $r_1$ or the first vertex on $r$ that $p$ visits after $r_1$, and $r_0$ and $r_2$ are strongly $r$-connected, violating Theorem \ref{pathseparatorprop1}. By the same reasoning $p$ must contain the edge between $r_1$ and $r_2$. Therefore both edges on $r$ will be contained in the path $p$ and the path will not be non-separating.
\end{proof}

\begin{proof}[Proof of Lemma \ref{lemmapathseparator1}]

    If $i = |p| - 1$ then no index greater than $i$ exists.
    
    If $i = |p| - 2$ and $r_i$ and $r_{i + 1}$ are weakly $r$-connected, then since $r$ is separating, $r_{0, |r| - 2}$ is separating. Therefore $r$ isn't minimal, which is a contradiction. Similarly, we can conclude that $r_0$ and $r_1$ are not weakly $r$-connected.
    
    If $j - i > 2$ and $r_i$ and $r_j$ are weakly $r$ connected but for all $i < k < j$, $r_i$ and $r_k$ are not weakly $r$-connected, $r_i$ and $r_j$ are strongly $r_{i, j}$-connected. Consider the simple cycle consisting of $r_{i, j}$ and the path $p$ with the minimum amount of edges between $r_i$ and $r_j$ that do not share a common vertex with $r_{i, j}$ except for $p$'s endpoints. This cycle is simple and has a length greater than three and therefore must have a chord. By the way the path $p$ is chosen, the chord can't have both endpoints on $p$, and therefore an endpoint of the chord must be $r_k$ for some $i < k < j$. This makes $r_i$ and $r_k$ weakly $r$-connected, which is a contradiction.
    
    If $r_i$ and $r_{i + 1}$ are weakly $r$-connected for some $1 \le i < |p| - 1$, find the maximum $a < i$ such that $r_a$ and $r_i$ are not weakly $r$-connected and the minimum $b > i$ such that $r_b$ and $r_i$ are not weakly $r$-connected. Since $r_0$ and $r_1$, $r_{|r| - 2}$ and $r_{|r| - 1}$ are not weakly $r$-connected, both $a$ and $b$ exist. Let $R$ be the set of edges on $r$. From the fact that $R$ is a minimum separating set of edges, and from Lemma \ref{minsepconnected} we know $G \backslash R$ contains exactly two connected components. Therefore $r_a$ and $r_b$ are weakly $r$-connected, but since $b - a > 2$, this has been shown to be impossible.
\end{proof}

\begin{proof} [Proof of Theorem \ref{theoadjacent}]
	For $0 \le i < |r| - 2$, from Lemma \ref{lemmapathseparator1}, $r_i$ and $r_{i + 2}$ are strongly $r_{i, i + 2}$-connected. If $r_i$ and $r_{i + 2}$ are not adjacent, consider the simple cycle consisting of $r_{i, i + 2}$ and the path $p$ with the minimum amount of edges between $r_i$ and $r_{i + 2}$ that don't share a common vertex with $r_{i, i + 2}$ except the endpoints. The cycle will have at least four edges. Using a similar reasoning as the one in the proof of Lemma \ref{lemmapathseparator1}, an endpoint of the chord is $r_{i + 1}$. It is easy to see that the other endpoint will be on $p$ and the chord makes $r_{i + 1}$ and $r_i$ weakly $r$-connected, which is impossible from Lemma \ref{lemmapathseparator1}.
\end{proof}

\begin{proof} [Proof of Theorem \ref{pathseparatorprop1}]
	
	For $0 \le i < j < |r|$, if $j - i = 2$, then from Theorem \ref{theoadjacent}, $r_i$ and $r_j$ are strongly $r$-connected. 

	Reversely, if $r_i$ and $r_j$ are strongly $r$-connected and therefore weakly $r$-connected, from Lemma \ref{lemmapathseparator1} we can easy see that $j - i$ is even. Let $i$ and $j$ be such that for there exists no $i \le k < l \le j$ such that $(i, j) \ne (k, l)$ and $r_k$ and $r_l$ are strongly $r$-connected.
    
    If $j - i > 4$, consider the cycle consisting of the edge between $r_i$ and $r_{i + 2}$, the edge between $r_{i + 2}$ and $r_{i + 4}$, $\cdots$, the edge between $r_{j - 2}$ and $r_j$, and the path $p$ with the minimum amount of edges between $r_i$ and $r_j$ that don't share a common vertex with $r$ except for $p$'s endpoints. The cycle contains more than three edges and it is straightforward, from the similar reasoning we had in the proof of Lemma \ref{lemmapathseparator1}, that any chord of the cycle will imply the existence of $i \le k < l \le j$ such that $(i, j) \ne (k, l)$ and $r_k$ and $r_l$ are strongly $r$-connected.
    
    If $j - i = 4$, consider the cycle consisting of the edge between $r_i$ and $r_{i + 1}$, the edge between $r_{i + 1}$ and $r_{i + 3}$, the edge between $r_{i + 3}$ and $r_{i + 4}$, and the path $p$ with the minimum amount of edges between $r_i$ and $r_j$ that don't share a common vertex with $r$ except for $p$'s endpoints. The cycle has more than three edges and we can see that any chord of the cycle will imply that either $r_i$ and $r_{i + 3}$ are strongly $r$-connected or $r_{i + 1}$ and $r_{i + 4}$ are strongly $r$-connected, neither of which is impossible since both pairs of indices have different parities.
\end{proof}

\begin{proof}[Proof of Theorem \ref{nosharee}]

    Due to Corollary \ref{noreverse} it suffices to show that two distinct (disregarding direction) normal separators don't share a common edge. In fact, the theorem applies to two distinct useful separators.

    Let $u$ and $v$ be distinct useful separator paths and $u$ and $v$ share a common edge. Let $0 \le i < j < |u|$ be such that $u_{i, j}$ is entirely on $v$ and $j - i$ is maximal. Since the set of edges $u$ is not contained in that of $v$, either $i > 0$ or $j < |p| - 1$. Without loss of generality, assume $i > 0$ and that $\textrm{INDEX}(v, u_{i + 1}) = \textrm{INDEX}(v, u_i) + 1$. Let $x = \textrm{INDEX}(v, u_i)$.
    
    If $x = 0$, then $v_0 = u_i$ and $v_1 = u_{i + 1}$. From Theorem \ref{theoadjacent} we know $u_{i - 1}$ and $u_{i + 1}$ are adjacent. If the edge is on $v$, then we violate Lemma \ref{lemmanoshare}; otherwise since the edge between $u_{i - 1}$ and $u_i$ is not on $v$, $v_0$ and $v_1$ are strongly $v$-connected, which is a contradiction of Theorem \ref{pathseparatorprop1}.
    
    If $x > 0$, then $v_x = u_i$ and $v_{x + 1} = u_{i + 1}$. If $v_{x - 1} \notin u$, then $v_{x - 1}$ and $v_{x + 1}$, $v_{x - 1}$ and $v_x$ are both connected by an edge not on $u$, and therefore $u_i$ and $u_{i + 1}$ are strongly $u$-connected, which is a contradiction of Theorem \ref{pathseparatorprop1}.
\end{proof}

\begin{proof} [Proof of Theorem \ref{nosharev}]

    Due to Corollary \ref{noreverse} it suffices to show this for two distinct normal separators. Let $u$ and $v$ be two distinct normal separators.

    Let $u_i (1 \le i < |u| - 1)$ and $v_j (1 \le j < |v| - 1)$ be the same vertex. We argue that for some $u_a$ and $u_b$ such that $a \le i - 2$ and $b \ge i + 2$, $u_a$ and $v_{j - 1}$ are strongly $u$-connected and $u_b$ and $v_{j + 1}$ are strongly $u$-connected. We know $v_{j - 1}$ and $v_{j + 1}$ are adjacent, and since $u$ and $v$ don't share a common edge due to Theorem \ref{nosharee}, neither of $v_{j - 1}$ and $v_{j + 1}$ is on $u$ (otherwise one can see that $v_{j - 1}$ or $v_{j + 1}$ and $v_j$ are weakly $v$-connected, violating Lemma \ref{lemmapathseparator1}. Therefore, since $u_a$ and $v_{j - 1}$ are strongly $u$-connected and $u_b$ and $v_{j + 1}$ are strongly $u$-connected, $u_a$ and $u_b$ are strongly $u$-connected, but since $b - a \ge 4$, this violates Theorem \ref{pathseparatorprop1}.

	By symmetry we will only show the existence of $a$. Let $p$ be a path from $S$ to $T$ that contains $v$. If the part of $p$ before $v_j$ does not contain a vertex on $u$, then $S$ and $u_0$ are strongly $u$-connected and $S$ and $v_{j - 1}$ are strongly $u$-connected, and $u_0$ and $v_{j - 1}$ are strongly $u$-connected. Therefore, we can let $a = 0$. If $i = 1$, that would imply $u_0$ and $u_1$ are strongly $u$-connected, which is impossible from Theorem \ref{pathseparatorprop1}. Therefore $i > 1$ and $a \le i - 2$.
    
    If the last vertex on $u$ that $p$ visits before $v_j$ is $u_j$, since from Theorem \ref{nosharee} $u$ and $v$ don't share a common edge, $u_j$ and $u_i$ are strongly $u$-connected. If $j = 0$, we can use the same reasoning above to deduce $i > 1$ and we can set $a = 0$. If $j > 0$, from Corollary \ref{traorder} we know $i - j$ is -1, or, 1 or $\ge 2$. Since $u_j$ and $u_i$ are strongly $u$-connected, from Theorem \ref{pathseparatorprop1} we know $|j - i| \ne 1$ and therefore $j = i - 2$. Therefore we can set $a = i - 2$.
    
\end{proof}

\section{Proof of Theorem \ref{existence}} \label{existenceproof}

This short section completes the proof of Theorem \ref{existence}, our result pertaining to the decision problem.

It suffices to show that the separator paths on the path that visits the minimum amount of edges have length one (i.e. bridges).
    
Suppose there is a separator path on this path with length more than one, then Theorem \ref{theoadjacent} shows that two non-adjacent vertices on the path are adjacent, this contradicts the fact that the path visits the minimum amount of edges. Therefore, all separator paths on the path that visits the minimum amount of edge are bridges.

\section{Sub-procedures}

In this section, we will detail on the sub-procedures required in the main algorithm.

\subsection{Shortest Path That Avoids a Set of Normal Separator Paths} \label{avoid}

\subsubsection{The Sub-procedure} 

The sub-procedure $\textrm{AVOID(X)}$ takes a set $X$ of normal separator paths in the chordal graph $G$, and computes the shortest path from $S$ to $T$ in $G$ that avoids $X$. In order to do this, we build an auxiliary \textbf{directed} graph $G_1(X) = \langle V_1, E_1\rangle$ that contains no more than two copies of each vertex in $V$. Specifically, both $S$ and $T$ only have one copy. Of course, we want to ensure that we don't visit any bad vertices. Therefore, $G_1$ does not contain any copies of bad vertices. It also contains no more than four copies of each edge in $G$, for either direction, yielding a total of no more than $2|V|$ vertices and $8|E|$ edges. The shortest path from the vertex corresponding to $S$ in $G_1$ to the vertex corresponding to $T$ in $G_1$ will correspond to a shortest path from $S$ to $T$ in $G$ that avoids $X$.

The motivation behind $G_1$ is as follows: suppose we are halfway on a path from $S$ to $T$ and we are at a vertex that is an inner vertex $r_i$ of some normal separator path $r$. There are two possibilities: the part of the path before might already contain the entirety of $r_{0, i}$, which means that we do not want to cover the entirety of $r_{i, |r| - 1}$, and we need to get off $r$. Or the path before might not contain the entirety of $r_{0, i}$. In this case if we assume our path is simple and will remain so, it would mean that we are ``safe'' and do not need to worry about containing $r$. Therefore for each inner vertex $u$ of a normal separator path, the graph $G_1$ contains a \textit{high vertex} $(u, 1)$, which indicates the first, unsafe case, and a \textit{low vertex} $(u, 0)$, which indicates the second, safe case. Neither $S$ nor $T$ is an inner vertex of a normal separator path, so only low vertices of them exist. 

In order to make this work, for any separator path $r$, the edge from $(r_0, 0)$ to $(r_1, 0)$ must not exist, so that we can't get on $r$ without entering the high vertex. The edge between $(r_{|r| - 2}, 1)$ and $(r_{|r| - 1}, 1)$ must not exist, since going through this edge would imply containing the entirety of $r$. Moreover, we are forbidden to go from $(r_i, 1)$ to $(r_{i + 1}, 0)$ so that we can't get off the high vertices without getting off $r$. To help simplify the formal proof in sub-section \ref{correctavoid}, we forbid the edges to the corresponding high vertices as well.

However, if we simply keep all the other edges. we still have a problem: since we now have two copies of a vertex. A simple path in the new graph does not necessarily correspond to a simple path in the original graph. For an inner vertex $r_i$ on separator path $r$, one can go from the high vertex $(r_i, 1)$ to $(x, 0)$ for some $x \notin r$, and then go to the low vertex $(r_i, 0)$ from $(x, 0)$. We will introduce a fix. For $v \notin p$. Define $L(p, v)$ to be the $p_i$ with the smallest index $i$ such that $p_i$ is strongly $p$-connected to $v$, and define $R(p, v)$ to be the $p_i$ with the largest index $i$ such that $p_i$ is strongly $p$-connected to $v$. From Theorem \ref{pathseparatorprop1}, if $r$ is a separator path, then for any $r_i$ and $v \notin r$, at least one of $r_i = L(r, v)$ and $r_i = R(r, v)$ is true. In section \ref{lr} we will show that the values of $L(r, u)$ and $R(r, v)$ required in the procedure can be efficiently computed. For the moment, assume that these values are accessible. Consider Corollary \ref{traorder}, for most $r_i$, if we go from an inner vertex $r_i$ to a vertex $x \notin r$ and want to keep our path (in the original graph) simple, then we can not ever visit a vertex on $r_{0, i}$. If $R(r, x) = r_i$ and we are at the high vertex for $r_i$, then we can not go to either copy of $x$ and keep our path simple, and therefore we forbid both edges. Similarly, if $L(r, x) = r_i$, suppose we are at a copy of vertex $x$, then we must have either visited $r_i$ before, or one can verify that we have visited $r_{i + 1}$ and $r_{i + 2}$ before. In the former case the path will not be simple, and in the latter case the path will not be able to reach $T$ while the path remains simple. Therefore we forbid the edge from $(x, 0)$ or $(x, 1)$ (if it exists) to $(r_i, 0)$. This will forbid the scenario where one leaves the high vertex $(r_i, 1)$ and goes around a cycle that does not contain another vertex on $r$ back to $(r_i, 0)$. Recall that for each vertex $u \ne r$, at least one of $L(r, u) = r_i$ and $R(r, u) = r_i$ is true. Therefore either the last edge on the cycle back to the low vertex $r_i$ is forbidden, or the first edge on the cycle leaving the high vertex $r_i$ is forbidden. To prevent similar exploits, for $i > 0$, we are also forbidden to go from $(r_{i + 1}, 1)$ to $(r_i, 0)$ or $(r_{i - 1}, 0)$. 

There is are two caveats. Firstly, it is possible to leave $r_i$ and then immediately visit the head of another separator path. In this case, if we are going from $r_i$ to $u = r ^ {\prime}_1$ for another separator path $r ^ {\prime}$, and $R(r, u) \ne r_i$, we go from $(r_i, 1)$ to $(u, 1)$ instead of $(u, 0)$. Secondly if $i = |r| - 3$ and $L(r, x) = r_i$, we might have gotten off $r$ from $r_{|r| - 1}$, which is not an inner vertex and Corollary \ref{traorder} no longer applies. If this is the case, since we can not visit $r_{|r| - 1}$ again, we will have to get off $r$ again, and therefore we allow the edge from $(x, 0)$ to the high vertex $(r_i, 1)$. One can show that even with this edge, whenever we are at $(r_i, 1)$, all vertices on $r_{0, i}$ have been visited.  A similar issue exists for $i = 2$. Fortunately if we get off $r$ from $r_2$ and visits a vertex on $r_{0, 2}$, the first vertex we visit will be $r_0$. This implies that we must currently be on the low vertex of $r_2$ since otherwise $r_0$ is already visited. Our graph does not forbid going (directly or indirectly) from the low vertex of $r_2$ to $r_0$ and therefore our auxiliary graph still admits such paths. Currently, one might have concerns about the correctness of this sub-procedure, especially because we can have multiple separator paths. Fortunately, Theorem \ref{nosharee} and Theorem \ref{nosharev} state that these separator paths do not share an edge of an inner vertex, and a formal proof of the sub-procedure will be given in sub-section \ref{correctavoid} which makes use of these two theorems.

The pseudocode is given below. For a separator path $r$, the set of inner vertices is denoted by $\textrm{INNER}(r)$. 

\begin{algorithm}
    \caption{Computation of $\textrm{AVOID(X)}$, The Shortest Path Avoiding $X$}
    \begin{algorithmic}[1]
        \Procedure{AVOID}{$X$}
            \State $V_1 \gets ((V \backslash \textrm{BAD}_G) \times \{0\}) \cup ((\cup_{r \in X}{\textrm{INNER}(r)} \backslash \textrm{BAD}_G) \times \{1\})$
            \State $E_1 \gets \emptyset$
            \For {ordered pair $(u, v) \in V$ where $u$ and $v$ are connected by an edge $e$ of length $w$ in $G$}
                \If {$u \notin \textrm{BAD}_G$ \AND $v \notin \textrm{BAD}_G$}
                    \If {$\exists r \in X$, (($u \notin r$ \AND $v = r_{|r| - 3}$) \OR ($u = r_0$ \AND $v = r_1$))}
                        \State Add an edge from $(u, 0)$ to $(v, 1)$ with length $w$ to $E_1$
                    \EndIf
                    \If {\NOT ($\exists r \in X, (u, v) = (r_0, r_1)$ \OR ($\exists r \in X$, $u \notin r$ \AND $v \in \textrm{INNER}(r)$ \AND $L(r, u) = v$))}
                        \State Add an edge from $(u, 0)$ to $(v, 0)$ with length $w$ to $E_1$
                    \EndIf
                    \If {$(u, 1) \in V_1$ \AND $(v, 1) \in V_1$ \AND \NOT (($\exists r \in X, (u, v) = (r_{|r| - 2}, r_{|r| - 1})$) \OR ($\exists r \in X, 2 \le i < |r| - 1, (u, v) = (r_i, r_{i - 2})$ \OR $(u, v) = (r_i, r_{i - 1})$) \OR ($\exists r \in X$, $v \notin r$ \AND $u \in \textrm{INNER}(r)$ \AND $R(r, v) = u$))}
                        \State Add an edge from $(u, 1)$ to $(v, 1)$ with length $w$ to $E_1$
                    \EndIf
                    \If {$(u, 1) \in V_1$ \AND \NOT (($\exists r \in X, e \in r$) \OR ($\exists r \in X, 2 \le i < |r| - 1, (u, v) = (r_i, r_{i - 2})$) \OR ($\exists r \in X$, (($u \notin r$ \AND $v \in \textrm{INNER}(r)$ \AND $L(r, u) = v$) \OR ($v \notin r$ \AND $u \in \textrm{INNER}(r)$ \AND $R(r, v) = u$)))}
                        \State Add an edge from $(u, 1)$ to $(v, 0)$ with length $w$ to $E_1$
                    \EndIf
                \EndIf
            \EndFor 
            \State $p \gets$ the shortest path from $(S, 0)$ to $(T, 0)$ in $G_1(X) = \langle V_1, E_1\rangle$
            \For {$0 \le i < |p|$}
                \State Replace $p_i$ with the vertex in $G$ that $p_i$ corresponds to
            \EndFor \\
            \Return $p$
        \EndProcedure
    \end{algorithmic}
\end{algorithm}

\newpage

\subsubsection{Calculation of $L(r, u)$ and $R(r, v)$} \label{lr}

We show a way to calculate $L(r, u)$ for all queries involved in the algorithm. The queries for $R(r, v)$ can be calculated similarly.

We notice that every time we want to calculate $L(r, u)$, there must be a vertex $v \in \textrm{INNER}(r)$ that is adjacent to $u$. Let $v = r_i$. From Theorem \ref{pathseparatorprop1}, the answer can only be $r_i$ or $r_{i - 2}$. All that we need to do is to check if $r_{i - 2}$ is strongly $r$-connected to $u$. Our problem becomes, given $r_i$, a vertex $u$ adjacent to $r_i$, decide whether $u$ is strongly $r$-connected to $r_{i - 2}$.

Firstly, if $|X| = 1$ and the only element in $X$ is the separator path $r$. Then let $G ^ {\prime}$ be the graph after we removing all edges on $r$ from $G$. One can see, from Theorem \ref{pathseparatorprop1} and Theorem \ref{theoadjacent}, that vertices $r_k$ and $r_l$ share a biconnected component if and only if $|k - l| = 2$. From this we can see that given there is an edge between $u$ and $r_i$, $u$ is strongly $r$-connected to $r_{i - 2}$ if and only if $u$ share a biconnected component with both $r_i$ and $r_{i - 2}$. This can be done by checking if the biconnected component containing the edge between $r_i$ and $u$ contains $r_{i - 2}$.

For the case where $|X| > 1$, calling this sub-procedure for all $r \in X$ will be too costly. We want to calculate the values for all $r \in X$ at once. We let $G ^ {\prime}$ be the graph we get after removing from $G$ all edges on some separator path in $X$. Still for $r \in X$ vertices $r_k$ and $r_l$ share a biconnected component if and only if $|k - l| = 2$. Our problem is that if $u$ is $r$ strongly-connected to $r_{i - 2}$, it is no longer obvious whether $u$ share a biconnected component with both $r_i$ and $r_{i - 2}$ in $G ^ {\prime}$, since some edges not on $r$ are removed. Fortunately, in this case, we have the following Theorem:

\begin{theorem} \label{lrtheo}
    If $u$ is strongly $r$-connected to $r_{i - 2}$ and there exists an edge between $u$ and $r_i$ in $G ^ {\prime}$, then $u$, $r_i$ and $r_{i - 2}$ share a biconnected component in $G ^ {\prime}$.
\end{theorem}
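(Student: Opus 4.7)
\bigskip
\noindent \textbf{Proof Proposal.}
The plan is to exhibit a simple cycle in $G^{\prime}$ that passes through all of $u$, $r_i$, and $r_{i-2}$, which immediately establishes that the three vertices lie in a common biconnected component of $G^{\prime}$. The cycle will be assembled from three pieces: the hypothesized edge $(u, r_i)$, the edge $(r_i, r_{i-2})$, and a detoured path from $r_{i-2}$ back to $u$ in $G^{\prime}$ avoiding $r_i$.

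First, I would show that the edge $(r_i, r_{i-2})$ survives in $G^{\prime}$. This edge exists in $G$ by Theorem~\ref{theoadjacent}, and it is not on $r$ (since $r$ visits $r_{i-1}$ in between). If it lay on some other separator path $r^{\prime} \in X$, then because $|r^{\prime}| \ge 4$ every edge of $r^{\prime}$ must have at least one endpoint that is an inner vertex of $r^{\prime}$. But $r_i$ is an inner vertex of $r$ (since $1 \le i \le |r|-2$ in the relevant query regime), and $r_{i-2}$ is an inner vertex of $r$ whenever $i \ge 3$; Theorem~\ref{nosharev} then forbids either from being an inner vertex of $r^{\prime}$, a contradiction. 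The boundary case $i = 2$, where $r_{i-2} = r_0$ is only an endpoint of $r$ and is therefore not protected by Theorem~\ref{nosharev}, will be handled by a separate local chordal argument: even if $(r_0, r_2)$ is removed, I would use Theorem~\ref{theoadjacent} on $r^{\prime}$ at the index of $r_0$ together with chordality of $G$ to exhibit a common neighbor $w$ of $r_0$ and $r_2$ whose incident edges both survive in $G^{\prime}$, giving a two-edge replacement $r_2 \to w \to r_0$.

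Second, I would construct a path $Q^{\ast}$ from $r_{i-2}$ to $u$ in $G^{\prime}$ that avoids $r_i$. Strong $r$-connectedness of $u$ and $r_{i-2}$ yields a path $Q$ in $G$ from $r_{i-2}$ to $u$ sharing no inner vertex and no edge with $r$; in particular $Q$ avoids $r_i$. If every edge of $Q$ survives in $G^{\prime}$, set $Q^{\ast} := Q$. Otherwise, let $(a,b)$ be an edge of $Q$ removed because it lies on some other separator path $r^{\prime\prime} \in X$. Consider the cycle $C_0 = u \to r_i \to r_{i-2} \to Q \to u$ in $G$; by chordality, $C_0$ can be triangulated, and the triangle containing $(a,b)$ supplies a third vertex $c$ with edges $(a,c), (c,b) \in G$. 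Using Theorems~\ref{nosharee} and~\ref{nosharev}, I would argue that not both of $(a,c)$ and $(c,b)$ can also lie on $r^{\prime\prime}$: otherwise two consecutive edges of $r^{\prime\prime}$ meet at $c$, forcing $c$ to be an inner vertex of $r^{\prime\prime}$ and simultaneously sit on a triangle with an edge of $r^{\prime\prime}$ in a way that contradicts Theorem~\ref{pathseparatorprop1} applied to $r^{\prime\prime}$. Replacing $(a,b)$ by the surviving two-edge detour and iterating this step (with termination controlled by the monovariant of the total number of removed edges still on the path) produces the desired $Q^{\ast}$ in $G^{\prime}$.

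Combining the hypothesized edge $(u, r_i)$ with the edge $(r_i, r_{i-2})$ from Step~1 and the path $Q^{\ast}$ from Step~2 yields a simple cycle in $G^{\prime}$ through $u$, $r_i$, and $r_{i-2}$, proving the theorem. The main obstacle I anticipate is the boundary case $i = 2$ (symmetrically $i = |r|-2$), where $r_{i-2}$ (resp.\ $r_i$) is an endpoint of $r$ and so falls outside the protection of Theorem~\ref{nosharev}; I expect to resolve this by explicit chordal case analysis of how other separator paths in $X$ can meet $r$ at its endpoints, together with careful bookkeeping to ensure that the iterated detours around removed edges never reintroduce $r_i$ into the path.
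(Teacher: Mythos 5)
Your high-level plan (exhibit a cycle through $u$, $r_i$, $r_{i-2}$ in $G^{\prime}$) matches what the paper implicitly does, but both of your steps have gaps that the paper closes with a single tool you are missing: Lemma~\ref{lemmanoshare}, which states that for \emph{useful} separator paths $r$ and $q$, the ``skip'' edge between $r_k$ and $r_{k+2}$ is never an edge of $q$. This is proved by a short length computation from the usefulness inequality $\|r_{k,k+2}\| < L(r_k, r_{k+2})$, and it applies uniformly to all $k$, including $k=0$. It immediately gives your Step~1 (the edge $(r_i, r_{i-2})$ is a skip edge of $r$, hence on no separator path in $X$, hence present in $G^{\prime}$) with no boundary case at $i=2$; your route through Theorem~\ref{nosharev} genuinely breaks there, and the ``separate local chordal argument'' you defer to is not supplied and would itself need to show the two detour edges survive deletion, which is the same difficulty all over again.

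The more serious gap is in Step~2. Your triangulation detour replaces a removed edge $(a,b)$ by $(a,c)$ and $(c,b)$, but you only argue that these two edges cannot both lie on the \emph{same} path $r^{\prime\prime}$; nothing prevents either of them from lying on a \emph{different} separator path in $X$, in which case it is also deleted from $G^{\prime}$ and your monovariant does not decrease. You also do not control whether $c$ is a vertex of $r$ (in particular $r_i$ itself), which would destroy the cycle. The paper avoids both problems by rerouting \emph{globally} per separator path: for each $r^{\prime} \in X$ that the witness path $p$ meets, it takes the first and last vertices of $r^{\prime}$ on $p$, notes via Theorem~\ref{nosharee} and Lemma~\ref{lemmapathseparator1} that their indices on $r^{\prime}$ have the same parity, and replaces the whole intervening portion of $p$ by the chain of skip edges $(r^{\prime}_k, r^{\prime}_{k+2})$ --- every one of which is guaranteed by Lemma~\ref{lemmanoshare} to survive in $G^{\prime}$. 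Without that lemma (or an equivalent guarantee that your detour edges are never deleted), the iteration in your Step~2 need not terminate in a path that lives in $G^{\prime}$, so the proof as proposed does not go through.
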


To prove Theorem \ref{lrtheo}, we first introduce the following lemma:

\begin{lemma} \label{lemmanoshare}
    If $r$ and $q$ are useful separator paths, then for $0 \le i < |r| - 2$, the edge between $r_i$ and $r_{i + 2}$ (which exists due to Theorem \ref{theoadjacent}) is not on $q$.
\end{lemma}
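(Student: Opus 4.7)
The plan is to argue by contradiction: suppose for some $0 \le i < |r|-2$ the shortcut edge $e = (r_i, r_{i+2})$ lies on $q$, say $e = (q_j, q_{j+1})$, and after possibly reversing $q$ (which remains a useful separator path) assume $q_j = r_i$ and $q_{j+1} = r_{i+2}$. The first step is to observe that the two-edge walk $r_i \to r_{i+1} \to r_{i+2}$ connects $q_j$ to $q_{j+1}$, while Lemma \ref{lemmapathseparator1} applied to $q$ at index $j$ says $q_j$ and $q_{j+1}$ are not weakly $q$-connected. So this walk cannot avoid every edge of $q$, and hence at least one of $(r_i, r_{i+1})$ and $(r_{i+1}, r_{i+2})$ must belong to $q$.

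The second step is to locate that shared edge on $q$. Because $q$ is simple, the vertex $r_i$ appears on $q$ only at index $j$ and $r_{i+2}$ only at index $j+1$. The shared edge is distinct from $e$ but incident to $r_i$ or $r_{i+2}$, so it is either $(q_{j-1}, q_j)$ with $q_{j-1} = r_{i+1}$ (Case 1) or $(q_{j+1}, q_{j+2})$ with $q_{j+2} = r_{i+1}$ (Case 2). The boundary indices $j = 0$ and $j = |q|-2$ each rule out one of these cases, leaving the other available, so one of the two is always reachable.

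Finally, I would combine the useful inequality of $r$ at index $i$ with the useful inequality of $q$ at index $j-1$ in Case 1, or at index $j$ in Case 2. In each case both strict inequalities involve the three edge lengths around the triangle $r_i, r_{i+1}, r_{i+2}$; adding them and cancelling common terms collapses to $2L(r_i, r_{i+1}) < 0$ in Case 1, respectively $2L(r_{i+1}, r_{i+2}) < 0$ in Case 2, each contradicting the positivity of edge lengths. The main obstacle is conceptual, namely recognising that Lemma \ref{lemmapathseparator1} is the right tool to force a shared edge between $r$ and $q$ right next to the shortcut, rather than trying a global path-swapping argument on all of $q$; once that observation is in place, the rest is routine index chasing together with one short arithmetic cancellation.
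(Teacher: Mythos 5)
Your proof is correct and follows essentially the same route as the paper's: use the fact that consecutive vertices of a separator path are not weakly connected off the path to force one of the two triangle edges $(r_i,r_{i+1})$, $(r_{i+1},r_{i+2})$ onto $q$, then add the two usefulness inequalities around the triangle to reach $2L(\cdot,\cdot)<0$. If anything your version is slightly more careful than the paper's (which invokes strong $q$-connectivity and handles only one orientation ``without loss of generality''), since you work with weak connectivity and spell out both placements of the shared edge, including the boundary indices $j=0$ and $j=|q|-2$.
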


\begin{proof} [Proof of Lemma \ref{lemmanoshare}]
    If the edge between $r_i$ and $r_{i + 2}$ belongs to another useful separator path $q$. Then $q$ must contain either the edge between $r_i$ and $r_{i + 1}$ or the edge between $r_{i + 1}$ and $r_{i + 2}$. Otherwise two adjacent vertices on $q$ are strongly $q$-connected, violating Theorem \ref{pathseparatorprop1}. Without loss of generality, suppose $q$ contains the edge between $r_i$ and $r_{i + 1}$ and $r_{i + 1} = q_j$, $r_i = q_{j + 1}$ and $r_{i + 2} = q{j + 2}$, then since $r$ is useful:    
    
    \begin{align*}
        L(q_j, q_{j + 2})   &= L(r_{i + 1}, r_{i + 2}) \\
                            &< L(r_{i + 1}, r_{i + 2}) + L(r_{i + 1}, r_{i}) \\
                            &< L(r_{i}, r_{i + 2}) \\
                            &< L(q_{j + 1}, q_{j + 2}) \\
                            &< L(q_{j + 1}, q_{j + 2}) + L(q_{j + 1}, q_{j})
    \end{align*}
    
    which contradicts the fact that $q$ is useful.
\end{proof}

\begin{proof} [Proof of Theorem \ref{lrtheo}]
    Since $r_i$ and $r_{i - 2}$ are adjacent due to Theorem \ref{theoadjacent}, it suffices if we can show there is a simple path between $u$ and $r_{i - 2}$ that does not visit another vertex in $r$. Consider a simple path $p$ in $G$ (not $G ^ {\prime}$) between $u$ and $r_{i - 2}$ that does not visit another vertex in $r$. We show that we can adjust $p$ so that $p$ does not visit any edge on a separator path in $X$, and then $p$ will exist in $G ^ {\prime}$. For $r ^ {\prime} \in X$, if $p$ does not visit any vertex on $r ^ {\prime}$, $p$ obviously does not visit an edge on $r ^ {\prime}$. Otherwise, let the first vertex on $r ^ {\prime}$ that $p$ visits be $(r ^ {\prime})_u$ and the last vertex be $(r ^ {\prime})_v$. Since the edge between $u$ and $r_i$ is not on $r ^ {\prime}$ and from Theorem \ref{nosharee}, no edges on $r_{i - 2, i}$ are on $r ^ {\prime}$, $(r ^ {\prime})_u$ and $(r ^ {\prime})_v$ are weakly $(r ^ {\prime})$-connected. It follows from Lemma \ref{lemmapathseparator1} that $|u - v|$ is even. From Theorem \ref{theoadjacent} and Lemma \ref{lemmanoshare} we can get a path between $(r ^ {\prime})_u$ and $(r ^ {\prime})_v$ that is not on any separator path in $X$ using the edges between $(r ^ {\prime})_k$ and $(r ^ {\prime})_{k + 2}$ ($\min\{u, v\} \le k < \max\{u, v\}$ and $k - u$ is even). Therefore we can replace the part of $p$ from $(r ^ {\prime})_i$ to $(r ^ {\prime})_j$ with this path. It suffices to do the adjustment for all $r ^ {\prime} \in X$.
\end{proof}

Now, if the edge between $u$ and $r_i$ exists in $G ^ {\prime}$, we can readily check if $u$ and $r_{i - 2}$ are strongly $r$-connected. Otherwise, there must be some $r ^ {\prime} \in X$ where $r_i = (r ^ {\prime})_j$ and $u = (r ^ {\prime})_k$ where $|k - j| = 1$. Due to theorem \ref{nosharev}, $j = 0$ or $j = |r ^ {\prime}| - 1$. Without loss of generality let $j = 0$, and then $k = 1$. Consider $(r ^ {\prime})_2$. If $(r ^ {\prime})_2 \notin r$, then $u$ and $r_{i - 2}$ are strongly $r$-connected if and only if $r ^ {\prime})_2$ and $r_{i - 2}$ are strongly $r$-connected. Since $r_i = (r ^ {\prime})_0$ and $(r ^ {\prime})_2$ are adjacent and by Lemma \ref{lemmanoshare} the edge between them is in $G ^ {\prime}$, we can apply \ref{lrtheo}. If $(r ^ {\prime})_2 \in r$, then $u$ and $r_{i - 2}$ are strongly $r$-connected if and only if $r ^ {\prime})_2$ and both $r_i$ and $r_{i - 2}$ are strongly connected, which is only possible if $(r ^ {\prime})_2 = r_{i - 2}$.

\subsection{Computation of Bad Vertices, $S$-Separators Paths and $T$-Separators Paths} \label{badlr}

In this sub-section, we will introduce procedures that calculate all the bad vertices, and the $S$-separator paths and $T$-separator paths given $P$, the shortest path between $S$ and $T$ in $G ^ {\prime}$. Both procedures will be based on the calculation of the \textit{block-cut tree}, a data structure based on biconnected components of undirected graphs, which can be done in $O(n + m)$ time for a graph with $n$ vertices and $m$ edges due to the famous algorithm by John Hopcroft and Robert Tarjan \cite{tarjan}.

\subsubsection{Computation of Bad Vertices}

Per definition, a bad vertex is the middle point of a traversable separator path of length 2. If $r$ is a separator path of length 2. Then from Theorem \ref{pathseparatorprop1} one can see that $r_1$ is an articulation vertex of $G$, and $r_0$, $r_1$ and $r_2$ share the same biconnected component. Moreover, if $x$ is a vertex in a biconnected component $C$ such that the degree of $x$ within $C$ is two. Then the two edges associated with $x$ in $C$ constitute a separator path of length 2. $x$ will be bad vertex as long as $r$ is traversable.

Consider the block-cut tree $\tau$ of $G$. Let $S_{\tau}$ be the vertex corresponding $S$ itself if $S$ is an articulation point and to the biconnected component $S$ is in otherwise, and let $T_{\tau}$ be that for $T$. Then one can easily verify that $r$ is traversable if and only if $C$ is on the path from $S_{\tau}$ to $T_{\tau}$ on $\tau$. Therefore, $\textrm{BAD}_G$ can be computed using the following sub-procedure:

\begin{algorithm}
    \caption{Computation of $\textrm{BAD}_G$}
    \begin{algorithmic}[1]
        \Require Chordal Graph $G = \langle V, E\rangle$, $S, T \in V (S \ne T)$
        \Ensure $\textrm{BAD}_G$, the set of bad vertices
        \State Build the block-cut tree $\tau$ of $G$.
        \If {$S$ is an articulation vertex}
            \State $S_{\tau} \gets $the vertex corresponding to $S$ on $\tau$
        \Else
            \State $S_{\tau} \gets $the vertex corresponding to the biconnected component $S$ is in on $\tau$
        \EndIf
        \If {$T$ is an articulation vertex}
            \State $T_{\tau} \gets $the vertex corresponding to $T$ on $\tau$
        \Else
            \State $T_{\tau} \gets $the vertex corresponding to the biconnected component $T$ is in on $\tau$
        \EndIf
        \State $\textrm{BAD}_G \gets \emptyset$
        \For {Vertex $v$ between $S_{\tau}$ and $T_{\tau}$ on $\tau$ that corresponds to a biconnected component}
            \State $C \gets $the biconnected component $v$ corresponds to
            \For {$u \in C$}
                \If {$u$ has a degree of exactly 2 in $C$}
                    \State $\textrm{BAD}_G \gets \textrm{BAD}_G \cap \{u\}$
                \EndIf
            \EndFor
        \EndFor \\
        \Return $\textrm{BAD}_G$
    \end{algorithmic}
\end{algorithm}

\newpage

\subsubsection{Computation of $S$-Separator Paths and $T$-Separator Paths on $P$}

Since $S$-separator paths and $T$-separator paths are symmetric, it suffices to design an algorithm that calculates $S$-separator paths. We can split $S$-separator paths into two categories by the parity of the index of $r_2$ on $P$. An $S$-separator path is called \textit{even} if $\textrm{INDEX}(P, r_2)$ is even, and \textit{odd} if $\textrm{INDEX}(P, r_2)$ is odd. We will calculate all the odd $S$-separator paths, and the even ones can be calculated similarly.

In the previous section, we were able to pick out a length-2 separator path since all edges on it are associated with its midpoint. An odd $S$-separator path can contain more than 3 vertices and no single vertex is associated with all the edges on the path. Fortunately, we can design a way to merge $r_1, r_3, \cdots$ into a fat vertex without including any of $r_0, r_2, \cdots$, so that all edges on the path will be associated with the fat vertex. Here by merging $u$, $v$ into a fat vertex we mean replacing $u$ and $v$ with a new vertex $w$ such that for every edge between $u$ and $x$ or $v$ and $x$ $(x \notin \{u, v\})$ before merging, there is an edge between $w$ and $x$ with the same length as that edge after merging. We also use a shorthand: if we merge $a$ and $b$ into a fat vertex $x$ first and merge $b$ and $c$ thereafter, we actually merge $x$ and $c$.

To merge $r_1, r_3, \cdots$ into a fat vertex, we first go through $0 \le i < |P| - 2$ where $i$ is even, and merge $P_i$ with $P_{i + 2}$ if they are adjacent in $G$. This will make sure that $r_3, r_5, \cdots$ are merged into a fat vertex, since all these vertices are on $P$ with even indices and there are edges inter-connecting them due to Theorem \ref{theoadjacent}, and none of $r_0, r_2, \cdots$ will be merged into this fat vertex due to Theorem \ref{pathseparatorprop1}.

It remains to merge $r_1$ and $r_3$. If $r_1 \in P$, then since $\textrm{INDEX}(P, r_1) < \textrm{INDEX}(P, r_2)$ per definition, the only way to make sure that $r_1$ and $r_2$ are not strongly $r$ connected is for $r_{1, 2}$ to be on $P$ and therefore $\textrm{INDEX}(P, r_1) = \textrm{INDEX}(P, r_2) - 1$ which is even. This would mean that $r_1$ is already merged into the fat vertex. If $r_1 \notin P$, since $r_1$ and $r_3$ are adjacent and $\textrm{INDEX}(r_3)$ is odd, a way we can do this is to merge all $u \notin P$ with $v \in P$ where $u$ and $v$ are adjacent and $\textrm{INDEX}(v)$ is odd. This will not merge $r_1$ with $r_2$ since $\textrm{INDEX}(r_2)$ is even. However, this method can merge $r_1$ with $r_0$ when $r_0 \in P$ and $\textrm{INDEX}(P, r_0)$ is odd. To prevent this, note that if $r_0 \in P$ and we let $j = \textrm{INDEX}(r_0)$, then for no $k < j$ can $P_k$ and $r_1$ be adjacent --- otherwise $r_1$ and $r_0$ will be strongly $r$-connected. Therefore, we do not merge $u$ with $v$ if $v$ has the smallest index among all $v \in P$ adjacent to $u$. This will not prevent $r_1$ from merging with $r_3$ since $\textrm{INDEX}(P, r_2) < \textrm{INDEX}(P, r_3)$ and $r_2$ is adjacent to $r_1$.

Now that the edges on a the odd $S$-separator path are associated with a single fat vertex, we can pick out these separator paths by checking for each biconnected component, if the set of edges associated with each vertex inside that biconnected component corresponds to a normal separator path. Our sub-procedure will be as follows:

\begin{algorithm}
    \caption{Computation of Odd $S$-Separators Paths}
    \begin{algorithmic}[1]
        \Require Chordal Graph $G = \langle V, E\rangle$, $S, T \in V (S \ne T)$, path $P$ from $S$ to $T$
        \Ensure $X_{\textrm{SODD}}$, the set of odd $S$-separator paths
        \State $\textrm{TO\_MERGE} \gets \emptyset$
        \For {$0 \le i < |P| - 2$}
            \If {$i$ is odd}
                \If {$P_i$ and $P_{i + 2}$ are adjacent}
                    \State Add $(P_i, P_{i + 2})$ to $\textrm{TO\_MERGE}$.
                \EndIf
            \EndIf
        \EndFor
        \For {Vertex $u \notin P$}
            \State $\textrm{MIN\_INDEX} \gets \infty$
            \For {Vertex $v$ such that $u$ and $v$ are adjacent}
                \If {$v \in P$}
                    \State $\textrm{MIN\_INDEX} \gets \min\{\textrm{MIN\_INDEX}, \textrm{INDEX}(v, P)\}$
                \EndIf
            \EndFor
            \For {Vertex $v$ such that $u$ and $v$ are adjacent}
                \If {$v \in P$ \AND $\textrm{INDEX}(P, v) \ne \textrm{MIN\_INDEX}$}
                    \State Add $(u, v)$ to $\textrm{TO\_MERGE}$.
                \EndIf
            \EndFor
        \EndFor
        \State $G_{\textrm{MERGE}} \gets G$
        \For {$(u, v) \in \textrm{TO\_MERGE}$}
            \State Merge $u$ and $v$ in $G_{\textrm{MERGE}}$.
        \EndFor
        \State Build the block-cut tree $\tau$ of $G_{\textrm{MERGE}}$.
        \State $X_{\textrm{SODD}} \gets \emptyset$
        \For {Vertex $v$ on $\tau$ that corresponds to a biconnected component}
            \State $C \gets $the biconnected component $v$ corresponds to \Comment{Including cut vertices}
            \For {$u \in C$}
                \State $R \gets $edges associated with $u$ inside $C$
                \If {$R$ corresponds to an odd $S$-separator path $r$ in $G$}
                    \State $X_{\textrm{SODD}} \gets X_{\textrm{SODD}} \cap \{r\}$ 
                \EndIf
            \EndFor
        \EndFor \\
        \Return $X_{\textrm{SODD}}$
    \end{algorithmic}
\end{algorithm}

\newpage

We still need to check if a set of edges $R$ in $G$ corresponds to an odd $S$-separator path in $O(|R|)$ time. Firstly, we check if $R$ corresponds to a path. It is not hard to see that $R$ is a minimal separating set of edges, so if $R$ corresponds to a path, the path is to a separator path $r$. Secondly, one can determine the direction of the path based on the indices of the vertices $r$ shared with $p$ --- recall that for any two vertices $u$ and $v$ shared by $r$ and $P$, $\textrm{INDEX}(r, u) < \textrm{INDEX}(r, v)$ if and only if $\textrm{INDEX}(P, u) < \textrm{INDEX}(P, v)$. If the constraint on indices can't be satisfied in either direction, then by definition $r$ is not an $S$-separator path. After $r$ has been fully determined, we can then check whether $r_{2, |r| - 1}$ is on $P$. Thirdly, we check for the usefulness constraint per definition, and finally we need to check for the traversability constraint, we have the following theorem.

\begin{theorem} \label{adjacent}
    If $r$ is a useful separator path such that $r_{2, |r| - 1}$ is on $P$ and for any two vertices $u$ and $v$ shared by $r$ and $P$, $\textrm{INDEX}(r, u) < \textrm{INDEX}(r, v)$ if and only if $\textrm{INDEX}(P, u) < \textrm{INDEX}(P, v)$. Then let $j = \textrm{INDEX}(P, r_2)$ and $r$ is traversable if and only if $r_0$ is equal to or adjacent to $P_k$ for some $k < j$.
\end{theorem}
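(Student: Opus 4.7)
The plan is to prove the equivalence in two directions: $(\Leftarrow)$ by an explicit construction of a traversing path, and $(\Rightarrow)$ by a chordal cycle argument that extracts the required adjacency, using Theorem~\ref{theoadjacent} to supply the chord $(r_0, r_2)$.

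For $(\Leftarrow)$, I would fix a witness $k < j$ with $r_0 = P_k$ or $r_0$ adjacent to $P_k$. Since $r_{2,|r|-1}$ lies on $P$ with index order preserved, we have $r_i = P_{j + i - 2}$ for $2 \le i \le |r|-1$; in particular $r_{|r|-1} = P_{j+|r|-3}$. The candidate path
\[
    q \;=\; P_{0,k} \;\to\; r_0 \to r_1 \to \cdots \to r_{|r|-1} \;\to\; P_{j+|r|-3,\,|P|-1}
\]
(dropping the $P_k \to r_0$ edge when $r_0 = P_k$) therefore glues correctly at both $r_0$ and $r_{|r|-1}$. The core task is simplicity: by the index hypothesis any vertex shared between $r$ and $P$ either belongs to $\{r_0, r_1\}$ (with $P$-index strictly less than $j$, by the hypothesis applied against $r_2$) or belongs to $\{r_2, \ldots, r_{|r|-1}\}$ (with $P$-index in the consecutive block $\{j, \ldots, j+|r|-3\}$). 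So the prefix $P_{0,k}$, the inserted $r$-piece, and the suffix $P_{j+|r|-3,\,|P|-1}$ have pairwise disjoint interiors provided $k$ is smaller than $\textrm{INDEX}(P, r_1)$ in the sub-case $r_1 \in P$; minimality of the witness $k$ produced by the forward direction (below) supplies exactly this.

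For $(\Rightarrow)$, let $q$ be a simple $S$--$T$ path with $q_{a,\,a+|r|-1} = r$. If $a = 0$ then $r_0 = S = P_0$ gives $k = 0$; otherwise pick the last vertex $q_b = P_k$ of $q_{0,a-1}$ lying on $P$ ($b$ exists because $q_0 = S \in P$). Simplicity of $q$ rules out $P_k \in \{r_2, \ldots, r_{|r|-1}\}$, so by the index condition either $k < j$ or $k > j+|r|-3$. In the favorable case $k < j$ with $r_0 \notin P$, I build the auxiliary cycle
\[
    \Gamma \;=\; P_k \to q_{b+1} \to \cdots \to q_{a-1} \to r_0 \xrightarrow{\text{chord}} P_j \to P_{j-1} \to \cdots \to P_{k+1} \to P_k,
\]
where the chord $(r_0, P_j) = (r_0, r_2)$ is furnished by Theorem~\ref{theoadjacent}. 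The cycle is simple by the choice of $b$ and because $r_0 \notin P$. I then apply chordality iteratively: a chord inside the $q$-arc replaces that arc by a shorter one and we recurse; a chord inside the $P$-arc is absorbed by the shortest-path/no-tie hypothesis on $P$, leaving a smaller cycle still containing $r_0$; a chord from $r_0$ to some $P_d$ with $k < d < j$ is exactly the adjacency we want. Induction on cycle length terminates in a triangle delivering such an edge, and taking the smallest such $d$ synchronizes with the $(\Leftarrow)$ construction.

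The main obstacle is the ``wrong-side'' case $k > j+|r|-3$, in which the prefix of $q$ crosses past $r_{|r|-1}$ on $P$ before reaching $r_0$. I plan to handle this symmetrically, running an analogous cycle argument with the tail chord $(r_{|r|-3}, r_{|r|-1})$ from Theorem~\ref{theoadjacent} and the subpath $P_{j+|r|-3,\,k}$; the outcome must either contradict simplicity of $q$ via Corollary~\ref{noreverse} (which forbids $S$ and $r_{|r|-1}$, or $T$ and $r_0$, from being strongly $r$-connected) or else produce an $S$-side adjacency of the required form. The remaining bookkeeping---interleaving chord reductions on the two arcs of $\Gamma$ and tracking the parity/ordering along $r$ guaranteed by Theorem~\ref{pathseparatorprop1}---is what I expect to be the most tedious part, rather than any single deep idea.
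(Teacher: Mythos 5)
Your proposal follows essentially the same route as the paper: the backward direction by explicit concatenation, and the forward direction by locating the last departure point $P_k$ of the prefix from $P$, closing a cycle through the chord $(r_0, r_2)$ supplied by Theorem~\ref{theoadjacent} together with the segment $P_{k,j}$, and invoking chordality --- except that the paper streamlines this by re-choosing $k$ maximal and the connecting path edge-minimal so that a single chordless cycle yields the contradiction, and it excludes your ``wrong-side'' case up front by observing that a last departure at index $\ge j$ would make $r_0$ and $r_{|r|-1}$ strongly $r$-connected, contradicting Theorem~\ref{pathseparatorprop1} (morally the same as your appeal to Corollary~\ref{noreverse}). The one concrete omission in your chord-reduction is the case of a chord joining an interior vertex of the $q$-arc to an interior vertex of the $P$-arc; it is dispatched like your third case, since it produces an $r$-avoiding connection from a later $P_d$ ($k < d < j$) to $r_0$ and lets the induction (or the paper's maximality of $k$) close.
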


\begin{proof} [Proof of Theorem \ref{adjacent}]
    If $r_0$ is equal to or adjacent to $P_k$ for some $k < j$, we can obviously find a path from $S$ to $T$ that contains $r$, and therefore $r$ is traversable.
    
    If $r$ is traversable and therefore normal, then if $r_0 \in P$ the theorem is obviously true. If $r_0 \notin P$, then from Corollary \ref{corollary:firstvisit} $r_2$ is the first vertex on $r$ on $P$ and $r_1 \notin P$ from the constraint on indices, and the indices of the vertices of $r$ on $P$ form an interval. Since $r$ is traversable, there exists a path from $S$ to $r_0$ that does not visit a vertex on $r$ before $r_0$. If the path last leaves $P$ at $P_l$, then $l < j (= \textrm{INDEX}(P, r_2))$ since otherwise $l > \textrm{INDEX}(P, r_{|r| - 1})$ and $r_0$ and $r_{|r| - 1}$ will be strongly $r$-connected, contradicting $|r| \ge 3$. Now, let $k$ be the largest index such that $k < j$ and there exists a path between $P_k$ and $r_0$ that does not visit a vertex on $r$. Find the shortest of such path $p$. If $p$ contains at least two edges, then consider the simple cycle consisting of $p$, $P_{k, j}$ and the edge between $r_2$ and $r_0$. This cycle has a length of at least four edges and one can verify that it is impossible for any chord to exist, which contradicts the fact that $G$ is chordal. If $p$ contains only one edge, then $P_k$ is adjacent to $r_0$.
\end{proof}

With Theorem \ref{adjacent}, we can pre-compute for each $u \notin P$, the minimum $k$ such that $P_k$ is adjacent to $u$. With this we can check in $O(|R|)$ time whether $R$ corresponds to an odd $S$-separator path. Since the total number of edges ever involved in $R$ in the sub-procedure is $O(|E|)$. The entire sub-procedure runs in $O(|V| + |E|)$ time. 

However, there are still two issues left to address. Firstly, if we merge vertices using typical disjoint-set data structures, it would take $O(\alpha(|V|))$ time asymptotically per action, yielding a total time complexity of $O(|V| + |E|\alpha(|V|))$. We can improve this to $O(|V| + |E|)$: we can move all the actions offline by creating a graph representing all the merges, and a fat vertex will correspond to a connected component of the graph.

Secondly, checking whether a separator path is useful involves looking up the length of edges between some given pairs of vertices. Although these look-ups can easily be done in $O(1)$ per query if the graph is stored with an adjacency matrix, for many other ways one stores a graph (e.g. linked lists), the most obvious way of doing these operations in $O(1)$ time would be to use a hash table \footnote{Assuming the RAM model.}, which is somewhat awkward since hash tables inherently introduce randomness. An alternative way to hashing is by doing the look-ups offline. We can use a bucket for each vertex and store the queries into the bucket for either of the associated vertex. After that we can deal with each bucket alone. The look-ups for edges associated with a given vertex can be done in $O(1)$ time per query by using a 1-D array of size $|V|$All the look-ups can therefore be done in $O(|V| + Q)$ time offline where $Q$ is the number of queries, without the need for a hash table. 

\subsection{Computation of Normal Separator Paths Contained in a Simple Path} \label{sepcontained}

During the main procedure we find a path $P_0$ from $S$ to $T$ in $G_0$ and we want to find all separator paths contained in $P_0$. 

The sub-procedure will work as follows: firstly, let $G_3$ be the graph with all the edges on $P_0$ removed. We then find all the connected components in $G_3$. Then we have the following theorem:

\begin{theorem} \label{theosepcontained}
    If $(P_0)_{i, j} (0 \le i < j < |P_0|)$ is a separator path, then in $G_3$ all the vertices with odd indices on $(P_0)_{i, j}$ belong to a connected component $C_o$, and all the vertices with even indices on $(P_0)_{i, j}$ belong to a separate connected component $C_e$. Reversely, for $0 \le i < j < |P_0|$, if in $G_3$ all the vertices with odd indices on $(P_0)_{i, j}$ belong to a connected component $C_o$, and all the vertices with even indices on $(P_0)_{i, j}$ belong to a separate connected component $C_e$, and to no $(P_0)_{x, y} (0 \le x < y < |P_0|)$ where $(P_0)_{i, j} \subset (P_0)_{x, y}$ such condition applies (i.e. $(P_0)_{i, j}$ is maximal), then $(P_0)_{i, j}$ is a separator path.
\end{theorem}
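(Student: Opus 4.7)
The plan is to prove both implications, using Theorem \ref{theoadjacent} (skip-edges $r_k r_{k+2}$ on any separator path), Lemma \ref{lemmapathseparator1} ($r_0$ and $r_1$ not weakly $r$-connected), and the chordality of $G$.

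\emph{Forward direction.} Suppose $r = (P_0)_{i,j}$ is a separator path. By Theorem \ref{theoadjacent}, for each $0 \le k \le |r|-3$ the skip-edge $r_k r_{k+2}$ exists in $G$; since $r_k$ and $r_{k+2}$ are two apart on the simple path $P_0$, this edge is not on $P_0$ and so is present in $G_3$. These skip-edges link $r_0, r_2, r_4, \ldots$ into a single $G_3$-component and $r_1, r_3, \ldots$ into another. Lemma \ref{lemmapathseparator1} places $r_0$ and $r_1$ in distinct components of $G \setminus \{\text{edges of } r\}$; since $G_3$ removes strictly more edges, they also lie in distinct components of $G_3$. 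These two components are $C_e$ and $C_o$.

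\emph{Reverse direction.} Let $r = (P_0)_{i,j}$ be a maximal interval with the parity-component property ($C_e \ne C_o$), and write $D_k$ for the $G_3$-component of $(P_0)_k$. To show $r$ is separating, note that the component of $r_0$ in $G \setminus \{\text{edges of } r\}$ is obtained by starting from $C_e$ and attaching, via the two off-$r$ chains of $P_0$, the sets $D_k$ for $k < i$ (and, if $(P_0)_j \in C_e$, for $k > j$); a merger with $C_o$ occurs iff some chain-index $k$ has $D_k = C_o$. One-step maximality directly yields $D_{i-1} \ne C_o$ and the symmetric condition at the right boundary. For $\ell \ge 2$ with $D_{i-\ell} = C_o$ minimal (toward contradiction), a shortest $G_3$-path $q$ from $(P_0)_{i-\ell}$ to $(P_0)_{i+1}$ together with the $P_0$-segment $(P_0)_{i-\ell} \to \cdots \to (P_0)_{i+1}$ forms a cycle of length $\ge 4$, which in our chordal graph must admit a chord. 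A case analysis -- a chord between two chain-vertices is a non-$P_0$ edge lying in $G_3$, forcing their $D_k$'s to coincide and producing one of $C_e = C_o$, a violation of one-step maximality, or a smaller $\ell$; a chord incident to a $q$-interior vertex either permits a strict shortening of $q$ or reduces to the chain-vertex case -- eliminates every possibility, and a symmetric argument handles the right chain. For the minimality clause of the separator-path definition, once $r$ is separating we have $|r| \ge 3$ (length-one separators would be bridges, which are preprocessed away; a length-two separator path on $P_0$ would give a bad middle vertex, but no vertex of $P_0$ is bad), so $r$ is normal; any proper separating subpath $r' \subsetneq r$ would be normal for the same reason, and then $r$ and $r'$ would be two distinct normal separator paths sharing every edge of $r'$, contradicting Theorem \ref{nosharee}.

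The main obstacle is the chord case analysis in the separating step: the cases (chord between two chain-vertices, chord between a chain-vertex and a $q$-interior vertex, chord between two $q$-interior vertices) must be organized so that each possible configuration either collapses $C_e$ with $C_o$, violates one-step maximality, permits a strict descent on $\ell$ or on $|q|$, or otherwise produces a contradiction. The remaining parts are straightforward consequences of Theorem \ref{theoadjacent}, Lemma \ref{lemmapathseparator1}, and Theorem \ref{nosharee}.
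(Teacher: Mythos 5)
Your forward direction is fine and is essentially the paper's: the skip-edges from Theorem \ref{theoadjacent} are non-$P_0$ edges that glue the same-parity vertices together in $G_3$, while Lemma \ref{lemmapathseparator1} (equivalently Theorem \ref{pathseparatorprop1}) keeps $C_o$ and $C_e$ apart. The reverse direction, however, is where you diverge from the paper, and it does not close. The paper's own argument is short: take a \emph{minimal} subset $D$ of the edges of $P_0$ whose removal disconnects $C_o$ from $C_e$ (such a subset exists because $G_3$ itself disconnects them); Lemma \ref{minsepconnected} forces $D$ to induce a connected subgraph, hence to be the edge set of a contiguous subpath $(P_0)_{k,l}$; since every edge of $(P_0)_{i,j}$ joins a $C_e$-vertex to a $C_o$-vertex, every such edge must lie in $D$, so $(P_0)_{i,j}\subset(P_0)_{k,l}$ and maximality finishes the argument. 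You never invoke Lemma \ref{minsepconnected} in this direction, and the machinery you substitute for it has concrete holes.

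Specifically: (1) the claim ``a merger with $C_o$ occurs iff some chain-index $k$ has $D_k=C_o$'' is false as stated when $(P_0)_j\in C_o$: the left chain (attached to $C_e$) and the right chain (attached to $C_o$) can be glued through a shared third $G_3$-component, i.e.\ $D_k=D_l$ with $k<i$, $l>j$ and $D_k\notin\{C_o,C_e\}$, in which case $r$ fails to separate even though no chain vertex lies in $C_o$. (2) The chord analysis you flag as the main obstacle does not terminate under the descents you list: a chord between two \emph{interior} segment vertices that both lie in a common third component produces none of ``$C_e=C_o$'', a maximality violation, or a smaller $\ell$; and a chord joining two interior vertices of $q$ that happens to be an edge of $P_0$ is absent from $G_3$, so it does not shorten $q$. (3) The minimality clause is circular: Theorem \ref{nosharee} applies to two \emph{useful separator paths}, but whether $r$ is a separator path is exactly what is under proof, a proper separating subpath $r'$ is only known to be separating (not a separator path), and usefulness of either is not established --- indeed the paper deliberately checks usefulness as a separate step after applying this theorem. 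The single missing idea that makes the reverse direction short is Lemma \ref{minsepconnected}.
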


\begin{proof} [Proof of Theorem \ref{theosepcontained}]
    Firstly, if $(P_0)_{i, j}$ is a separator path, then straightforwardly from Theorem \ref{pathseparatorprop1} we know all the vertices with odd indices belong to the same connected component in $G_3$ and so do the ones with even indices. The two connected components are different since otherwise $(P_0)_{i, j}$ is not separating.
    
    Secondly, if in $G_3$ all the vertices with odd indices on $(P_0)_{i, j}$ belong to a connected component $C_o$, and all the vertices with even indices on $(P_0)_{i, j}$ belong to a separate connected component $C_e$, and $(P_0)_{i, j}$ is not a separator path. Then consider a minimal subset $D$ of edges on $P_0$ such that after removing $D$ from $G$, vertices in $C_o$ and vertices in $C_e$ are disconnected. From Lemma \ref{minsepconnected} we know there must be $0 \le k < l < |P_0|$ such that $D$ is the set of edges on $(P_0)_{k, l}$. If $(P_0)_{i, j} \not\subset (P_0)_{k, l}$, then a edge not in $D$ will be between $C_o$ and $C_e$ and removing $D$ from $G$ will not disconnect $C_o$ and $C_e$, which is a contradiction. Therefore $(P_0)_{i, j} \subset (P_0)_{k, l}$.
\end{proof}

With Theorem \ref{theosepcontained} we can design an algorithm based on the \textit{two-pointer} technique. We iterate through the index $j$ in the increasing order and maintain the minimum $i$ where $(P_0)_{i, j}$ satisfies the condition in the theorem. Every time we need to change $i$, $(P_0)_{i, j - 1}$ is maximal and is a separator path. If the minimum $i$ for $j = |P_0| - 1$ is less than $j$, $(P_0)_{i, j}$ is maximal and is a separator path. To make sure the separator path is normal, we check if it is useful. The pseudocode is as follows:

\begin{algorithm}
    \caption{Computation Normal Separator Paths on a Simple Path}
    \begin{algorithmic}[1]
        \Require Chordal Graph $G = \langle V, E\rangle$, a simple path $P_0$
        \Ensure $X_{\textrm{EXTRA}}$, the set of separator paths contained in $P_0$.
        \State $i \gets 0$
        \State $G_3 \gets $ $G$ with edges on $P_0$ removed.
        \For {$0 \le i < |P_0|$}
            \State $\textrm{BEL}_i \gets $the connected component $(P_0)_i$ belongs to in $G_3$
        \EndFor
        \State $X_{\textrm{EXTRA}} \gets \emptyset$
        \For {$j$ from $1$ to $|P_0| - 1$}
            \If {$j > 1$ \AND $\textrm{BEL}_j \ne \textrm{BEL}_{j - 2}$}
                \If {($i < j - 1$) \AND ($(P_0)_{i, j - 1}$ is useful)}
                    \State $X_{\textrm{EXTRA}} \gets X_{\textrm{EXTRA}} \cup \{(P_0)_{i, j - 1}\}$
                \EndIf
                \State $i \gets j - 1$
            \EndIf
            \If {$\textrm{BEL}_j = \textrm{BEL}_{j - 1}$}
                \State $i \gets j$ \Comment{It is not hard to see that $i$ always equals $j - 1$ before this line, and no new separator path is found}
            \EndIf
        \EndFor
        \If {$i < |P_0| - 1$}
            \State $X_{\textrm{EXTRA}} \gets X_{\textrm{EXTRA}} \cup \{(P_0)_{i, |P_0| - 1}\}$
        \EndIf \\
        \Return $X_{\textrm{EXTRA}}$
    \end{algorithmic}
\end{algorithm}

\newpage

\section{Correctness}

\subsection{Correctness of Normal Separator Paths} \label{correctnormal}

We will show that any separator paths contained in a path $P$ that can be produced by $\textrm{AVOID}(X)$ for some $X$ is indeed useful, and therefore normal.

\begin{lemma} \label{replacewithedge}
   Let $p$ be a simple path and let $0 \le i < j < |p|$ be such that $j - i > 1$ and $p_i$ and $p_j$ are adjacent, let $p ^ {\prime}$ be the path we get from $p$ after replacing $p_{i, j}$ by the edge between $p_i$ and $p_j$. For all $u, v \in V$, if $u$ and $v$ are weakly $p$-connected, then $u$ and $v$ are weakly $p ^ {\prime}$-connected.
\end{lemma}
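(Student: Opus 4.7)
The plan is to give a direct edge-substitution argument. Let $q$ be a path from $u$ to $v$ that shares no edge with $p$. I would like to build from $q$ a new walk $q'$ from $u$ to $v$ whose edge set is disjoint from the edge set of $p'$. Since the edge set of $p'$ is $(E(p)\setminus E(p_{i,j}))\cup\{(p_i,p_j)\}$, the only ``new'' edge appearing in $p'$ but not in $p$ is the shortcut edge $e^{*}=(p_i,p_j)$.

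First I would handle the easy case: if $q$ does not traverse $e^{*}$, then $E(q)\cap E(p')\subseteq E(q)\cap E(p)=\emptyset$, so $q'=q$ already works. Otherwise, form $q'$ by scanning $q$ and replacing every occurrence of $e^{*}$ with the subwalk $p_{i,j}$ (traversed in the direction matching how $q$ uses $e^{*}$; reversed otherwise). The resulting sequence of vertices is still a walk in $G$ because every consecutive pair is adjacent (edges of $q$ are edges, and consecutive pairs in $p_{i,j}$ are adjacent by definition). A walk between $u$ and $v$ is a ``path'' in the sense defined in the paper, so this is a legitimate candidate for showing weak $p'$-connectivity.

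Next I would bound the edge set of $q'$: by construction, $E(q')\subseteq (E(q)\setminus\{e^{*}\})\cup E(p_{i,j})$. Intersecting with $E(p')=(E(p)\setminus E(p_{i,j}))\cup\{e^{*}\}$ and distributing gives four terms, three of which are visibly empty (two because $E(q)\cap E(p)=\emptyset$, one because we removed $e^{*}$ from $E(q)$, and one because $E(p_{i,j})\cap (E(p)\setminus E(p_{i,j}))=\emptyset$).

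The one step that requires a small argument, and which I view as the only real obstacle, is showing $E(p_{i,j})\cap\{e^{*}\}=\emptyset$, i.e.\ that the shortcut edge $e^{*}=(p_i,p_j)$ is not already one of the edges on $p_{i,j}$. This is where both simplicity of $p$ and the assumption $j-i>1$ enter: any edge on $p_{i,j}$ has the form $(p_k,p_{k+1})$ with $i\le k<j$, and equality (as an unordered pair) $\{p_k,p_{k+1}\}=\{p_i,p_j\}$ would force either $k=i,\ k+1=j$ (contradicting $j-i>1$) or a repetition among the vertices $p_i,p_{i+1},\ldots,p_j$ (contradicting simplicity of $p$). With this verified, all four intersection terms are empty, so $E(q')\cap E(p')=\emptyset$, and $u$ and $v$ are weakly $p'$-connected, as required.
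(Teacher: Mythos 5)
Your proposal is correct and follows essentially the same route as the paper: substitute the shortcut edge $(p_i,p_j)$ in the witness path by the subwalk $p_{i,j}$ and check that the resulting walk shares no edge with $p'$. You merely spell out the edge-set bookkeeping (in particular that $(p_i,p_j)\notin E(p_{i,j})$, via simplicity and $j-i>1$) that the paper leaves implicit.
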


\begin{proof}
    For $u, v \in V$ such that $u$ and $v$ are weakly $p$-connected., for any path between $u$ and $v$ that do not share an edge with $p$, if the path goes through the edge between $p_i$ and $p_j$, replace that edge with $p_{i, j}$ and we find a path between $u$ and $v$ that do not share an edge with $p ^ {\prime}$. Therefore $u$ and $v$ are weakly $p ^ {\prime}$-connected.
\end{proof}

\begin{theorem} \label{correctnormaltheo}
    If $r$ is a separator path that is not useful, then for all simple path $p$ such that $r \subset p$, there exists another simple path $p ^ {\prime}$ of shorter length such that for all separator path $r ^ {\prime} \subset p ^ {\prime}$, $r ^ {\prime} \subset p$.
\end{theorem}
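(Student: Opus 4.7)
The plan is to perform a single local shortcut operation on $p$. Since $r$ is not useful, by definition there exists some index $i$ with $0 \le i < |r|-2$ for which $\|r_{i,i+2}\| \not< L(r_i, r_{i+2})$. Combined with the paper's standing no-ties assumption on path lengths, this gives the strict inequality $\|r_{i,i+2}\| > L(r_i, r_{i+2})$. By Theorem~\ref{theoadjacent} the edge $e = (r_i, r_{i+2})$ exists in $G$, so I can define $p'$ as the path obtained from $p$ by replacing the two-edge segment $r_i \to r_{i+1} \to r_{i+2}$ with the single edge $e$. Because $p$ is simple and only the single vertex $r_{i+1}$ is deleted from the vertex sequence, $p'$ is also simple, and $\|p'\| = \|p\| - \|r_{i,i+2}\| + L(r_i,r_{i+2}) < \|p\|$, as required.

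The substantive claim is that every separator path $r' \subset p'$ satisfies $r' \subset p$. I would split the verification into two cases depending on whether the contiguous subpath $r'$ of $p'$ traverses the new edge $e$. In the easy case where it does not, $r'$ lies entirely on one side of the excision point and is therefore literally a contiguous subpath of $p$, so $r' \subset p$ holds by inspection. The whole content of the proof is in the other case, which I plan to rule out by contradiction.

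So suppose $r' \subset p'$ is a separator path containing $e$, and let $a$, $a+1$ be the two consecutive positions within $r'$ occupied by $r_i$ and $r_{i+2}$ respectively. Because $r' \subset p'$ and the only vertex removed in passing from $p$ to $p'$ is $r_{i+1}$, the vertex $r_{i+1}$ is not on $r'$, and consequently the two edges $(r_i, r_{i+1})$ and $(r_{i+1}, r_{i+2})$ are not edges of $r'$ either. Thus the length-two detour $r_i \to r_{i+1} \to r_{i+2}$ exhibits $r_i$ and $r_{i+2}$ as strongly $r'$-connected. But Theorem~\ref{pathseparatorprop1}, applied to the separator path $r'$, forces $r'_a$ and $r'_{a+1}$ to be strongly $r'$-connected only when their index gap equals $2$, whereas here the gap is $1$. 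This is the desired contradiction, so no separator path of $p'$ uses the new edge, which completes the proof.

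I expect the only delicate point is the bookkeeping in Case~2: one must take care that $r_{i+1} \notin r'$ (which is where we use both $r' \subset p'$ in the \emph{path-containment} sense, i.e., as a contiguous subsequence, and the fact that $r_{i+1}$ is the unique vertex dropped when forming $p'$). Once this structural observation is recorded, everything else is a one-line invocation of Theorem~\ref{pathseparatorprop1}, so apart from this check the argument is essentially immediate and requires no new machinery.
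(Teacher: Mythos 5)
Your proof is correct, and it uses the same shortcut construction as the paper: pick an index $i$ witnessing non-usefulness, replace $r_{i,i+2}$ by the chord guaranteed by Theorem~\ref{theoadjacent}, and observe that any separator path $r'\subset p'$ that avoids the new edge is literally a contiguous subpath of $p$. Where you diverge is in how you kill the remaining case. The paper argues via \emph{weak} connectivity: it first proves the auxiliary Lemma~\ref{replacewithedge} (shortcutting preserves weak $p$-connectivity), then notes that the endpoints of the offending edge are weakly $p$-connected but, by Lemma~\ref{lemmapathseparator1}, cannot be weakly $r'$-connected and hence cannot be weakly $p'$-connected. You instead exhibit the two-edge detour $r_i\to r_{i+1}\to r_{i+2}$ directly: since $p$ is simple, $r_{i+1}$ is the unique vertex dropped in forming $p'$, so it lies off $r'$, which makes two \emph{consecutive} vertices of $r'$ strongly $r'$-connected, contradicting Theorem~\ref{pathseparatorprop1}. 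Your route is the more economical one --- it needs no analogue of Lemma~\ref{replacewithedge} and no appeal to weak connectivity at all --- while the paper's detour through Lemma~\ref{replacewithedge} buys a reusable monotonicity fact about shortcutting. Both hinge on the same structural point, namely that a separator path cannot have its consecutive vertices connected around itself; you simply invoke the strong-connectivity form of that fact where the paper invokes the weak-connectivity form.
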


\begin{proof}
    If $\|r_{i, i + 2}\| > L(r_i, r_{i + 2})$ for some $0 \le i < |r| - 2$, let $p ^ {\prime}$ be $p$ after replacing the $r_{i, i + 2}$ with the edge between $r_i$ and $r_{i + 2}$. $p ^ {\prime}$ will have a shorter length. For separator path $r ^ {\prime} \subset p ^ {\prime}$, if $r ^ {\prime} \notin p$, suppose the edge between $(r ^ {\prime})_i$ and $(r ^ {\prime})_{i + 1}$ is not on $p$. $(r ^ {\prime})_i$ and $(r ^ {\prime})_{i + 1}$ are not weakly $r ^ {\prime}$-connected according to Lemma \ref{lemmapathseparator1}, and therefore not weakly $p ^ {\prime}$ connected, but are weakly $p$-connected. This is a contradiction of Lemma \ref{replacewithedge}.
\end{proof}

Therefore, if $P$ contains a separator path that is not useful, from Theorem \ref{correctnormaltheo} we know we can make $P$ shorter while still avoiding $X$, which contradicts the fact that $P$ is the shortest.

\subsection{Correctness of the Set $X$} \label{correctx}

In this sub-section, we show that the set $X$ computed in the main algorithm is indeed such that the shortest path that avoids $X$ does not contain a new normal separator path not in $X$. 

Let $P_{\textrm{AVOID}}$ be the path produced by $\textrm{AVOID}(X)$. Let $X_n$ be the set of normal separator paths that $P_{\textrm{AVOID}}$ contains. Obviously, $X \cap X_n = \emptyset$. Our goal is to show that $X_n = \emptyset$. Now, suppose we have a hypothetical $n \in X_n$, and we want to prove such $n$ actually does not exist and therefore conclude $X_n = \emptyset$. If $P_i$ is an inner vertex of $n$, we show that we have two vertices $a = P_j$ and $b = P_k$, where $j < i < k$, such that we have the following four conditions:

\begin{enumerate}
    \item For any $i < x < j$, if $P_x$ is the inner vertex of an $S$-separator path $s ^ {\prime}$, then $(s ^ {\prime})_{|s ^ {\prime}| - 2} \notin P_{\textrm{AVOID}}$; if $P_x$ is the inner vertex of an $T$-separator path $t ^ {\prime}$, then $t ^ {\prime})_1 \notin P_{\textrm{AVOID}}$
    \item No inner vertex of a separator path $r \in X_{\textrm{EXTRA}}$ is contained in $P_{i, j}$
    \item $a = s_{|s| - 3}$ for an $S$-separator path $s$, and $b = t_{2}$ for a $T$-separator path $t$. 
    \item $a, b \in P_{\textrm{AVOID}}$, $a$ is the last inner vertex of $s$ that $P_{\textrm{AVOID}}$ visits, and $b$ is the first inner vertex of $t$ that $P_{\textrm{AVOID}}$ visits.
\end{enumerate}

We show that the indices of the inner vertices of a $S$-separator path or a $T$-separator path on $P$ must be an interval. Note that if $r$ is an $S$-separator path and $r_1 \in P$, $r_1$ must be adjacent to $r_2$ on $P$ since otherwise $r_1$ and $r_2$ are strongly $r$-connected. Similar reasoning applies for $T$-separator paths. We can define an order on the set $X_{ST}$ according to the order their corresponding intervals appear on $P$. For distinct $x, y \in X_{ST}$, $x < y$ if the interval for $x$ appears closer to the $S$-side than the interval for $y$, that is, $\forall P_u \in \textrm{INNER(x)}$ and $\forall P_v \in \textrm{INNER(y)}$, $u < v$. Figure \ref{fig:main5} shows the intervals and the ordering in the example graph we have shown before.

\begin{figure}
    \includegraphics[width=1\textwidth]{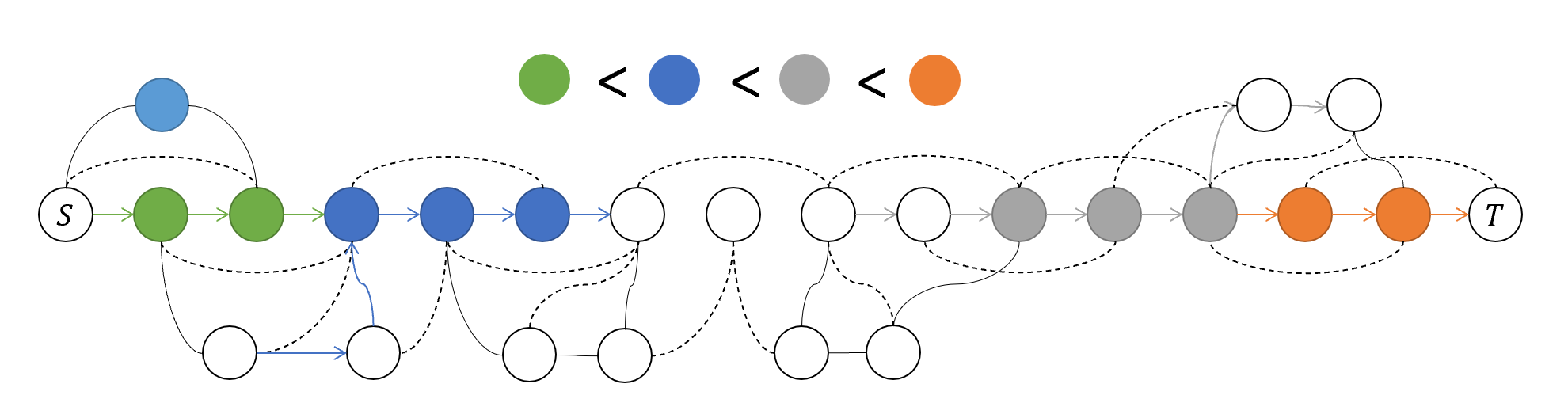}
    \caption{Intervals for $X_{ST}$}
    \label{fig:main5}
\end{figure}

We have the following lemma:

\begin{lemma} \label{lemma:storder}
    Let $p$ be a simple path from $S$ to $T$.
    
    If $r$ is a normal separator path such that the indices of the set of inner vertices of $r$ on $P$ is an interval, and $P_u$ is an inner vertex of $r$ on $P$, then:
    
    \begin{itemize}
        \item After $p$ visits the an inner vertex of $r$ for the last time, $p$ does not visit an inner vertex of any normal separator path $x$ such that there exists $P_k \in \textrm{INNER}(x)$ (i.e. an inner vertex of $x$ on $P$) such that $k < u$.
        \item Before $p$ visits an inner vertex of $r$ for the first time, $p$ does not visit an inner vertex of any normal separator path $x$ such that there exists $P_k \in \textrm{INNER}(x)$ such that $k > u$.
    \end{itemize}
    
    Consequently, if $r \in X_{ST}$:
    
    \begin{itemize}
        \item After $p$ visits an inner vertex of $r$ for the last time, $p$ does not visit an inner vertex of any separator path $x \in X_{ST}$ where $x < r$.
        \item Before $p$ visits an inner vertex of $r$ for the first time, $p$ does not visit an inner vertex of any separator path $x \in X_{ST}$ where $r < x$.
    \end{itemize}
\end{lemma}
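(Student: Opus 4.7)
The plan has three main steps. First, reduce the first bullet to the case $x \ne r$ (if $x = r$ the statement is vacuous) and establish $k < a$: by Theorem~\ref{nosharev}, distinct normal separator paths cannot share an inner vertex, so $P_k \in \textrm{INNER}(x)$ forces $P_k \notin \textrm{INNER}(r)$, hence $k \notin [a,b]$, and combined with $k < u \le b$ this gives $k < a$.

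Second, I show that every inner vertex of $x$ lies in the connected component $C_S$ of $G \setminus E(r)$ that contains $S$ (by Lemma~\ref{minsepconnected} this graph has exactly two components; call the other one $C_T$). Since $k < a$, the prefix $P_{0,k}$ visits no inner vertex of $r$ by the interval hypothesis, and every edge of $r$ has at least one inner endpoint because $|r| \ge 4$; therefore $P_{0,k}$ uses no edge of $r$, and $P_k \in C_S$. By Theorem~\ref{nosharee}, no edge of $x$ is an edge of $r$, so every edge $x_i x_{i+1}$ of $x$ survives in $G \setminus E(r)$; thus $x$ forms a connected subgraph of $G \setminus E(r)$, confined to a single component, and that component is $C_S$ since it contains $P_k$. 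In particular every inner vertex of $x$ lies in $C_S$.

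Third, assume for contradiction that $p$ visits an inner vertex of $x$ at some position $j' > j_{\textrm{last}}$, where $j_{\textrm{last}}$ is the last position on $p$ at which $p$ meets an inner vertex of $r$. I argue that $p_{j_{\textrm{last}}+1}, \ldots, p_{|p|-1}$ all lie in $C_T$. After position $j_{\textrm{last}}$ the path $p$ can use only non-$r$-edges apart from possibly the single edge $p_{j_{\textrm{last}}} p_{j_{\textrm{last}}+1}$, since any other $r$-edge would force another visit to an inner vertex of $r$ (every $r$-edge has an inner endpoint). A brief case analysis rules out $p_{j_{\textrm{last}}+1} = r_0$: from $r_0 \in C_S$ the path would be stranded from $T \in C_T$ without using another $r$-edge and therefore visiting another inner vertex. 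Hence the suffix of $p$ starting at $j_{\textrm{last}}+1$ remains inside a single component of $G \setminus E(r)$; since it ends at $T$, that component is $C_T$. Combining with step two, $p_{j'} \in C_T$ while every inner vertex of $x$ lies in $C_S$, contradicting the assumption.

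The second bullet is the mirror argument with $S \leftrightarrow T$ and ``last'' $\leftrightarrow$ ``first''. The $X_{ST}$-consequence is immediate from the remark preceding the lemma: the inner vertices on $P$ of any $S$- or $T$-separator form an interval, and these intervals are disjoint for distinct separators by Theorem~\ref{nosharev}, so the order $<$ on $X_{ST}$ is well defined, and $x < r$ (resp.\ $r < x$) supplies the required $P_k$ with $k < u$ (resp.\ $k > u$). The main obstacle is the confinement claim of step two; the subtle point is that no edge of $x$ crosses between $C_S$ and $C_T$, and Theorem~\ref{nosharee} resolves it cleanly by guaranteeing that every edge of $x$ survives the removal of $E(r)$.
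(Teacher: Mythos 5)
There is a genuine gap in steps two and three: you implicitly assume that $S$ and $T$ lie in \emph{different} connected components of $G \setminus E(r)$, and this fails whenever $|r|$ is odd. Since $E(r)$ is a minimal separating set, $G \setminus E(r)$ has exactly two components and every edge $r_i r_{i+1}$ has one endpoint in each; hence $r_0, r_2, r_4, \ldots$ lie in one component and $r_1, r_3, \ldots$ in the other. By traversability, $S$ is weakly $r$-connected to $r_0$ and $T$ to $r_{|r|-1}$, so when $|r|$ is odd (e.g.\ a normal separator path with $5$ vertices, which is permitted since only $|r| > 3$ is required), $S$ and $T$ sit in the \emph{same} component. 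In that case $C_S = C_T$, your case analysis ruling out $p_{j_{\textrm{last}}+1} = r_0$ (``the path would be stranded from $T$'') collapses, and the final contradiction ``$p_{j'}$ lies in $C_T$ while every inner vertex of $x$ lies in $C_S$'' says nothing. The two-component, weak-connectivity invariant is simply too coarse to distinguish $r_0$ from $r_{|r|-1}$ when their indices have the same parity.

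The paper's proof avoids this by using the finer relation of \emph{strong} $r$-connectivity (witnessing paths may meet $r$ only at their own endpoints and use no edge of $r$), for which Theorem \ref{pathseparatorprop1} gives that $r_i$ and $r_j$ are strongly $r$-connected only when $|i-j|=2$, independent of parity. Concretely: an inner vertex $v$ of $x$ visited by $p$ after the last inner vertex of $r$ is strongly $r$-connected to $S$ (hence to $r_0$) via $P_{0,k}$ followed by the segment of $x$ from $P_k$ to $v$, and strongly $r$-connected to $T$ (hence to $r_{|r|-1}$) via the remaining suffix of $p$; composing forces $r_0$ and $r_{|r|-1}$ to be strongly $r$-connected, impossible for $|r|>3$. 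Your steps one and two --- locating $P_k$ with $k<a$ and invoking Theorems \ref{nosharee} and \ref{nosharev} to keep $x$ clear of the edges and inner vertices of $r$ --- are sound and are exactly the ingredients that make those witnessing paths legitimate; what must change is the connectivity notion in step three, upgraded from components of $G \setminus E(r)$ to strong $r$-connectivity.
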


\begin{proof}  
    Suppose after $p$ visits the an inner vertex of $r$ for the last time, $p$ visits an inner vertex $v$ of any normal separator path $x$ such that there exists $P_k \in \textrm{INNER}(x)$ such that $k < u$. Then $v$ are $S$ are strongly $r$-connected. Since $S$ and $r_0$ are strongly $r$-connected, $v$ and $r_0$ are strongly $r$-connected. Since $p$ does not visit any inner vertex of $r$ after $v$, $v$ and $r_{|r| - 1}$ are strongly $r$-connected and therefore $r_0$ and $r_{|r| - 1}$ are strongly $r$-connected, which is impossible since $|r| > 3$.
    
    The other direction is similar.
\end{proof}

We argue that both $P_{\textrm{AVOID}}$ and $P_0$ must visit $a$ and $b$, and $a$ before $b$. Moreover, let $p_{\textrm{AVOID}}$ be the segment of $P_{\textrm{AVOID}}$ from $a$ to $b$, and $p_0$ be the segment of $P_0$ from $a$ to $b$, and we argue that $p_{\textrm{AVOID}} = p_0$. We will show it by contradiction both when $\|p_{\textrm{AVOID}}\| < \|p_0\|$ and $\|p_{\textrm{AVOID}}\| > \|p_0\|$. To do this, we will argue that $p_{\textrm{AVOID}}$ is also a path in $G_0$. Therefore if $\|p_{\textrm{AVOID}}\| < \|p_0\|$ one can replace, in $P_0$, the segment $p_0$ by $p_{\textrm{AVOID}}$ and find a shorter path than $P_0$ between $S$ and $T$ in $G_0$, which is a contradiction. We also argue that in $P_{\textrm{AVOID}}$, if we replace $p_{\textrm{AVOID}}$ by $p_0$ and get $P_{\textrm{NEW}}$, $P_{\textrm{NEW}}$ still avoids $X$, and therefore if $\|p_{\textrm{AVOID}}\| < \|p_0\|$, $P_{\textrm{NEW}}$ becomes a shorter $X$-avoiding path in $G$ between $S$ and $T$ in $G$ than $P_{\textrm{AVOID}}$, which is a contradiction.

We show that $P_0$, the shortest path between $S$ and $T$ in $G_0$, must visit both $a$ and $b$, and $a$ before $b$. Recall that $a = s_{|s| - 3}$ and $b = t_2$. Note that in $G_0$, $t_{2}$ is an articulation vertex. To show this, it suffices to show that any simple path between $S$ and $T$ visits $t_2$. Consider Corollary \ref{corollary:firstvisit}. If the first vertex of $t$ that $P_0$ visits is not $t_2$, that vertex is $t_0$. It follows from Corollary \ref{noreverse} that the path visits some other vertex of $t$ after visiting $t_0$. However, since the edge between $t_0$ and $t_1$ does not exist in $G_0$ and the only other vertex of $t$ that is strongly $t$-connected to $t_0$ is $t_2$, the path must also visit $t_2$. Note that this reasoning also shows that $b (t_2)$ is the first inner vertex of $t$ that $P_0$ visits. Similarly, $a (s_{|s| - 3})$ is an articulation vertex and is the last inner vertex that $P_0$ visits. Therefore $P_0$ must visit both $a$ and $b$ and since $s < t$, if $b$ is visited before $a$, we immediately violate Lemma \ref{lemma:storder}. Therefore, $a$ is visited before $b$.

By definition $P_{\textrm{AVOID}}$ visits both $a$ and $b$. It remains to show that $a$ is visited before $b$. Suppose $b$ is visited before $a$. Let $p_{ba}$ be the segment on $P_{\textrm{AVOID}}$ from $b$ to $a$, and $p_{ab}$ be the reverse of $p_{ba}$. Since $p_{ba}$ contains $n$, $p_{ab}$ contains the reverse of $n$. Consider the path concatenated by $P_{0, i}$ (the segment from $S$ to $a$), $p_{ab}$ and $P_{j, |P| - 1}$ (the segment from $b$ to $T$). This is a path that contains $p_{ab}$ and therefore the reverse of $n$. The reverse of $n$ is also traversable and both $n$ and its reverse are normal separator paths. This violates Corollary \ref{noreverse}.

Now we show that $p_{\textrm{AVOID}}$, the segment of $P_{\textrm{AVOID}}$ from $a$ to $b$, is a path in $G_0$. It suffices to show that it does not contain the tail of an $S$-separator path and the head of a $T$-separator path. Let $x$ be an $S$-separator path and we want to show that its tail is not on $p_{\textrm{AVOID}}$. The case where $x \in \{s, t\}$ is trivial. The case where $s < x < t$ is straightforward from the first condition. The case where $x < s$ or $x > t$ is straightforward from Lemma \ref{lemma:storder}. The case for $T$-separator paths is similar.

Now we show that $P_{\textrm{NEW}}$, the path we get after replacing $p_{\textrm{AVOID}}$ by $p_0$ in $P_{\textrm{AVOID}}$, still avoids $X$. Consider $x \in X$. The first case is if $x \in X_{ST}$. In this case if $x < s$ or $x > t$, then from Lemma \ref{lemma:storder} no inner vertices of $x$ and therefore no edges on $x$ are on $p_{\textrm{AVOID}}$. Since $P_{\textrm{AVOID}}$ does not contain $x$, neither does $P_{\textrm{NEW}}$. If $x \in \{s, t\}$, $p_{\textrm{AVOID}}$ also does not contain an edge on $x$ due to the fourth condition. If $s < x < t$, without loss of generality suppose $x$ is an $S$-separator path. From the first condition, $x_{|x| - 2} \notin P_{\textrm{AVOID}}$ and therefore $x_{|x| - 2, |x| - 1} \notin P_{\textrm{AVOID}}$. Since $p_0$ is a path in $G_0$, $x_{|x| - 2, |x| - 1} \notin p_0$. Therefore $x_{|x| - 2, |x| - 1} \notin P_{\textrm{NEW}}$ and $P_{\textrm{NEW}}$ does not contain $x$. If $x \in X_{\textrm{EXTRA}}$, we argue that no inner vertex of $x$ and therefore no edge on $x$ is on $p_0$, which will finish the proof. Suppose an inner vertex $u$ of $x$ is on $p_0$. From Corollary \ref{corollary:innervisit}, we know an inner vertex $v$ of $x$ is on $P$, and from the second condition either $v < s$ or $v > t$, both would violate Lemma \ref{lemma:storder}.
  
It remains to show that such $a$ and $b$ exist. For a separator path $r$, a path $p$ is called $r$-free if $p$ does not visit a bad vertex and, except at its endpoints, $p$ does not visit a vertex of $r$. A normal separator $r$ is said to be \textit{truly useful} if, in addition to being useful, for all $1 \le i < |r| - 3$, the shortest $r$-free path between $r_i$ and $r_{i + 2}$ is has a length greater than $\|r_{i, i + 2}\|$. Let $X ^ {+} = X \cup X_n$. We have the following:

\begin{lemma} \label{trulyuseful}
    All separator paths in $X ^ {+} \backslash X_{ST}$ are truly useful.
\end{lemma}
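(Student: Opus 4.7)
The plan is to argue by contradiction, leveraging the minimality of whichever path of the main algorithm placed $r$ into $X^{+}$. Assume some $r \in X^{+} \setminus X_{ST}$ is not truly useful. Since $r$ is normal it is already useful, so the failure must be of the shortcut condition: there exist $1 \le i < |r|-3$ and an $r$-free path $q$ between $r_i$ and $r_{i+2}$ with $\|q\| \le \|r_{i,i+2}\|$, sharpened to $\|q\| < \|r_{i,i+2}\|$ by the no-tie assumption. We then split on which set produced $r$, noting that $X \cap X_n = \emptyset$ makes the two subcases disjoint.

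Case $r \in X_{\textrm{EXTRA}}$: by definition $r \subset P_0$, the shortest $S$-to-$T$ path in $G_0$. I splice $q$ into $P_0$ in place of the subpath $r_{i,i+2}$ to obtain a walk $P_0'$ from $S$ to $T$ of strictly smaller total length. Because the range $1 \le i < |r|-3$ keeps both endpoints of the splice as inner vertices of $r$ and $q$ is $r$-free, the walk can be converted into a simple path that is still strictly shorter. The core task is to show $P_0'$ lives in $G_0$: $r$-freeness rules out bad vertices on $q$, so it remains to argue $q$ uses no tail of an $S$-separator and no head of a $T$-separator. Any such forbidden edge would force, via Theorems \ref{nosharee} and \ref{nosharev}, the corresponding $S$- or $T$-separator to share an inner vertex or an edge with $r$, contradicting those theorems; the endpoint interactions at $r_i$ and $r_{i+2}$ are cleared up using Corollary \ref{noreverse}. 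The resulting path in $G_0$ is strictly shorter than $P_0$, contradicting minimality.

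Case $r \in X_n$: by definition $r \subset P_{\textrm{AVOID}}$. I splice in the same manner inside $P_{\textrm{AVOID}}$. The new walk is still simple, visits no bad vertex, and is strictly shorter. It remains to verify it avoids every $x \in X$; since $r \notin X$, the only worry is that the splice introduces some $x \in X$ that was not contained in $P_{\textrm{AVOID}}$ beforehand. If $q$ contained such an $x$ entirely, then all of its edges and inner vertices would lie on $q$; Theorems \ref{nosharee} and \ref{nosharev} then say $x$ shares neither an edge nor an inner vertex with $r$, and a case analysis using Corollary \ref{traorder} to control how $x$ could connect $r_i$ to $r_{i+2}$ rules out this possibility. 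If $x$ straddles the boundary between $q$ and the untouched portion of $P_{\textrm{AVOID}}$, passing through $r_i$ or $r_{i+2}$, the same no-sharing results force a contradiction with either $x$ already being a subpath of $P_{\textrm{AVOID}}$ or with $x$ being normal.

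The principal obstacle in both cases is exactly this forbidden-structure check on the spliced path: ruling out head/tail edges in Case 1 and wholly new separators from $X$ in Case 2. The same toolkit — Theorems \ref{nosharee} and \ref{nosharev}, with Corollary \ref{noreverse} and Corollary \ref{traorder} handling boundary behaviour at $r_i$ and $r_{i+2}$, the two places where $q$ is permitted to touch $r$ — handles both, after which strict length reduction plus the no-tie assumption closes the contradiction.
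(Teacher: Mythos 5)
Your overall strategy matches the paper's: assume a shortcut $q$ between $r_i$ and $r_{i+2}$ with $\|q\| < \|r_{i,i+2}\|$ exists, splice it into whichever shortest path ($P_0$ or $P_{\textrm{AVOID}}$) put $r$ into $X^{+}$, and contradict minimality. The gap is in the step where you certify that the spliced path is still admissible. You claim that if $q$ contained the tail of an $S$-separator path (resp.\ the head of a $T$-separator path, resp.\ an element of $X$), then Theorems \ref{nosharee} and \ref{nosharev} would force that separator path to share an edge or an inner vertex with $r$. That inference does not hold: those theorems constrain two \emph{separator paths}, whereas $q$ is merely an $r$-free path whose interior is disjoint from $r$ by definition. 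A separator path $x$ meeting $q$ in its interior need share nothing with $r$, so no contradiction with the no-sharing theorems arises; and the boundary appeals to Corollary \ref{noreverse} and Corollary \ref{traorder} do not repair this, since Corollary \ref{traorder} is a statement about the order in which a simple $S$--$T$ path visits the vertices of a single separator path, not about which other separator paths $q$ may touch off $r$.

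What is actually needed---and what the paper isolates as Lemma \ref{lemma:noinner}---is the stronger claim that for $1 \le i < |r|-3$, no $r$-free path between $r_i$ and $r_{i+2}$ can visit an inner vertex of \emph{any} other normal separator path $r^{\prime}$. Its proof is not a sharing argument: one takes an $S$--$T$ path witnessing traversability of $r^{\prime}$, considers the last vertex of $r$ it visits before $r^{\prime}$ and the first it visits after, and shows that either $(r^{\prime})_0$ and $(r^{\prime})_{|r^{\prime}|-1}$ become strongly $r^{\prime}$-connected, or both $r_i$ and $r_{i+2}$ become strongly $r$-connected to a common third vertex of $r$ (or to $S$ or $T$, hence to $r_0$ or $r_{|r|-1}$), violating Theorem \ref{pathseparatorprop1}. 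Once that lemma is available, both of your admissibility checks become immediate, because $q$ then cannot meet the inner vertex $s_{|s|-2}$ of an $S$-separator path, the inner vertex $t_1$ of a $T$-separator path, or any inner vertex of an element of $X$. Without it (or an equivalent argument), the two checks on which your contradiction rests are unsupported, so the proof as written does not go through.
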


To prove Lemma \ref{trulyuseful}, we introduce the following lemma, to be proven at the end of this sub-section.

\begin{lemma} \label{lemma:noinner}
     Let $r$ be a normal separator. For all $1 \le i < |r| - 3$, any $r$-free path between $r_i$ and $r_{i + 2}$ does not visit an inner vertex of another normal separator path.
\end{lemma}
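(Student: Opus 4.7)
The plan is to assume that there exists an $r$-free path $p$ from $r_i$ to $r_{i+2}$ that visits an inner vertex $q_m$ of some normal separator $q \neq r$, and derive a contradiction by exhibiting a simple $S$-$T$ path that avoids every inner vertex of $q$, contradicting Corollary \ref{corollary:innervisit} applied to $q$.

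First I would pin down the location of $q_m$ relative to $r$. By Theorem \ref{nosharev}, $q_m$ is not an inner vertex of $r$, so $q_m \neq r_i, r_{i+2}$, and hence $q_m$ lies strictly in the interior of $p$, which avoids $V(r)$. I then observe that the interiors of two different components $C^r_{a, a+2}$ and $C^r_{a', a'+2}$ of $G \setminus E(r)$ are never joined by an edge: such an edge would make its two endpoints strongly $r$-connected, forcing one endpoint to be strongly $r$-connected to the four vertices $\{r_a, r_{a+2}, r_{a'}, r_{a'+2}\}$, contradicting its membership in only one such component. Combined with the fact that $r_i$'s non-$r$ neighbors lie in $C^r_{i-2,i} \cup C^r_{i,i+2}$ and symmetrically for $r_{i+2}$, this forces $p$'s entire interior to lie in $C^r_{i, i+2}$ (the only component simultaneously bordering $r_i$ and $r_{i+2}$). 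In particular $q_m \in C^r_{i, i+2}$.

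Next I would propagate this containment along all of $q$. By Theorem \ref{nosharee} consecutive $q$-vertices are joined by non-$r$ edges, so an inductive step along $q$, combined with the non-adjacency-of-distinct-interiors fact, shows that each inner vertex $q_j$ either lies in $C^r_{i, i+2}$ or lies in $V(r)$. But Theorem \ref{nosharev} forces $q_j \in V(r)$ to satisfy $q_j \in \{r_0, r_{|r|-1}\}$, and those vertices lie in $C^r_{0,2}$ and $C^r_{|r|-3,|r|-1}$, which under the hypothesis $1 \le i < |r|-3$ are distinct from $C^r_{i,i+2}$; propagating adjacency then yields a contradiction. Hence every inner vertex of $q$ lies in $C^r_{i,i+2}$, and the endpoints $q_0, q_{|q|-1}$ each lie in $C^r_{i,i+2} \cup \{r_i, r_{i+2}\}$.

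Finally I would use traversability of $q$. Let $Q$ be a simple $S$-$T$ path containing $q$. Since $C^r_{i,i+2}$'s only $r$-boundary vertices are $r_i, r_{i+2}$ and since $S \in C^r_{0,2}, T \in C^r_{|r|-3,|r|-1}$ both differ from $C^r_{i,i+2}$, any maximal run of $Q$ inside $C^r_{i, i+2}$ must enter and exit through distinct elements of $\{r_i, r_{i+2}\}$. Simplicity of $Q$ then forces exactly one such run, and it must contain every inner vertex of $q$. WLOG write $Q = S \to \cdots \to r_i \to (\text{run}) \to r_{i+2} \to \cdots \to T$. Replace the middle portion by the single edge $(r_i, r_{i+2})$, which exists by Theorem \ref{theoadjacent} and is absent from $q$ by Lemma \ref{lemmanoshare}. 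The resulting walk $Q'$ is a simple $S$-$T$ path that never enters $C^r_{i,i+2}$ and therefore visits no inner vertex of $q$, contradicting Corollary \ref{corollary:innervisit}. I expect the main obstacle to be the case analysis in the propagation step, particularly verifying that no endpoint or near-endpoint of $q$ can sneak out of $C^r_{i,i+2} \cup \{r_i, r_{i+2}\}$ via $r_0$ or $r_{|r|-1}$; this is exactly where the hypothesis $1 \le i < |r|-3$ is needed.
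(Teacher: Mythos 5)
Your proof is correct, and it reaches the same contradiction machinery (Theorem \ref{pathseparatorprop1} plus traversability of $q$) but by a genuinely different decomposition than the paper's. The paper takes a simple $S$--$T$ path $p'$ containing $q$, identifies the last vertex $u$ of $r$ met before the block $q$ and the first vertex $v$ met after, and splits into two cases: if $\{u,v\}=\{r_i,r_{i+2}\}$ it shortcuts across the chord $r_ir_{i+2}$ to make $q_0$ and $q_{|q|-1}$ strongly $q$-connected; otherwise it routes through the $r$-free path $p$ to make $u$ (or $r_0$, when $u=S$) strongly $r$-connected to both $r_i$ and $r_{i+2}$, and either branch contradicts Theorem \ref{pathseparatorprop1}. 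Your confinement argument --- trapping every inner vertex of $q$ in the pocket $C^r_{i,i+2}$ so that any $S$--$T$ path through $q$ must enter and leave through $r_i$ and $r_{i+2}$ --- in effect shows that only the paper's first case can occur, after which your chord shortcut and Corollary \ref{corollary:innervisit} finish the job with a single clean contradiction. What your route buys is an explicit structural statement formalizing the paper's informal Figure \ref{fig:separator} discussion (attachment sets of components of $G$ minus $V(r)$ are pairs at distance two, distinct pockets are non-adjacent, $r_0$, $r_{|r|-1}$, $S$, $T$ cannot touch $C^r_{i,i+2}$ when $1\le i<|r|-3$); all of this does follow from Theorem \ref{pathseparatorprop1} as you sketch, but it is machinery the paper never sets up as a standalone fact, and it is where most of the verification burden of your proof sits. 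The paper's case analysis avoids that setup at the price of a terser argument that leans on the reader to check that the relevant connecting walks really avoid $r$ and $q$ internally.
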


\begin{proof} [Proof of Lemma \ref{trulyuseful}]
    A separator path $r \in X ^ {+} \backslash X_{ST}$ is on a path $p$ which is either the shortest $X_{ST}$-avoiding path or the shortest $X$-avoiding path. If $r$ is not truly useful, then there exists $i \le |r| - 3$ such that in $p$ we can replace $r_{i, i + 2}$ with the shortest $r$-free path between $r_i$ and $r_{i + 2}$ and make $p$ shorter. Due to Lemma \ref{lemma:noinner}, no $r$-free path between $r_i$ and $r_{i + 2}$ can visit an inner vertex of a different normal separator path $r ^ {\prime}$, and thus after such replacement the path still avoids $X_{ST}$ or $X$. Therefore $p$ is not the shortest $X_{ST}$-avoiding path or the shortest $X$-avoiding path, which is a contradiction.
\end{proof}

We have already shown that $X_{ST}$ has the \textit{interval property}: indices of inner vertices of separator paths in $X_{ST}$ on $P$ are intervals on $P$. With Lemma \ref{trulyuseful}, the same interval property also applies to $X ^ {+}$. Figure \ref{fig:main5} shows the intervals for $X$ (which in reality is equal to $X_n$) in the example graph we have shown before. To show this, let $r$ be a separator path in $X ^ {+} \backslash X_{ST}$, and $P_x, P_y$ are inner vertices of $r$ such that $|x - y|$ and no $P_z ( x < z < y)$ is an inner vertex of $r$. Then $P_x$ and $P_y$ are strongly $r$-connected. Since $P$ is the shortest path, the edge between $P_x$ and $P_y$ is longer than $P_{x, y}$ and therefore $r$ is not truly useful, violating Lemma \ref{trulyuseful}.

\begin{figure}
    \includegraphics[width=1\textwidth]{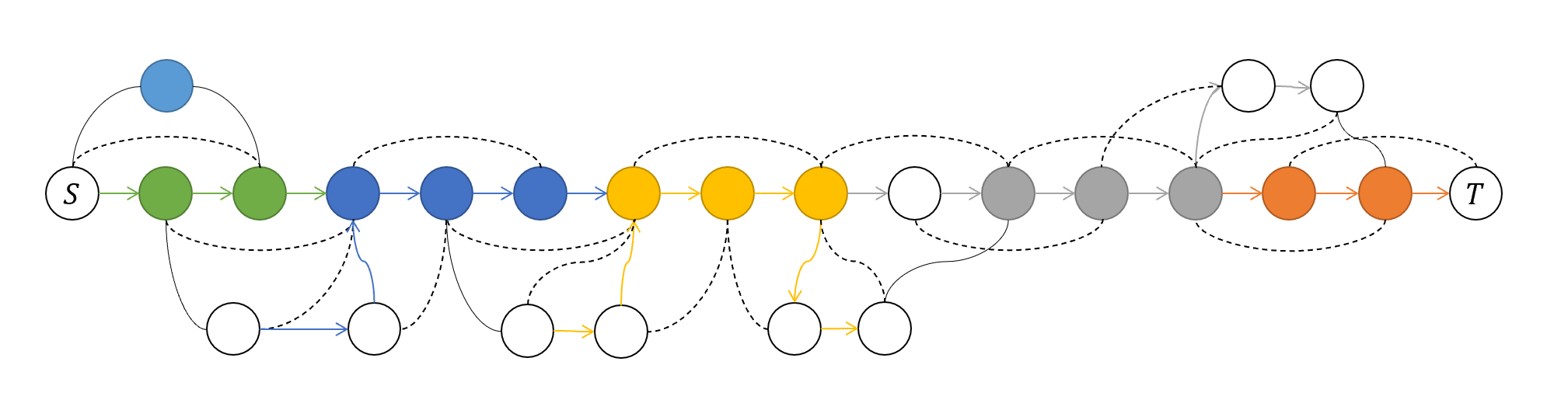}
    \caption{Intervals for $X$}
    \label{fig:main6}
\end{figure}

Recall that $n \in X_n$ is a hypothetical normal separator path $n \in X_n$ and $P_i$ is an inner vertex of $n$. We can let $j$ be the maximum $j < i$ such that one of the following is true:

\begin{enumerate}
    \item $P_j$ is an inner vertex of an $S$-separator path $\alpha$
    \item $P_j \in P_{\textrm{AVOID}}$ and $P_j$ is not an inner vertex of $n$. (Note that $j = 0$ satisfies this condition and therefore $j$ is always well-defined)
    \item $P_j$ is an inner vertex of a separator path $\alpha \in X_{\textrm{EXTRA}}$
\end{enumerate}

We argue that all but the first case are impossible. In the first case, and we let $s = \alpha$ and $a = {\alpha}_{|\alpha| - 3}$. We will show that $a$ is the last inner vertex of $\alpha$ that $P_{\textrm{AVOID}}$ visits.

Similarly we let $k$ be the minimum $k > i$ such that one of the following is true:

\begin{enumerate}
    \item $P_k$ is an inner vertex of a $T$-separator path $\beta$
    \item $P_k \in P_{\textrm{AVOID}}$ and $P_k$ is not an inner vertex of $n$.
    \item $P_k$ is an inner vertex of a separator path $\beta \in X_{\textrm{EXTRA}}$
\end{enumerate}

And we also argue that all but the first case are impossible. In the first case, let $t = \beta$ and $b = {\beta}_{2}$. If for both directions the first case is true, it is straightforward to verify that such $a$ and $b$ satisfies all four conditions stated at the beginning of this proof. 

Due to symmetry, we will only need to examine the direction for $\alpha$. To proceed, we first introduce two lemmas:

\begin{lemma} \label{lemma:stpathseparator}
    For $r \in X ^ {+}$:
    
    If the vertex $u \in r$ on $P$ with the smallest $\textrm{INDEX}(u, P)$ is an endpoint of $r$, and $r_{0, 1}$ is on $P$, then $r$ is an $T$-separator path.
    
    Similarly, if the vertex $u \in r$ on $P$ with the largest $\textrm{INDEX}(u, P)$ is an endpoint of $r$, and $r_{|r| - 2, |r| - 1}$ is on $P$, then $r$ is an $S$-separator path.
\end{lemma}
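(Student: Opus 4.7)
The plan is to establish the first claim by combining a simple case split with an induction along $P$; the second claim then follows by a verbatim symmetric argument. As a preliminary step, by Corollary \ref{corollary:firstvisit} the first vertex of $r$ that $P$ visits must be $r_0$ or $r_2$, and since $r_2$ is not an endpoint of $r$, the hypothesis forces $u = r_0$. Letting $p_0 = \textrm{INDEX}(r_0, P)$, the assumption $r_{0,1} \subset P$ together with $r_0$ having the smallest $P$-index among $r \cap P$ forces $r_1 = P_{p_0 + 1}$.

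I would handle $r \in X_{ST}$ directly. If $r$ is already a $T$-separator path we are done; otherwise $r$ is an $S$-separator path, so $r_{2, |r|-1}$ appears as a contiguous subpath of $P$. The monotone shared-index condition built into $r \in X_{ST}$ forces $\textrm{INDEX}(P, r_2) > p_0 + 1$, and the interval property of inner vertices of $r$ on $P$ (which holds for all of $X^+$ by Lemma \ref{trulyuseful}) forbids any gap between $r_1$ and $r_2$ on $P$. Hence $r_2 = P_{p_0 + 2}$, concatenating with $r_{2, |r|-1}$ places $r_{0, |r|-1}$ contiguously on $P$, and $r$ is also a $T$-separator path.

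The remaining case $r \in X^+ \setminus X_{ST}$ is where the \textbf{main obstacle} sits. I would prove by induction on $i$ that $P_{p_0 + i} = r_i$ for $0 \le i \le |r| - 3$; the base cases $i = 0, 1$ are already in hand. At step $i < |r| - 3$, Theorem \ref{theoadjacent} and Theorem \ref{pathseparatorprop1} restrict the $r$-index of $P_{p_0 + i + 1}$, when it lies on $r$, to $i + 1$ or $i + 2$, while Corollary \ref{corollary:normalvisit} applied at $j = i + 2$ together with the interval property rules out the off-$r$ case. The tricky chord case $P_{p_0 + i + 1} = r_{i + 2}$ I would dispose of by an exchange argument: usefulness of $r$ gives $\|r_{i, i+2}\| < L(r_i, r_{i+2})$, so replacing the chord edge $r_i r_{i+2}$ in $P$ by the two-edge subpath $r_i \to r_{i+1} \to r_{i+2}$ produces a strictly shorter walk from $S$ to $T$. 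The delicate point is that this replacement must remain inside $G'$, which requires $r_{i+1}$ to be non-bad; this is where the restriction to $X_{\textrm{EXTRA}} \cup X_n$ is essential, since in those cases $r$ lies entirely on $P_0$ or on $P_{\textrm{AVOID}}$ respectively, both of which are constructed to avoid bad vertices. The chord case is thereby impossible, the induction closes, and once $r_{0, |r|-3}$ occupies $P_{p_0, p_0 + |r| - 3}$ contiguously, the monotone index condition is immediate (any further shared $r_{|r|-2}$ or $r_{|r|-1}$ on $P$ must have $P$-index beyond $p_0 + |r| - 3$, since smaller indices are either occupied by simplicity or excluded by minimality of $p_0$), completing the proof.
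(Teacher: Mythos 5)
Your proof is correct, and it reaches the same structural conclusion as the paper's proof --- that $P$ must track $r$ vertex by vertex up to $r_{|r|-3}$ --- but the mechanics of the core step are genuinely different. The paper reduces to $r \in X^{+} \backslash X_{ST}$, invokes Lemma \ref{trulyuseful} to get that $r$ is \emph{truly} useful, and then analyzes the smallest $x$ with $r_{x,x+1}$ off $P$: the first return $r_y$ forces $y = x+2$ via Theorem \ref{pathseparatorprop1}, and the truly-useful bound (any $r$-free detour from $r_x$ to $r_{x+2}$ exceeds $\|r_{x,x+2}\|$) combined with $P$ being shortest forces $x = |r|-3$, the no-return case being settled by the analogue of Corollary \ref{corollary:firstvisit}. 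You instead run an induction along $P$ and split the deviation at $r_i$ into a genuine departure from $r$ (killed by the interval property plus Corollary \ref{corollary:normalvisit}, since position $p_0+i+1$ would then have to be an inner vertex of $r$) and the single chord edge $r_i r_{i+2}$ (killed by plain usefulness and an exchange). This buys two things: the argument only needs the interval property downstream of Lemma \ref{trulyuseful} rather than the full truly-useful bound at the deviation point, and you are explicit about the one delicate point both arguments rely on --- that the exchange path through $r_{i+1}$ must stay inside $G^{\prime}$, which you correctly justify by noting that every $r \in X^{+} \backslash X_{ST}$ lies on $P_0$ or $P_{\textrm{AVOID}}$ and hence contains no bad vertex; the paper's appeal to ``$P$ is the shortest path'' silently needs the same fact. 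The one loose end is your closing parenthetical on the monotone index condition: it orders $r_{|r|-2}$ and $r_{|r|-1}$ relative to $r_{0,|r|-3}$ but not relative to each other. The remaining bad ordering ($r_{|r|-1}$ appearing on $P$ before $r_{|r|-2}$) is excluded because it would make $r_{|r|-2}$ the last vertex of $r$ that $P$ visits, contradicting Corollary \ref{corollary:firstvisit}; since the paper's own proof does not check the monotone condition at all, this is at worst a one-line addition rather than a gap.
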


\begin{proof}
    We prove the first half of the lemma.
    
    Suppose $r$ is not a $T$-separator path. Then $r$ is not an $S$-separator path, and therefore from Lemma \ref{trulyuseful} $r$ is truly useful, let $u$ be the vertex of $r$ on $P$ with the smallest $\textrm{INDEX}(u, P)$. Due to Corollary \ref{noreverse}, if $u$ is an endpoint of $r$, $u$ can not be $r_{|r| - 1}$. Therefore $u = r_0$. If $x$ is the smallest index such that $r_{x, x + 1}$ is not on $P$. If for some $y > x$, $r_y$ is back on $P$ again for the first time. If $\textrm{INDEX}(P, r_y) > \textrm{INDEX}(P, r_x)$ , then $r_x$ and $r_y$ are strongly $r$-connected and $y = x + 2$. Since $P$ is the shortest path between $S$ and $T$, by the definition of truly usefulness this is only possible when $x = |r| - 3$ and $y = |r| - 1$. Thus $r$ is a $T$-separator path. If $\textrm{INDEX}(P, r_y) < \textrm{INDEX}(P, r_x)$, $\textrm{INDEX}(P, r_y) < \textrm{INDEX}(P, r_0)$, which violates the fact that $r_0$ has the smallest index. then  If such $y$ does not exist, then $r_x$ and $T$ are strongly $r$-connected. Therefore $x = |r| - 3$ or $x = |r| - 1$, in both cases $r$ is a $T$-separator path. 

    The reasoning for the second half is similar.
\end{proof}

\begin{lemma} \label{lemma:notst}
    If $x \in X$ is not an $S$-separator path and $P_i$ is an inner vertex of $x$. For any $j > i$ such that $P_j$ is an inner vertex of another normal separator path $y \in X ^ {+}$, no simple path from $S$ to $T$ containing $P_{i, j}$ or $P_{j, i}$ can contain $x$.
    
    If $x \in X$ is not a $T$-separator path and $P_i$ is an inner vertex of $x$. Then for any $j < i$ such that $P_j$ is an inner vertex of another normal separator path $y \in X ^ {+}$, no simple path from $S$ to $T$ containing $P_{j, i}$ or $P_{j, i}$ can contain $x$. 
\end{lemma}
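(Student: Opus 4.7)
The plan is to argue by contradiction: I suppose a simple $S$-to-$T$ path $p$ contains $x$ and contains $P_{i,j}$ (traversed in either direction along $P$'s indices). First I invoke Corollary \ref{corollary:firstvisit}: since $p$ contains all of $x$, the first vertex of $x$ visited by $p$ must be both an endpoint of $x$ (so the $x$-vertices fit in $p$ contiguously) and one of $\{x_0, x_2\}$; because $|x| > 3$, this forces the first visit to be $x_0$, so $p$ contains $x$ as a contiguous forward subsequence $x_0, x_1, \ldots, x_{|x|-1}$. Writing $P_i = x_m$, both $x$-edges at $x_m$ lie on $p$, so the two $p$-neighbors of $P_i$ are exactly $x_{m-1}$ and $x_{m+1}$. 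Since $P_{i+1}$ is one of these two neighbors (it is the one on the higher-$P$-index side of the $P_{i,j}$-subsequence inside $p$), the argument forks into Case A ($P_{i+1} = x_{m+1}$) and Case B ($P_{i+1} = x_{m-1}$).

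In Case A the same two-edge argument propagates inductively, yielding $P_{i+l} = x_{m+l}$ for $l = 0, 1, \ldots, |x|-1-m$, so $x_m, x_{m+1}, \ldots, x_{|x|-1}$ lie on $P$ at the consecutive indices $i, i+1, \ldots, i+|x|-1-m$. I would then apply the interval property for $X^+$ established just before the statement: the $P$-indices of the inner vertices of $x$ form a contiguous interval. The inner vertices $x_m, \ldots, x_{|x|-2}$ already occupy $[i, i+|x|-2-m]$, while the endpoint $x_{|x|-1}$ sits at index $i+|x|-1-m$; no other inner $x_k$ can sit on $P$ at an index exceeding $i+|x|-1-m$, for that would force the endpoint's index $i+|x|-1-m$ to also belong to an inner vertex, which is impossible. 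Thus $x_{|x|-1}$ is the largest-$P$-index vertex of $x$ on $P$, an endpoint, and $x_{|x|-2,|x|-1}$ lies on $P$; Lemma \ref{lemma:stpathseparator} then declares $x$ an $S$-separator, contradicting the hypothesis.

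In Case B the symmetric induction gives $P_{i+l} = x_{m-l}$ for $l = 0, \ldots, m$, so $x_0, \ldots, x_m$ sit on $P$ at indices $i+m, i+m-1, \ldots, i$, in the order opposite to their $x$-ordering. The pair $u = x_0, v = x_1$ already witnesses failure of the index-order condition, so $x \notin X_{ST}$ and hence $x \in X_{\textrm{EXTRA}}$; since $x$ lies in $P_0 \subseteq G_0$ and $G_0$ has removed the heads of $T$-separator paths and the tails of $S$-separator paths, $x$ is then neither a $T$- nor an $S$-separator. Because $x_{0,1}$ lies on $P$, Lemma \ref{lemma:stpathseparator} would force $x$ to be a $T$-separator the moment an endpoint of $x$ attained the smallest $P$-index among $x$'s on-$P$ vertices, and symmetrically a largest-$P$-index endpoint would force $x$ to be an $S$-separator. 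I would then combine the interval property with Corollary \ref{traorder} (applied to $P$ as a simple $S$-to-$T$ path against the normal separator $x$) to show that every placement of $x_{m+1}, \ldots, x_{|x|-1}$ on $P$ consistent with those constraints triggers one of these two forbidden configurations, completing the contradiction. The second half of the lemma then follows by swapping the roles of $S$ and $T$.

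The principal obstacle I anticipate lies in Case B: Case A closes cleanly via a single invocation of Lemma \ref{lemma:stpathseparator}, whereas Case B demands a careful case analysis of how $x_{m+1}, \ldots, x_{|x|-1}$ may distribute themselves on $P$, simultaneously respecting the interval property for $x$'s inner vertices, the restrictions that Corollary \ref{traorder} places on $P$ as it steps into and out of inner vertices of $x$, and the double prohibition (neither $S$- nor $T$-separator) stemming from $x \in X_{\textrm{EXTRA}}$. Orchestrating these three constraints so that every remaining configuration re-triggers one of the two implications of Lemma \ref{lemma:stpathseparator} is the delicate step.
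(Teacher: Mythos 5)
Your Case A is sound and is essentially the paper's own argument: simplicity of $p$ forces the $p$-successor of $P_i = x_m$ to be $x_{m+1}$, the propagation places the tail $x_{|x|-2, |x|-1}$ on $P$ with an endpoint of $x$ attaining the largest $P$-index among vertices of $x$ on $P$ (via the interval property), and Lemma \ref{lemma:stpathseparator} then contradicts the hypothesis that $x$ is not an $S$-separator path. (The appeal to Corollary \ref{corollary:firstvisit} at the start is superfluous: since $p$ is simple and contains $x$, the vertices of $x$ occur in $p$ as a single contiguous forward block, so the first visited vertex of $x$ is $x_0$ automatically.)

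The genuine gap is Case B, which you flag but do not close. Your reduction of it --- $x_1,\dots ,x_m$ sit on $P$ in reversed order, $x_0 = P_{i+m}$, the index-order condition fails, hence $x \in X_{\textrm{EXTRA}}$ and $x$ is neither an $S$- nor a $T$-separator path --- is correct, but the concluding claim that ``every placement of $x_{m+1},\dots ,x_{|x|-1}$ on $P$ consistent with those constraints triggers one of the two implications of Lemma \ref{lemma:stpathseparator}'' is exactly the assertion that needs a proof, and none is supplied. Case B is not vacuous on simplicity grounds: it corresponds precisely to $p$ containing $P_{j,i}$ rather than $P_{i,j}$, in which case the $p$-predecessor of $P_i = x_m$ is $P_{i+1}$ and containment of $x$ forces $P_{i+1} = x_{m-1}$; one checks that $P_{j,i}$ with $j \ge i+m$ is compatible with such a configuration, so no immediate double visit of $P_i$ arises. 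The paper dispatches the reverse orientation by first assuming (as it may, by monotonicity of the statement in $j$) that no inner vertex of $x$ lies strictly between $P_i$ and $P_j$ on $P$, so that $P_{i+1}$ is not an inner vertex of $x$ and the forced identity $P_{i+1} = x_{k-1}$ can only hold with $k = 1$, $x_{k-1} = x_0$ --- which is again your Case B. So you have correctly isolated the delicate sub-case, but as written the proposal proves the lemma only for paths containing $P_{i,j}$ and leaves the $P_{j,i}$ orientation unresolved.
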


\begin{proof}
    We prove the first half of the lemma.

    Suppose $x \in X$ is not an $S$-separator path and $P_i$ is an inner vertex of $x$. It suffices to prove this for the case where for no $i < i ^ {\prime} < j$, $P_{i ^ {\prime}}$ is an inner vertex of $x$. If a simple path contains $P_{i, j}$ or $P_{j, i}$ and contain $x$. If $P_i = x_k$, consider the edge between $x_k$ and $x_{k + 1}$, if $x_{k, k + 1} \ne P_{i, i + 1}$, then no simple path between $S$ and $T$ can contain $x_{k, k + 1}$ and $P_{i, i + 1}$ at the same time (note that this is not the same as $x_{k + 1, k}$ and $P_{i, i + 1}$.), and no path can contain $P_{i + 1, i}$ without visiting $x_k$ twice. Therefore, $x_{k + 1}$ must be an endpoint and therefore $k + 1 = |x| - 1$. From the interval property we have shown before, $x_{|x| - 1}$ must be the vertex of $x$ on $P$ with the largest index. From Lemma \ref{lemma:stpathseparator}, $x$ is an $S$-separator path, which is a contradiction. 
    
    The other direction is similar. 
\end{proof}

Since $P_{\textrm{AVOID}}$ contains $n$, it visits $P_i$. Let $i ^ {\prime}$ be the smallest index such that $P_{i ^ {\prime}}$ is also inner vertex of $n$. Due to the interval property $P_{i ^ {\prime}, i}$ consists entirely of inner vertices of $n$ and therefore $i ^ {\prime} > j$. If $P_{\textrm{AVOID}}$ also visits $P_j$, consider replacing the part of $P_{\textrm{AVOID}}$ between $P_j$ and $P_{i ^ {\prime}}$ by $P_{i ^ {\prime}, j}$ or $P_{j, i ^ {\prime}}$ depending on the order of the visits. Here we assume $P_j$ is visited before $P_{i ^ {\prime}}$. The reasoning for the other direction is similar. To force a similar contradiction as we have seen before. We argue that the some edge on $P_{j, i ^ {\prime}} \not\subset P_{\textrm{AVOID}}$ --- therefore the replacement changes the path, and that the new path after the replacement still avoids $X$ unless $P_j$ is an inner vertex of some an $S$-separator path $\alpha$. This will preclude the second case. 

Firstly, we argue that $P_{j, i ^ {\prime}} \not\subset P_{\textrm{AVOID}}$. Suppose $P_{j, i ^ {\prime}} \subset P_{\textrm{AVOID}}$. Suppose $P_j$ is the inner vertex of some normal separator path $\alpha$. If $n_{0, 1}$ is not on $P_{j, i ^ {\prime}}$, then since $P_{i ^ {\prime}}$ is an inner vertex of $n$ and $P_j$ is not an inner vertex of $n$, the path $P_{\textrm{AVOID}}$, which contains $P_{j, i}$, can not contain $n$. If $n_{0, i}$ is on $P_{j, i ^ {\prime}}$, we can apply Lemma \ref{lemma:stpathseparator} and argue that $n \in X_{ST}$, which contradicts $X \cap X_n = \emptyset$. 

Obviously, the new path is still simple, if $P_j$ is the inner vertex of some separator path $\alpha$, from lemma \ref{lemma:notst}, unless $\alpha$ is an $S$-separator path, the new path does not contain $\alpha$. Since none of the other elements of $X$ has an inner vertex on $P_{i, j}$, the new path still avoids $X$. 

The only case left is if $P_j$ is the inner vertex of $\alpha \in X_{\textrm{EXTRA}}$ and $P_{\textrm{AVOID}}$ does not visit $P_j$. Suppose $P_j = {\alpha}_k$. From Corollary \ref{corollary:normalvisit} we have $P_{\textrm{AVOID}}$ must visit ${\alpha}_{k - 1}$ and ${\alpha}_{k + 1}$, unless they are endpoints. If ${\alpha}_{k - 1}$ is not an endpoint and is on $P$, then $P_{j - 1} = {\alpha}_{k - 1}$ since otherwise ${\alpha}_{k - 1}$ and ${\alpha}_k$ become strongly ${\alpha}$-connected. Since ${\alpha}_{k - 1}$ is also on $P_{\textrm{AVOID}}$, one can let $j = k - 1$ and one can apply the same reasoning as when $P_{\textrm{AVOID}}$ visits $P_j$ to argue that $\alpha$ is an $S$-separator path, with minor modification, which contradicts $\alpha \in X_{\textrm{EXTRA}}$. Then if ${\alpha}_{k - 1}$ is an inner vertex, ${\alpha}_{k - 1} \notin P$. By the way we find $j$, if ${\alpha}_{k + 1}$ is an inner vertex, ${\alpha}_{k + 1} \notin P$. 

If both ${\alpha}_{k - 1}$ and ${\alpha}_{k + 1}$ are inner vertices, since $\alpha$ is truly useful according to Lemma \ref{trulyuseful}, we can replace the segment of $P_{\textrm{AVOID}}$ from ${\alpha}_{k - 1}$ to ${\alpha}_{k + 1}$ by ${\alpha}_{k - 1, k + 1}$ and make the path $P_{\textrm{AVOID}}$ shorter. Note that the new path might not be $X$-avoiding since it might contain ${\alpha}$, but it visits $P_j$ ($= {\alpha}_k$), and now we can use the reasoning before using replacement to find a even shorter path that does not contain ${\alpha}$ and is $X$-avoiding, which is a contradiction.

If one of ${\alpha}_{k - 1}$ and ${\alpha}_{k + 1}$ is an endpoint. Suppose $k - 1 = 0$. Then ${\alpha}_{k + 1}$ is an inner vertex since $|\alpha| > 3$, and therefore ${\alpha}_{k + 1} \notin P$. From Corollary \ref{corollary:firstvisit}, since ${\alpha}_{k + 1} \notin P$, ${\alpha}_{k - 1} ({\alpha}_0)$ is the first vertex on $\alpha$ that $P$ visits, from Lemma \ref{lemma:stpathseparator} $\alpha$ is a $T$-separator path, which contradicts $\alpha \in X_{\textrm{EXTRA}}$. Since our reasoning is symmetric, we also know that $k + 1$ is not an endpoint.

To show that ${\alpha}_{|\alpha| - 3}$ is the last inner vertex of $\alpha$ that $P_{\textrm{AVOID}}$ visits, we show that $P_{\textrm{AVOID}}$ does not visit the tail of $\alpha$. Suppose it does. Then if ${\alpha}_{|\alpha| - 1}$ is not an inner vertex of $n$, then $P_j$ should have been ${\alpha}_{|\alpha| - 1}$; if ${\alpha}_{|\alpha| - 1}$ is an inner vertex of $n$, one can see that $P_{\textrm{AVOID}}$ can not contain $n$.

\begin{proof} [Proof of Lemma \ref{lemma:noinner}]
    Let $p$ be an $r$-free path between $r_i$ and $r_{i + 2}$. Suppose $p$ an inner vertex of another normal separator path $r ^ {\prime}$. We note the following fact due to Theorem \ref{pathseparatorprop1}: no other vertex on $r$ is $r$-connected to both $r_i$ and $r_{i + 2}$. Let $p ^ {\prime}$ be any path between $S$ and $T$ that contains $r ^ {\prime}$. Let $u$ be the last vertex of $r$ that $p ^ {\prime}$ visits before $r ^ {\prime}$, and if such vertex does not exist, let $u = S$. Let $v$ be the first vertex of $r$ that $p ^ {\prime}$ visits after $r ^ {\prime}$, and if such vertex does not exist, let $v = T$. If $u = r_i$ and $v = r_{i + 2}$, since $r_i$ and $r_{i + 2}$ are adjacent due to Lemma \ref{adjacent}, $(r ^ {\prime})_0$ and $(r ^ {\prime})_{|r ^ {\prime} - 1}$ are strongly $r ^ {\prime}$-connected. Otherwise without loss of generality suppose $u \notin \{r_i, r_{i + 2}\}$. If $u = S$, both $r_i$ and $r_{i + 2}$ are $r$-connected to $r_0$, and otherwise both $r_i$ and $r_{i + 2}$ are $r$-connected to $u$. Neither is possible.
\end{proof}

\subsection{Correctness of $\textrm{AVOID}(X)$} \label{correctavoid}

In this section, we will show the correctness of $\textrm{AVOID}(X)$ based only on the fact that $X$ is a set of normal separator paths. We will ignore all the extra properties of the set $X$ computed in the main procedure we have seen in the previous section. We believe this makes the sub-procedure more general. We will do this in two steps: first, we show that the shortest $X$-avoiding path $p$ from $S$ to $T$ in $G$ corresponds to a path from $(S, 0)$ to $(T, 0)$ in $G_1$. Second, we show that the shortest path $p ^ {\prime}$ from $(S, 0)$ to $(T, 0)$ in $G_1$ corresponds to an $X$-avoiding path in $G$. Obviously these will show the correctness of the path we compute.

\subsubsection{$p$ to $p ^ {\prime}$} \label{ppprime}

If $p$ is the shortest $X$-avoiding path from $S$ to $T$ in $G$, we can construct $p ^ {\prime}$ in this way: start from $(S, 0)$. We move along $p$ and construct $p ^ {\prime}$ alongside. Imagine the two paths being two pointers moving in sync, one in $G$ and the other in $G_1$. Whenever $p$ visit a new vertex, $p ^ {\prime}$ will try to visit the corresponding low vertex, and if an edge to the low vertex does not exist, we visit the high vertex. We show that there will never be a case when neither edge exists. Note that $p$ is a simple path.

We first argue that, for $r \in X$, suppose we are currently at index $i + 1$. $(p ^ {\prime})_{i + 1}$ is the high vertex of an inner vertex $r_j$ of $r$ but $p_i$ is not an inner vertex of $r$, then either $p_i = r_0$, $p_{i + 1} = r_1$, or $p_{i + 1} = r_{|r| - 3}$ and both $r_{|r| - 2}$ and $r_{|r| - 4}$ are visited before $p_i$.

They are only three ways this can happen. Only two of them are possible.

The first way is that $p ^ {\prime}_{i + 1} = (r_1, 1)$ and $p_i = r_0$. Therefore $p_i = r_0$ and $p_{i + 1} = r_1$.

The second way is that $p ^ {\prime}_{i + 1} = (r_{|r| - 3}, 1)$ and $p_i = u$ where $L(r, u) = r_{|r| - 3}$. In this case, $p$ hasn't visited $r_{|r| - 3}$ but has visited $r_{|r| - 1}$. From Corollary \ref{corollary:normalvisit} we can see that both $r_{|r| - 4}$ and $r_{|r| - 2}$ are visited.
 
The third is that $(p ^ {\prime})_i = (v, 1)$ where $(v \notin r)$ and $(p ^ {\prime})_{i + 1} = (r_j, 1)$ where $0 < j < |r| - 1$. From the way we built $G_1$, this will imply the edge $e$ between $v$ and $r_i$ is on some separator path $r ^ {\prime} \in X$ and Theorem \ref{nosharee} implies that such $r ^ {\prime}$ is unique. From Theorem \ref{nosharev} we know $r_i$ is an endpoint of $r ^ {\prime}$. From the way $G_1$ is built we know $e$ is not the tail of $r ^ {\prime}$. Therefore $v = (r ^ {\prime})_0$ and the edge $e$ is from $(r ^ {\prime})_1$ to $(r ^ {\prime})_0$, which means that $(r ^ {\prime})_2$ is visited by $p$. With $(r ^ {\prime})_1$ and $(r ^ {\prime})_2$ both visited before and $p$ currently at $(r ^ {\prime})_0$, $p$ can not get to $T$ while being simple.

Therefore, if $p ^ {\prime}$ visits $(r_i, 1) (0 < i < |r| - 1)$ and was not on the high vertex of an inner vertex of $r$ before this visit, then $r_{i - 1}$ has been visited by $p$. If after the current visit, $p ^ {\prime}$ immediately visits another high vertex of an inner vertex of $r$, it will be $(r_{i + 1}, 1)$, while $r_i$ has been visited by $p$. With this induction we can further deduce that whenever $p ^ {\prime}$ visits a high vertex $(r_k, 1)$, regardless of what we visited before this visit, all vertices on $r_{0, k - 1}$ have been visited by $p$.

If $p$ visits a vertex and $p ^ {\prime}$ is not able to reach any copy of the vertex, the possibilities are: it is trying to visit (some copy of) $r_{i - 1}$ or $r_{i - 2}$ from $(r_i, 1)$; it is trying to visit $r_i$ from $(u, 0)$ where $u \notin r$, $L(u, r_i) = r_i$ and $0 < i < |r| - 1$ and $i \ne |r| - 3$; it is trying to visit $v$ from $(r_i, 1)$ where $0 < i < |r| - 1$, $v \notin u$ and $R(r, v) = r_i$. it is trying to visit $r_{|r| - 1}$ from $(r_{|r| - 2}, 1)$.

If $p ^ {\prime}$ is trying to visit a copy of $r_{i - 1}$ or $r_{i - 2}$ from $(r_i, 1)$, since $r_{i - 1}$ and $r_{i - 2}$ have already been visited by $p$, $p$ will not be simple.

If $p ^ {\prime}$ is trying to visit $v$ from $(r_i, 1)$, where $0 < i < |r| - 1$, $v \notin u$ and $R(r, v) = r_i$. Since $p ^ {\prime}$ is at $(r_i, 1)$, $r_{i - 1}$ has been visited by $p$. If we visit $v$ the next inner vertex of $r$ that we visit on $p$ will be $r_{i - 2}$, $p$ will not be able to reach $T$ without visiting $r_{i - 1}$ or $r_i$ again, but it can not visit do either of them while being simple, which leads to a contradiction.

If $p ^ {\prime}$ is trying to visit $r_{|r| - 1}$ from $(r_{|r| - 2}, 1)$. Consider the last time it visits a high vertex of an inner vertex of $r$. As we showed there are only two ways this can happen. If it got from $u$ to $(r_{|r| - 3}, 1)$ where $L(r, u) = r_{|r| - 3}$ then $r_{|r| - 1}$ has already been visited by $p$, so it will not visit $r_{|r| - 1}$ again. If it visited $(r_1, 1)$ from $r_0$, then by visiting $r_{|r| - 1}$, $p$ will contain $r$, which violates the fact that $p$ avoids $X$.

If $p ^ {\prime}$ is trying to visit $r_i$ from $(u, 0)$ where $u \notin r$, $L(u, r_i) = r_i$ and $0 < i < |r| - 1$ and $i \ne |r| - 3$. From Theorem \ref{pathseparatorprop1}, suppose the last time $p$ gets off $r$ it was at $r_j$. If $0 < i < |r| - 3$, then $j = i$ or $i + 2$. If $i = |r| - 2$, $j = i$. In both cases we violate Corollary \ref{traorder}. 

\subsubsection{$p ^ {\prime}$ to $p$}

If $p ^ {\prime}$ is the shortest $X$-avoiding path from $(S, 0)$ to $(T, 0)$ in $G_1$. We can see that $p ^ {\prime}$ will not visit any $(u, 0)$ before $(u, 1)$ since all paths starting from $(u, 1)$ can be replaced by an path starting from $(u, 0)$, by replacing some high vertices with the corresponding low vertices. For the same reason, we can assume that $p ^ {\prime}$ will not visit $(u, 1)$ from a vertex when it can visit $(u, 0)$ from that vertex, and similar to what we had in the previous part, we can deduce that whenever $p ^ {\prime}$ visits the high vertex of an inner vertex $r_i$ of $r \in X$ from a vertex that does not correspond to an inner vertex of $r$, either it is from $r_0$ to $r_1$ or $i = |r| - 3$ or $i = |r| - 3$ and we visit $r_i$ from $u \notin r$ where $L(r, u) = r_i$.

Let $p$ be the path from $S$ to $T$ in $G$ corresponding to $p ^ {\prime}$ in $G_1$. $p$ obviously does not contain a bad vertex. It suffices to prove that the set of edges on $p$ does not contain the set of edges on any element of $X$ (Note that we are not sure if $p$ is simple yet). Suppose it does for $r$ in $X$. We first show that $p ^ {\prime}$ can not visit any low vertex $(u, 0)$ of $u \notin r$ from any inner vertex $(r_i, 1)$. Suppose it does. Note that since $(r_i, 0)$ hasn't been visited before. One of the edge associated with $r_i$ on $r$ hasn't be visited by the corresponding path $p$. So $p ^ {\prime}$ will have to visit $(r_i, 0)$ again in order to visit the other edge. By the way we built $G_1$, $L(r, u) = r_i$ and the next vertex on $r$ that $p$ visits must be $r_{i + 2}$, and the corresponding visit on $p ^ {\prime}$ is either $(r_{i + 2}, 0)$ or $(r_{i + 2}, 1)$. Between them we can rule out $(r_{i + 2}, 1)$ since $p ^ {\prime}$ visits $r(r_{i + 2}, 1)$ from a vertex that does not correspond to an inner vertex of $r$, and we can not be in either of the two cases this can happen. Since $p ^ {\prime}$ visited $(r_{i + 2}, 0)$ before visiting $(r_i, 0)$, $p$ will not visit $r_{i + 2}$ again after visiting $r_i$. Thus one can see that after visiting $r_i$, there will be a segment of $p$ corresponding to a path from $r_{i + 1}$ to $r_{i + 3}$ (or to $T$ if $i + 2 = |r| - 1$) such that no vertices on the segment except its endpoints are on $r$. The suffix of $p$ after the start of this segment corresponds to a path from $(r_{i + 1}, 1)$ in $G_1$, and we can replace the part of $p ^ {\prime}$ from $(r_{i + 1}, 0)$ with this suffix from $(r_{i + 1}, 1)$. This replacement preserves $p ^ {\prime}$'s length as well as its correspondence with $p$. Note that to contain all the edges on $r$, $p$ needs to visit at least three edges associated with $r_{i + 1}$. Therefore $p$ visits $r_{i + 1}$ at least twice. Therefore $p ^ {\prime}$ after the length-preserving replacement must visit either $(r_{i + 1}, 1)$ or $(r_{i + 1}, 0)$ again after visiting $(r_{i + 1}, 1)$, and in either case we can do a second replacement and make $p ^ {\prime}$ a shorter path between $S$ and $T$ in $G_1$, which is impossible since $p ^ {\prime}$ is the shortest.

Now that we have shown that $p ^ {\prime}$ can not visit any low vertex $(u, 0)$ of $u \notin r$ from any inner vertex $(r_i, 1)$. Since $p$ contains all the edges on $r$, $p$ must visit the edge between $r_0$ and $r_1$. If $p$ never visits $r_1$ from $r_0$ but has visited $r_0$ from $r_1$, then by the time $p$ visits $r_1$, $p ^ {\prime}$ must have visited $(r_1, 0)$ and $(r_2, 0)$ before. Since $p ^ {\prime}$ can not visit any copy of $r_1$ or $r_2$ again, $p ^ {\prime}$ can not reach $(T, 0)$. Suppose $p$ visits $r_1$ from $r_0$. After visiting the edge $p ^ {\prime}$ will be at $(r_1, 1)$. Since the edge from $(r_{i + 1}, 1)$ to $(r_i, 0)$ for $i > 0$ or $(r_{i - 2}, 0)$ for $i > 1$ does not exist, we will have to take the route $(r_1, 1) \rightarrow (r_2, 1) \rightarrow (r_3, 1) \cdots$. Eventually, we will be forced to go from $(r_{|r| - 2}, 1)$ to $(r_{|r| - 1}, 1)$, which is an edge that does not exist in $G_1$, and we have a contradiction. 

\section{Tie Breaking} \label{tie}

Throughout this paper we have assumed that any two different paths between the same pair of vertices (ordered pair) in the input graph have different lengths. However, in reality, this is not already true. There can be ties where two different paths between some pair of vertices have the same lengths. One can see that if we break ties arbitrarily, the final path we get might be a separating path equal in length to the real answer. In Figure \ref{fig:tie}, numbers beside the edges are their lengths. If both the path $P$ and the path $P_0$ go through the blue part. The final path we compute may go through the green part, which means that the path is separating. Fortunately, there are ways we can break ties.

\begin{figure}
    \includegraphics[width=1\textwidth]{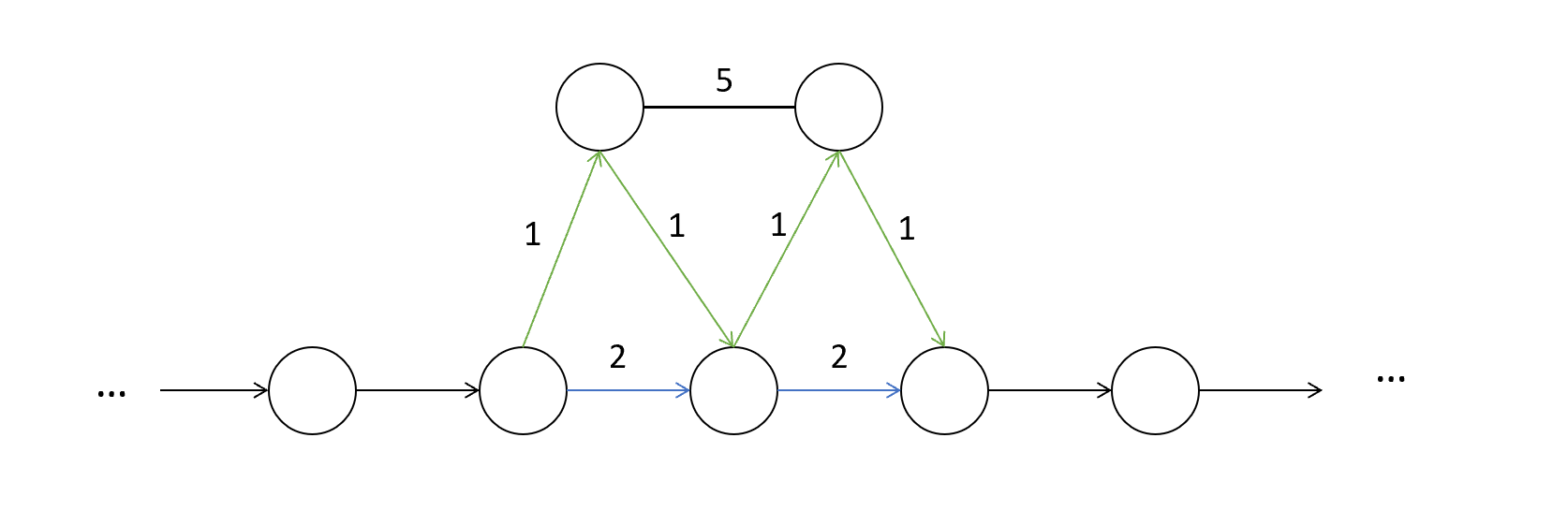}
    \caption{Tie breaking issues}
    \label{fig:tie}
\end{figure}

A tie-breaking scheme is called \textit{consistent} if for any two paths connecting the same pair of vertices with the same length, the scheme always favors one after another. Many implementations of the Dijkstra's Algorithm achieves uses this time breaking scheme since the order in which we iterate through all the edges associated with a particular vertex is usually fixed. If we use Dijkstra's Algorithm in our main algorithm, for most of the common ways one can store an undirected graph on a machine using the RAM model, a consistent tie-breaking scheme is easy to achieve. For example, if the graph is stored by using linked lists, for all the auxiliary graphs involved in our algorithm, it suffices to store the linked lists in a way such that if at a particular vertex the edge $e_0$ before the edge $e_1$, then at the corresponding vertex in the auxiliary graph, all edges corresponding to $e_0$ comes before all edges corresponding to $e_1$.

\section{NP-hardness proof for general graphs}

We will show that the problem of deciding the existence of non-separating st-paths is NP-hard on general graphs.

We will reduce the famous \textit{3-SAT} problem to the problem of deciding the existence of non-separating st-paths. Here we adopt the notational framework in \cite{welzl}. In the 3-SAT problem, we are given a 3-CNF formula, which is a conjunction of $n$ clauses and $m$ variables, each with exactly 3 literals, and we want to decide whether a satisfying assignment exists. The well-known Cook-Levin Theorem \cite{cook, levin} states that this problem is NP-Hard. We show that we can solve 3-SAT by deciding the existence of non-separating st-paths on a graph with $O(n)$ (assuming that $m = O(n)$) vertices and edges, which will then prove Theorem \ref{nphard}.

Let the $i$-th variable be $x_i$. Let $k_i$ be the amount of times its complement $\overline{x_i}$ appears in the formula. Let $\overline{k_i}$ be the amount of times $x_i$ appears in the formula. To build the graph, we first create the vertex $S$. For variable $x_i$ in order, we create $k_i + \overline{k_i} + 1$ nodes: $a_{i, 0 \cdots k_i - 1}, \overline{a_{i, 0 \cdots \overline{k_i} - 1}}, b_i$. Let $t = b_{i - 1}$ ($t = S$ if $i = 0$). Make edges between $t$ and $a_{i, 0}$ (if it exists), between $t$ and $\overline{a_{i, 0}}$ (if it exists), between $a_{i, j}$ and $a_{i, j + 1}$ (for all $j$ such that both vertices exist), between $\overline{a_{i, j}}$ and $\overline{a_{i, j + 1}}$ (for all $j$ such that both vertices exist), between $a_{i, k_i - 1}$ (if it exists) and $b_i$, between $\overline{a_{i, \overline{k_i} - 1}}$ and $b_i$ (if it exists). If $k_i = 0$ or $\overline{k_i} = 0$, make an edge between $t$ and $b_i$. Finally, let $T = b_{m - 1}$.

We use the following shorthand: To make a \textit{fat edge} between $u$ and $v$, we make a dummy node $w$, and then make edges between $u$ and $w$ and between $w$ and $v$. A fat edge can be treated as a single edge that can not be traversed by the path, since any path that goes through $u \rightarrow w \rightarrow v$ makes $w$ disconnected from the rest of the graph. For the $k$-th clause, create a vertex $c_k$. If the clause contains literal $x_i$, and is the $j$-th clause that contains that literal, make a fat edge between $\overline{a_{i, j}}$ and $c_k$. If the clause contains literal $\overline{x_i}$, and is the $j$-th clause that contains that literal, make a fat edge between $a_{i, j}$ and $c_k$.

One can see that the 3-CNF formula is satisfiable if and only if the there exists a non-separating path from $S$ to $T$ in the graph. A solution of the formula corresponds to the path in this way: for the $i$-th variable, if $x_i$ is true, the path goes through $a_{i, *}$. Otherwise, the path goes through $\overline{a_{i, *}}$. (The path goes through the edge between $b_{i - 1}$ (or $S$) and $b_i$ if the corresponding $k_i$ or $\overline{k_i}$ is $0$) In this way, it is easy to see that for a non-satisfying assignment, any clause that is violated will be disconnected from the rest of the graph.

The following is an example for the formula: $(x_0 \vee \overline{x_1} \vee x_2) \wedge (\overline{x_0} \vee x_1 \vee x_2)$:

\begin{figure}
    \includegraphics[width=1\textwidth]{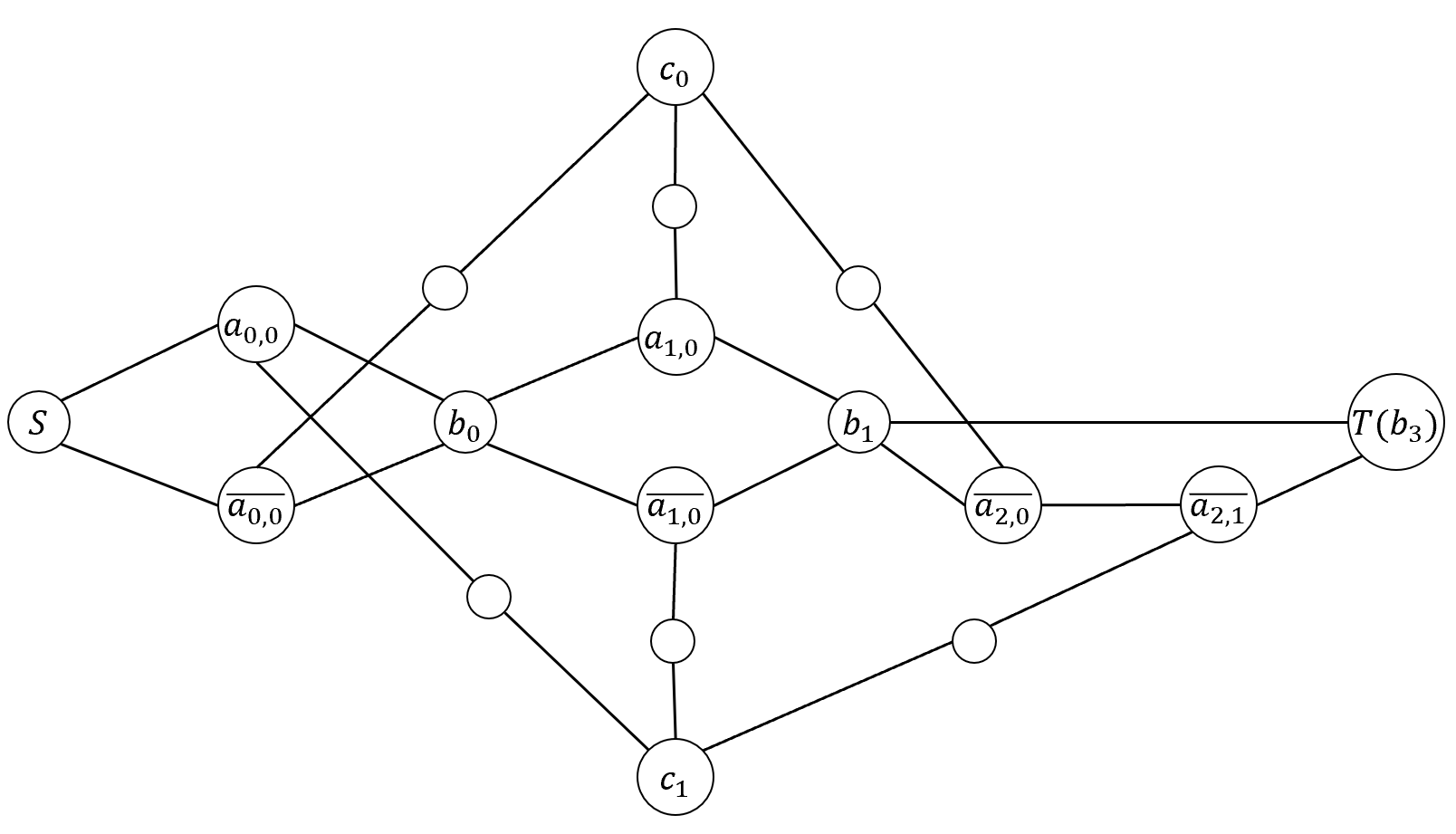}
    \caption{The graph.}
    \label{fig:nph1}
\end{figure}

Figure \ref{fig:nph1} shows the graph we build for the forumula.

\begin{figure}
    \includegraphics[width=1\textwidth]{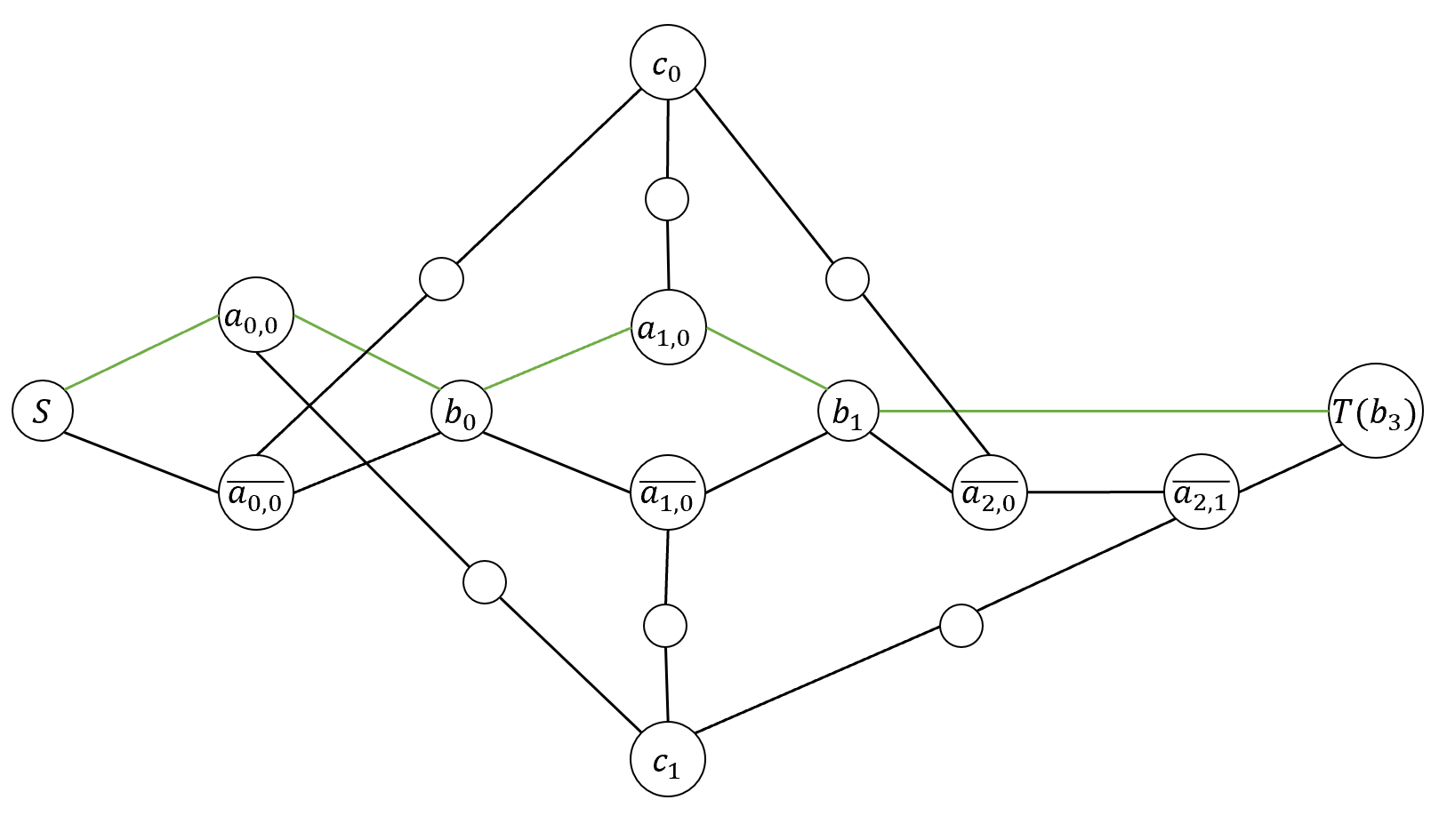}
    \caption{A non-separating path}
    \label{fig:nph2}
\end{figure}

Figure \ref{fig:nph2} shows the non-separating path corresponding to assignment $\{x_0 = \textrm{TRUE}, x_1 = \textrm{TRUE}, x_2 = \textrm{TRUE}\}$, which is a satisfying assignment.

\begin{figure}
    \includegraphics[width=1\textwidth]{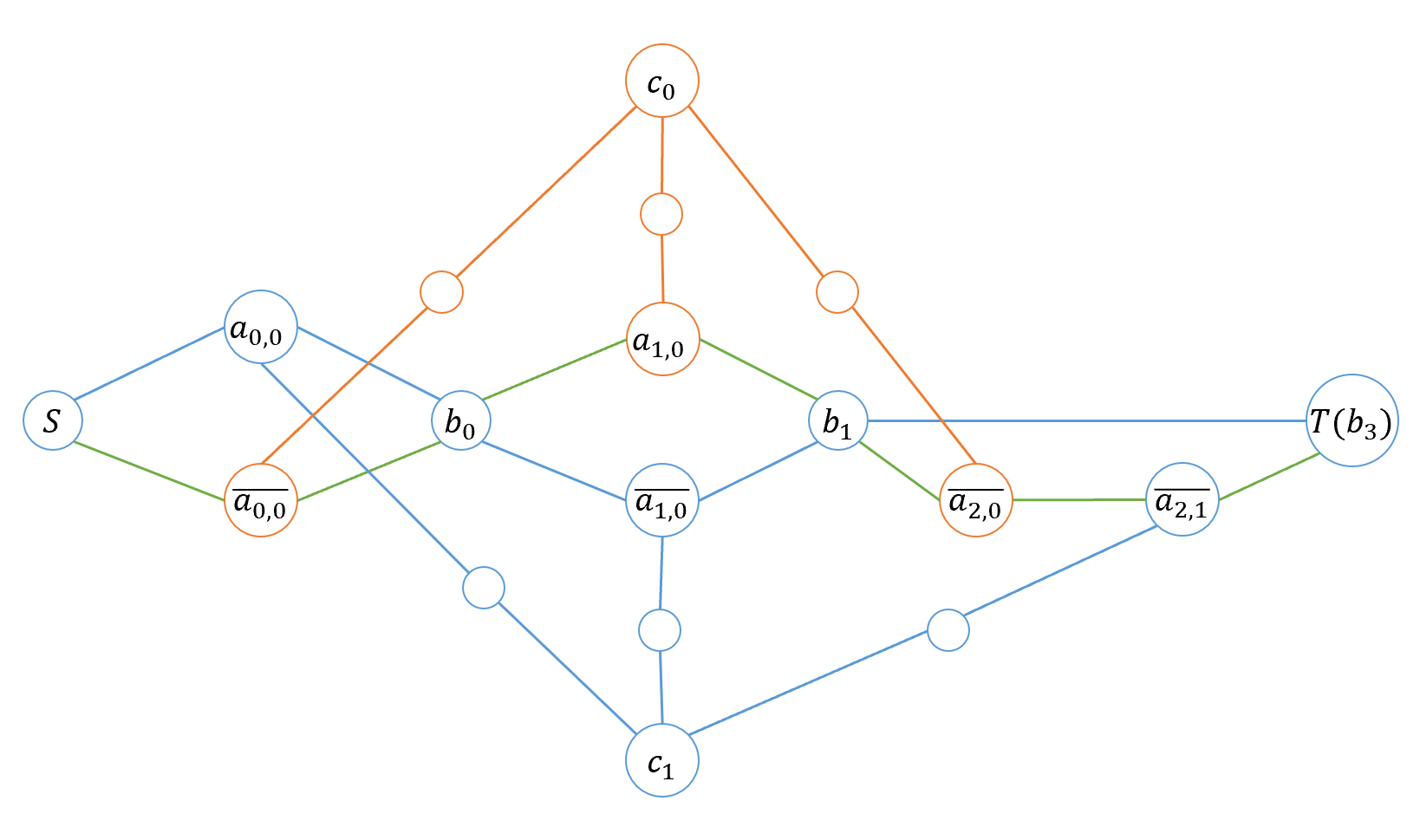}
    \caption{Not a non-separating path}
    \label{fig:nph3}
\end{figure}

Figure \ref{fig:nph3} shows the path corresponding to assignment $\{x_0 = \textrm{FALSE}, x_1 = \textrm{TRUE}, x_2 = \textrm{FALSE}\}$ in green, which is not a satisfying assignment as clause $(x_0 \vee \overline{x_1} \vee x_2) $ is violated. The path is not a non-separating path. After removing all the edges on the path, the connected component for the violated clause $c_0$ in orange is disconnected from the rest of the graph in light blue.

\bibliographystyle{unsrt}
\bibliography{main}




\end{document}